\documentclass[twocolumn, pra, superscriptaddress,notitlepage]{revtex4-2}
\usepackage{graphicx}
\usepackage{amsthm}
\usepackage{dcolumn}
\usepackage{bm}
\usepackage[usenames,dvipsnames]{color}
\usepackage[dvipsnames]{xcolor}
\usepackage{enumerate}
\usepackage[shortlabels]{enumitem}
\usepackage[most]{tcolorbox}
\usepackage{multirow}
\usepackage{gensymb}
\usepackage{amssymb}
\usepackage{amsmath}
\usepackage{xcolor}
\usepackage{algorithm}
\usepackage{algorithmic}
\usepackage[normalem]{ulem}
\usepackage{newtxmath}
\usepackage{amsfonts}
\usepackage[colorlinks, linkcolor=blue,anchorcolor=blue,citecolor=blue,urlcolor=blue]{hyperref}
\usepackage{mathtools}
\usepackage{amssymb}
\usepackage{pifont}
\usepackage{physics}
\usepackage{natbib}
\usepackage{xcolor}
\usepackage{placeins}
\newtheorem{theorem}{Theorem}
\newtheorem{proposition}{Proposition}

\newtheorem{corollary}{Corollary}

\begin{document}

\title{
Distributed estimation of many-body Hamiltonians via punctured surface code
}

\author{Linmu Qiao}\email[]{linmuqiao3@gmail.com}
\affiliation{
   Department of Mechanical and Automation Engineering, The Chinese University of Hong Kong, Shatin, Hong Kong}

\author{Zhichun Ouyang}
\affiliation{Department of Physics, Hong Kong University of Science and Technology, Clear Water Bay, Hong Kong, China}

\author{Sisi Zhou}
\email[]{sisi.zhou26@gmail.com}
\affiliation{
Perimeter Institute for Theoretical Physics, Waterloo, Ontario N2L 2Y5, Canada
}
\affiliation{
Department of Physics and Astronomy, Department of Applied Mathematics, and Institute for Quantum Computing, University of Waterloo, Ontario N2L 3G1, Canada
}

\begin{abstract}
We study how a punctured surface code can turn many local $Z$-type couplings into one protected logical signal for distributed quantum metrology, where the goal is to estimate a weighted sum of the coupling strengths. 
We consider an ordinary planar patch with two $X$-cut holes and develop a distributed sensing protocol in which all $Z$-type couplings correspond to the same nontrivial logical $\bar{Z}$ of the punctured surface code. 
When the couplings have prescribed placements on the lattice, we show that the relevant global condition is equivalent to the existence of a simple closed dual loop, called a \textit{witness}, that has an odd number of intersections with every signal path. Together with a local clean-opening condition, this witness criterion gives a concrete punctured-code construction in which all signal paths realize the same nontrivial logical $\bar Z$.
When only the abstract support structure is specified, disjoint supports are straightforward to realize, while overlapping supports are constrained by a stabilizer-difference condition, which we illustrate for three-body interactions.
Overall, our results provide a novel, noise-robust distributed sensing protocol for many-body interactions, with corresponding topological design criteria. 
\end{abstract}

\maketitle

\section{Introduction}

Quantum sensing can gain from entanglement, especially when the goal is not to estimate every local parameter separately, but to estimate one chosen linear functional of them. In distributed quantum sensing~\cite{Zhang:2020skj,Eldredge:2018wna,Proctor:2017bih,Proctor:2017thq,Rubio:2020qyh,zhuang2018distributed,ge2018distributed,bringewatt2021protocols,qian2021optimal,abbasgholinejad2026,hu2025optimalschemedistributedquantum}, a basic task is to estimate
\begin{equation}
    q=\sum_{\alpha} s_{\alpha}\lambda_{\alpha},
\label{eq:target_weight}
\end{equation}
where the weights $s_{\alpha}$ are known and the couplings $\lambda_{\alpha}$ are unknown. This task includes weighted averages and other spatial modes of a field. In ideal models, entangled probes can reach the Heisenberg limit for such a target quantity~\cite{Eldredge:2018wna,Proctor:2017bih,Proctor:2017thq,Rubio:2020qyh}. However, the best known protocols usually rely on long-range coherence across the whole sensor network, and this makes them sensitive to noise and imperfect control. This leads to a natural question: can the same sensing task be realized in a form that is protected by quantum error correction~\cite{NielsenChuang2010,PhysRevA.55.900,Zhou2018HeisenbergQEC,PhysRevLett.112.150802,PhysRevLett.116.230502,PRXQuantum.2.010343,PhysRevLett.122.040502,Layden2018SpatialNoiseQEC,PhysRevLett.133.190801,PhysRevLett.112.150801,Gottesman:1997zz,AntuZhou2025StabilizerHL,Zhang2026Distributed}?

In this work, we study this question for a family of local many-body $Z$-type couplings. The sensing Hamiltonian has the form
\begin{equation}
H=\frac{1}{2}\sum_{\alpha}\lambda_{\alpha}\prod_{j\in S_{\alpha}} Z_j,
\label{eq:intro_hamiltonian}
\end{equation}
where $\lambda_{\alpha}$ is an unknown coupling strength, $S_{\alpha}$ is a known local support, and $Z_j$ is the single-qubit Pauli $Z$ on qubit $j$. Thus each term is a product of $Z$ operators over a known support. Our target is still the weighted sum in Eq.~\eqref{eq:target_weight}. We do not try to estimate all couplings separately. Instead, we ask whether these different physical signal terms can be made to act as one protected logical generator, so that the target quantity $q$ is written onto a single logical qubit.

Our setting is a planar surface code~\cite{KITAEV20032,Dennis2002TopologicalMemory,PhysRevA.86.032324,RAUSSENDORF20062242,Horsman2012LatticeSurgery,Gorecki:2020orm}. Data qubits live on edges of a lattice. $X$-type stabilizers live on vertices, and $Z$-type stabilizers live on faces. Let $E$ be the set of data-qubit edges. In the main text, each signal support $S_\alpha$ corresponds to a connected simple primal path
$c_\alpha \subseteq E $.
That is, $c_\alpha$ runs along adjacent primal edges, has no self-intersection, and forms one connected piece. In the algebraic topology language used in the appendices, $c_\alpha$ is the corresponding $1$-chain in $C_1(G;\mathbb F_2)$. We then write
\begin{equation}
Z(c_{\alpha}) := \prod_{e\in c_{\alpha}} Z_e,
\end{equation}
where $Z_e$ is the single-qubit Pauli $Z$ on the qubit on edge $e$. In this way, $c_{\alpha}$ labels the support, while $Z(c_{\alpha})$ is the corresponding multi-qubit $Z$-string. The Hamiltonian in Eq.~\eqref{eq:intro_hamiltonian} can therefore be written as
\begin{equation}
H=\frac{1}{2}\sum_{\alpha}\lambda_{\alpha} Z(c_{\alpha}).
\end{equation}
Our distributed estimation strategy is to find a punctured surface code such that $Z(c_{\alpha})$ are logical operators and can be sensed jointly.

We now modify the planar patch by puncturing two rough holes, denoted by $R_0$ and $R_1$.
In the usual surface-code language, these are two $X$-cut holes or defects.
Equivalently, they are created by turning off star checks in two simply connected regions and correspondingly modifying the affected $Z$-type plaquette checks.
Below, we also use witness-guided regions $\Omega_0$ and $\Omega_1$, which contain $R_0$ and $R_1$, respectively, and help identify the rough holes.
After these two rough holes are introduced, a primal $Z$ path is allowed to start on $R_0$ and end on $R_1$.
Here and below, this means that the endpoint lies on the rough boundary of $R_i$, the part of $R_i$ adjacent to the active code region.
The corresponding operator $Z(c)$ is then a candidate logical $\bar Z$ on the code space.

Unless stated otherwise, we choose the punctured patch to have code distance
at least three, so that arbitrary single-qubit Pauli errors are correctable.
For the $Z$ sector relevant to the sensing Hamiltonian, the corresponding
distance is the minimum Pauli weight of a nontrivial primal path connecting
$R_0$ and $R_1$.
Thus no Pauli-weight-one or Pauli-weight-two path may realize the logical
$\bar Z$ in this setting.
This restriction applies to nontrivial logical operators, not to $Z$-type
stabilizers, which may have weight two near a truncated rough-boundary corner.
A layout with a Pauli-weight-two logical path falls outside the
arbitrary-single-qubit-error-correcting setting considered here, although it
may still be useful in a located-error model, such as detected qubit loss.

To represent the same logical operator $\bar Z$, the signal supports $\{c_\alpha\}$ need not look the same in real space.
They may have different shapes, different lengths, and different locations on the patch.
What matters is not their local form, but their action on the code space.
This leads to the central question of the paper:
when do all operators $\{Z(c_\alpha)\}$ act as the same nontrivial logical $\bar Z$ on the code space?

We therefore distinguish two complementary design problems.
In the first, the signal paths $c_\alpha$ are already embedded on the lattice, and the task is to choose compatible rough holes.
In the second, only the abstract supports $S_\alpha$ and their overlap
relations are specified, and the task is to find a lattice realization of this abstract structure on the punctured surface-code lattice.
The two problems have the same logical target but different design variables, and we treat them in this order.

Once all relevant signal operators act as the same logical $\bar Z$, the sensing task becomes a logical Ramsey experiment. We prepare a logical superposition state. With suitable control that sets the target weights $s_\alpha$, the unknown couplings build up a single logical phase proportional to the weighted sum $q$. This phase can be accumulated in two closely related ways. In a parallel picture, several compatible signal terms contribute to the logical phase during the same sensing window, with their target weights $s_\alpha$ set by the local control. In a sequential picture, one lets different signal terms contribute one after another, with their target weights set by the chosen time schedule. These two pictures lead to the same target quantity, but they highlight different ways to implement the protocol. Because the phase is stored at the logical level, the scheme also inherits the usual protection of the surface code against errors that remain correctable during the sensing process. The paper then develops this idea in stages.
We start from a simple motivating planar example and the logical picture behind it.
Next, we derive the topological criterion and a constructive rough-hole realization for prescribed, lattice-embedded signal paths, including the local clean-opening condition.
We then turn to the complementary problem of realizing an abstract support structure on the lattice, with overlapping supports governed by a stabilizer-difference condition.
We conclude by discussing the scope of the protocol and possible extensions.

\section{A motivating example}

Figure~\ref{fig:thin_patch_example} fixes the geometric language used in the rest of the paper.
We first explain panel~(a), which shows representative untruncated local stabilizers away from the boundaries and punctures.
The data qubits live on the edges of a planar lattice.
Each face supports a plaquette stabilizer, which is the product of $Z$ on the four edges around that face.
\begin{equation} B_p^Z := \prod_{e \in \partial p} Z_e  \end{equation}
Each vertex supports a star stabilizer, which is the product of $X$ on the four edges that meet at that vertex.
\begin{equation} A_v^X := \prod_{e \ni v} X_e  \end{equation}
The code space $\mathcal C$ is the simultaneous $+1$ eigenspace of all stabilizers that are kept active.
We write $P_{\mathcal C}$ for the projector onto this code space.

We now explain panel (b). The shaded blue region is the rough hole $R_0$, and the shaded orange region is the rough hole $R_1$. This is the geometric picture used throughout the main text.
In the stabilizer description, each rough hole is associated with a connected region in which the $X$-type star checks are turned off.
Forming the full puncture also requires corresponding changes to the $Z$-type plaquette checks.
Inside the punctured region, both the $X$-type star checks and the $Z$-type plaquette checks are removed from the stabilizer group.
Along the boundary, the affected $Z$-type plaquette checks are replaced by truncated plaquette checks supported only on the data qubits that remain outside the hole.
Together, these stabilizer modifications remove the punctured region from the active code and define the resulting rough boundary. In the modified code, a primal $Z$ path can terminate on this boundary without creating a stabilizer violation at its endpoint.

The two rough holes change which $Z$-strings can act as logical operators.
After they are introduced, a primal $Z$ path is allowed to start on $R_0$ and end on $R_1$.
Such a path is a candidate support for a logical $\bar Z$ operator.
This is why the three colored primal paths $c_1$, $c_\star$, and $c_2$ are drawn as curves connecting the two rough holes.

\begin{figure}[tb]
    \centering
    \includegraphics[width=\linewidth]{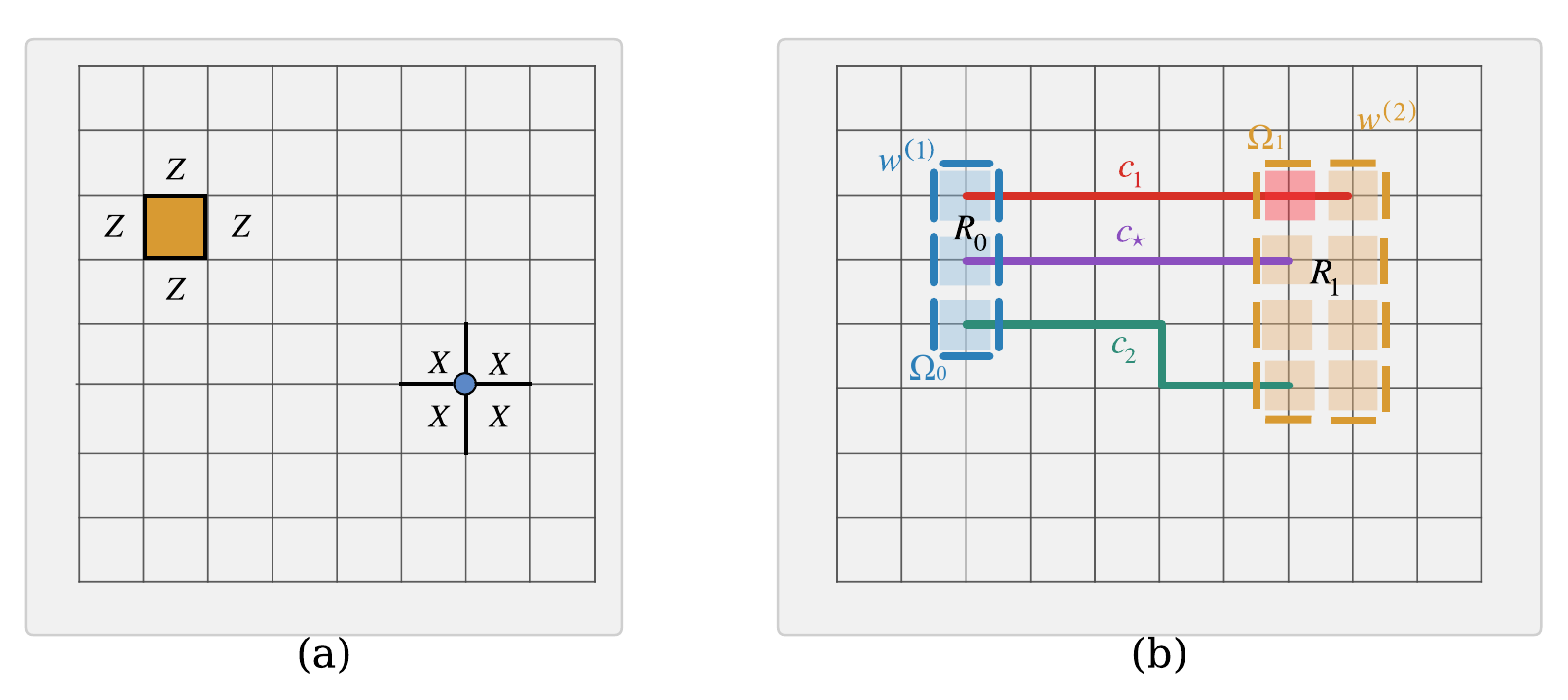}
    \caption{
    A planar-code dictionary and the simple example used throughout this work.
    Panel (a) shows representative untruncated local stabilizers.
    Data qubits live on edges.
    The displayed plaquette stabilizer is the product of $Z$ operators on the four edges surrounding a face, while the displayed star stabilizer is the product of $X$ operators on the four edges incident on a vertex.
    Panel (b) shows a thin planar patch with two rough holes $R_0$ and $R_1$.
    The dual loops $w^{(1)}$ and $w^{(2)}$ play the role of witnesses.
    They enclose the witness-guided regions $\Omega_0$ and $\Omega_1$.
    In this example, $\Omega_0=R_0$, while $R_1\subsetneq\Omega_1$.
    The primal paths $c_1$ and $c_2$ are the signal supports of the physical operators $Z(c_1)$ and $Z(c_2)$, while $c_\star$ is a reference path in their common logical class.
    The $X$-products supported on the witnesses have the geometry of logical $\bar X$ strings.
    In this example, each compatible signal path crosses the relevant witness with odd parity.
    This is the geometric pattern behind the common logical $\bar Z$ action established later.
    }
    \label{fig:thin_patch_example}
\end{figure}

At this point, it is useful to separate \emph{support} from \emph{operator}.
In the lattice-embedded setting considered here, a signal support $c$ is a simple primal path of data-qubit edges. It is a geometric object first. The corresponding operator is
\begin{equation}
Z(c) := \prod_{e \in c} Z_e .
\end{equation}
Thus the red, purple, and green curves in panel~(b) are first signal supports, and then, after we attach $Z$ operators to their edges, they become the physical operators $Z(c_1)$, $Z(c_\star)$, and $Z(c_2)$. The simple-path assumption ensures compatibility with the active star checks.
At each non-terminal vertex of a simple primal path, $Z(c)$ overlaps the corresponding star check $A_v^X$ on two incident edges and therefore commutes with it.
The only vertices with odd incidence are the two endpoints, which must lie on rough boundaries where the corresponding star checks are turned off.

Next, we introduce the concept of the witness, which is closely related to the rough holes that define the punctured surface code. 
First, a witness is a geometric object.
It is drawn as a simple closed loop on the \emph{dual} lattice, with no self-intersections.
Here ``dual'' means that the loop runs along dual edges, which cross primal edges rather than lying on them.
Second, the same dual loop defines an $X$-type string operator.
Let $E(w)\subseteq E$ be the set of primal edges crossed by $w$.
We write
\begin{equation}
X(w) := \prod_{e \in E(w)} X_e .
\end{equation}
In other words, the product is over all primal edges crossed by $w$.
Thus a witness is not only a geometric test curve.
This edge set $E(w)$ is itself the support of an $X$-type string operator.
For this reason, it is drawn in the same style as a logical $\bar X$ string.

This dual point of view is exactly what we need later.
Geometrically, the witness tells us how a primal $Z$ path crosses the patch.
The same witness also has the geometry of a logical $\bar X$ string.
In the appendices, the crossing data of the same simple witness are encoded algebraically as a $1$-cochain.
But at the level of this example, it is enough to think of it as a dual loop that crosses primal edges and supports an $X$ product.

A witness also selects a witness-guided region $\Omega(w)$ by choosing one side of the loop.
When such a witness is used to guide the $i$-th rough hole, as in Fig.~\ref{fig:thin_patch_example}(b), we write $ \Omega_{i} $ for the resulting region $ \Omega(w) $. 
This region serves only as a geometric guide.
The actual rough hole is later chosen as a clean connected subregion $R\subseteq\Omega(w)$.
It is chosen so that the assigned signal endpoints lie on the rough boundary, while the non-terminal parts of all prescribed signal paths remain in the active code region.

We can now read panel (b) in a simple way.
The three primal paths $c_1$, $c_\star$, and $c_2$ have different microscopic shapes.
The red path $c_1$ takes an upper route.
The purple path $c_\star$ is the direct reference path.
The green path $c_2$ takes a lower route before entering the right rough hole.
So the figure already shows that the logical action does not depend on one special path in real space.

The two dual loops $w^{(1)}$ and $w^{(2)}$ are the witnesses for the left and right rough holes.
Each compatible primal path crosses each relevant witness an odd number of times.
This odd crossing is the key visual fact.
It means that, although the three paths look different microscopically, they play the same topological role in the patch.
After the rough holes are fixed, they act as the same logical $\bar Z$ on the code space, up to stabilizers.

The witness itself is not unique.
It may be deformed on the dual lattice as long as its parity of crossings with the signal paths does not change.
This is the topological content that the later theorem makes precise.

Whenever a signal support has been embedded on the lattice, we represent it by a simple primal path $c_\alpha$.
This geometric model is sufficient for the sensing protocol and for Theorem~\ref{thm:global_odd_witness}.
In the appendices, the same embedded path is written as a $1$-chain whose support is that path.
At this stage, however, the figure is meant only to build intuition.
It tells us what kind of logical compression we want.
In the next section, we explain the distributed sensing protocol that uses such a compatible family of logical $\bar Z$ strings.

\section{Distributed sensing protocols}

Before stating the topological criterion, we explain the sensing protocol that it supports.
This is the surface-code version of weighted-sum sensing in a quantum sensor network.

We now briefly explain why this surface-code construction corresponds to the same sensing task as the standard optimal protocol for a chosen linear functional in distributed quantum sensing~\cite{Eldredge:2018wna}. The essential difference is the encoding. In the usual network picture, the phase is stored in a nonlocal probe state, such as a GHZ state. Here, the same phase is stored in a single logical qubit of the surface code. Moreover, as long as syndrome extraction and decoding are accurate and fast enough during sensing, correctable errors do not erase the accumulated logical phase. This includes both low-weight incoherent Pauli errors that are correctable by the punctured surface code and additional coherent error terms in the Hamiltonian that are not nontrivial logical operators.

We first fix the normalization of the target coefficients.
As in the standard network setting, we write the target linear functional as
\begin{equation}
q=\sum_{\alpha} s_\alpha \lambda_\alpha,
\qquad
|s_\alpha|\le 1 \ \text{for all }\alpha,
\qquad
\max_\alpha |s_\alpha|=1.
\end{equation}
This normalization simply fixes the scale.
If the desired coefficients are $\tilde s_\alpha$, we define
\begin{equation}
s_*:=\max_\alpha |\tilde s_\alpha|,
\qquad
s_\alpha:=\frac{\tilde s_\alpha}{s_*},
\qquad
q:=\sum_\alpha s_\alpha \lambda_\alpha.
\end{equation}
Then the original target is recovered by a known rescaling,
\begin{equation}
\tilde q=\sum_\alpha \tilde s_\alpha \lambda_\alpha = s_*\, q.
\end{equation}
With this convention, the effective logical time $t$ is the
time entering the logical Ramsey phase. It is compared against the
normalized functional $q$, and the final precision statement takes the same
form as in the sensor-network Heisenberg limit. When a concrete implementation is specified, we distinguish this effective
logical time from the actual dwell time of the schedule.
We denote the latter by $T_{\rm par}$ for parallel loading and by
$T_{\rm seq}$ for sequential hole motion. These times only count intervals
during which signal phase is accumulated. They do not include additional
overheads from control operations, hole motion, or stabilizer updates.

We assume that, for a fixed configuration, the planar patch carries two rough holes $R_0$ and $R_1$, so that one logical qubit is encoded.
We write its logical Pauli operators as $\bar X$ and $\bar Z$, and we denote the code space by $\mathcal C$, with projector $P_{\mathcal C}$.
For sequential hole motion, the same notation is used within each dwell interval, while the intervening stabilizer updates transport the encoded logical qubit between configurations.

Within each such configuration, we also assume that we are given a family of legal $Z$-type paths $\{c_\alpha\}$ such that each path $c_\alpha$ is a simple primal path from $R_0$ to $R_1$, and each operator $Z(c_\alpha)$ acts as the same logical operator $\bar Z$ on $\mathcal C$.
The signal paths may have different shapes, but they all write phase onto the same logical degree of freedom.
Theorem~\ref{thm:global_odd_witness} and Proposition~\ref{prop:clean_puncture_realization} in the next section tell us when such a compatible family exists.
Here we first explain what the protocol does once this compatibility is available.

\begin{proposition}[Surface-code weighted-sum sensing]
\label{pro:Surface-code_weighted-sum_sensing}
Assume that every active physical signal operator $Z(c_\alpha)$ acts as the same logical operator $\bar Z$ on the code space $\mathcal C$.
Then the normalized target
\begin{equation}
q=\sum_\alpha s_\alpha \lambda_\alpha
\end{equation}
can be estimated by a logical Ramsey experiment.
Here $t$ is the effective logical time, while
$T$ is the dwell time used by the chosen schedule.
On the code space, this schedule implements
\begin{equation}
U_{\rm log}(T)
=
\exp\!\left(
-\frac{i t}{2}
\sum_\alpha s_\alpha\lambda_\alpha \bar Z
\right).
\label{eq:effective_evolution}
\end{equation}
The coefficients $s_\alpha$ are set by control.
They can be programmed either by parallel loading with time-averaged echo control,
or by sequential hole motion with chosen dwell times.
\end{proposition}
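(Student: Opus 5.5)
The plan is to reduce everything to the code-space restriction of $H$ and then handle the two schedules separately.

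\emph{Step 1: projection onto the code space.} By hypothesis, for every active $\alpha$ we have $P_{\mathcal C} Z(c_\alpha) P_{\mathcal C} = \bar Z P_{\mathcal C}$. This holds because $Z(c_\alpha)$ differs from a reference representative $Z(c_\star)$ by a product of active $Z$-type plaquette stabilizers, which act as the identity on $\mathcal C$. Each $Z(c_\alpha)$ commutes with all active stabilizers, since it is a simple primal path ending on rough boundaries, as argued in Sec.~II. So $Z(c_\alpha)$ preserves $\mathcal C$, and any Hamiltonian built from these strings, with time-dependent real coefficients, leaves $\mathcal C$ invariant. On $\mathcal C$ we can therefore replace each $Z(c_\alpha)$ by $\bar Z$ exactly, not just to first order. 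All restricted Hamiltonians then commute with each other, so time ordering is trivial and the propagator is the exponential of the integrated generator.

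\emph{Step 2a: parallel loading.} All terms act during the window $[0,T_{\rm par}]$. Local control applies echo pulses on the support of term $\alpha$, for instance transversal logical-$\bar X$-compatible flips or sign modulations $f_\alpha(\tau)\in\{\pm1\}$. These conjugate $Z(c_\alpha)$ to $f_\alpha(\tau) Z(c_\alpha)$. Choose $f_\alpha$ with time average $\frac{1}{T_{\rm par}}\int_0^{T_{\rm par}} f_\alpha = s_\alpha$, which is possible because $|s_\alpha|\le1$. The restricted propagator is then
\[
\exp\!\Big(-\tfrac{i}{2}\sum_\alpha \lambda_\alpha \textstyle\int_0^{T_{\rm par}} f_\alpha\,\bar Z\Big)=\exp\!\Big(-\tfrac{iT_{\rm par}}{2}\, q\,\bar Z\Big).
\]
This is Eq.~\eqref{eq:effective_evolution} with $t=T_{\rm par}$.

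\emph{Step 2b: sequential hole motion.} Within dwell interval $k$, only the term $\alpha_k$ is a logical string for the current configuration. The other terms are not logical strings there and, I expect, are either absent or act as correctable or stabilizer-equivalent pieces that are removed. The dwell contributes $\exp(-\tfrac{i}{2}\lambda_{\alpha_k}\tau_k\bar Z)$. The stabilizer updates between configurations are code deformations that map $\bar Z$ to $\bar Z$, so they transport the logical qubit without rotating it. Choose $\tau_\alpha = t|s_\alpha|$ with $t$ the longest dwell. Negative signs are implemented by a logical $\bar X$ conjugation around that dwell, using $\bar X\bar Z\bar X=-\bar Z$. The product of commuting factors gives Eq.~\eqref{eq:effective_evolution}, with $T_{\rm seq}=\sum_\alpha\tau_\alpha$.

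\emph{Step 3: Ramsey readout.} Prepare the logical state $|\bar +\rangle$ and apply $U_{\rm log}$. The logical phase is $tq$. Measure in the $\bar X$ basis, which is done by transversal $X$ measurement and decoding. This gives $p(\pm)=\tfrac12(1\pm\cos tq)$, from which $q$ is estimated, with Fisher information $t^2$ per shot.

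\emph{Main obstacle.} The hard step is justifying that the controls in Step 2 act on $\mathcal C$ as claimed. In the parallel case, a local echo that flips the sign of only $Z(c_\alpha)$ must commute with the stabilizers and with the other strings, or must at least preserve $\mathcal C$. I would first try realizing $f_\alpha$ by modulating each coupling directly, as a sign-controllable physical drive. Failing that, I would use $X$ pulses on a dual set crossing $c_\alpha$ once but crossing no other path, chosen to be stabilizer-equivalent on $\mathcal C$ up to a known correction. In the sequential case, the corresponding requirement is that the code deformations preserve the logical $\bar Z$ class, which follows from Theorem~\ref{thm:global_odd_witness}.
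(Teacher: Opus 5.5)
Your overall reduction matches the paper's. You restrict to $\mathcal C$, replace each active $Z(c_\alpha)$ by $\bar Z$, use that the restricted generators commute so the phases add as $\frac12\sum_\alpha\lambda_\alpha\int_0^T\eta_\alpha\,dt'$, program $\int_0^T\eta_\alpha\,dt'=ts_\alpha$ by duty cycles or dwell times, and read out with a logical $\bar X$ measurement. The gaps are in the two control steps.

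In parallel loading, the condition in your ``main obstacle'' cannot be met by any \emph{selective} flip. You ask that the echo preserve $\mathcal C$, or be stabilizer-equivalent there. But if a unitary $U$ commutes with $P_{\mathcal C}$, then $P_{\mathcal C}U^\dagger Z(c_\beta)UP_{\mathcal C}=U^\dagger P_{\mathcal C}\bar Z P_{\mathcal C}U$ is the same for every $\beta$. So $U$ cannot send $Z(c_\alpha)\mapsto -Z(c_\alpha)$ while fixing the other strings. In particular, a Pauli that commutes with all active stabilizers and anticommutes with $Z(c_\alpha)$ also anticommutes with every $Z(c_\beta)=Z(c_\alpha)S$, $S\in\mathcal S_Z^{\mathrm{act}}$. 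For the same reason, your ``logical-$\bar X$-compatible flips'' reverse all terms at once and cannot set individual $s_\alpha$.

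The paper instead gives up code-space preservation during the echo. It assumes each $c_\alpha$ has a path-specific edge $e$ not shared with other paths, and applies $X_e$. This anticommutes with $Z(c_\alpha)$ and commutes with all other signal strings. The resulting sign flips of the plaquettes containing $e$ are tracked in the Pauli frame. With that, your toggling-frame computation in Step~2a goes through in the frame-updated code space, and no dual-set construction is needed.

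In the sequential case, invariance of the logical qubit under hole motion does not follow from Theorem~\ref{thm:global_odd_witness}. That theorem concerns a fixed family of paths on a single patch and says nothing about measurement-based code deformation. The needed input is the stabilizer-update argument of Appendix~\ref{app:stabilizer-updates-logical-invariance}. Measuring a Pauli $Q$ that anticommutes with exactly one adapted generator $\tilde P_1$ gives each outcome $\mu=\pm1$ with probability $1/2$ and replaces $\tilde P_1$ by $\mu Q$. After every logical representative that anticommutes with $Q$ is multiplied by $\tilde P_1$, the Bloch vector is unchanged.

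This transport also produces a known, outcome-dependent sign that your Step~2b drops. The bare signal string of the next configuration may act as $-\bar Z$ in the transported frame; in the paper's one-cell move, $Z(c_{\rm final})=\mu\bar Z^{(2)}$. Unless this $\mu$ is absorbed adaptively into $\sigma_r$ (for instance by your $\bar X$ conjugation or by a local $X$ echo), successive dwells add with random signs. The schedule then does not implement the claimed $U_{\rm log}(T)$.
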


\begin{figure}[tb]
\centering
\includegraphics[width=\linewidth]{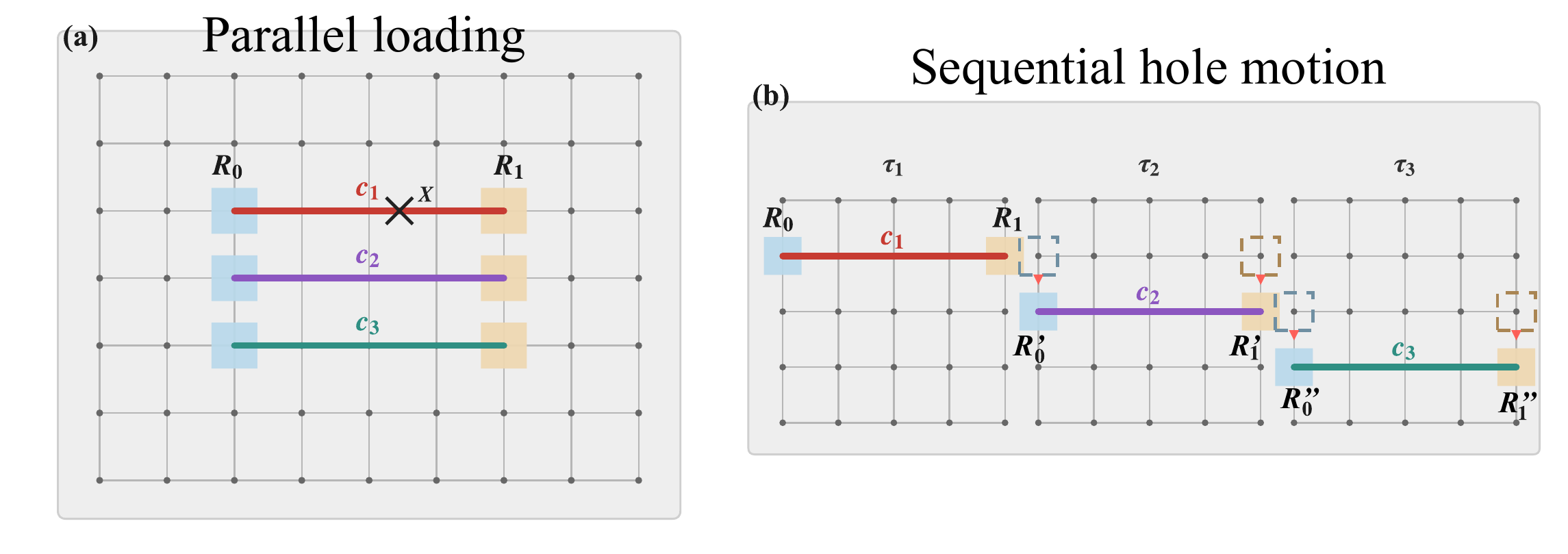}
\caption{
(a) \emph{Parallel loading}.
The rough holes $R_0$ and $R_1$ are fixed.
The signal operators supported on the legal paths $c_1,c_2,c_3$ act jointly on the same logical qubit.
The mark $X$ denotes a local $X$ flip on an edge qubit along the support.
(b) \emph{Sequential hole motion}.
The sensing process is divided into three intervals, $\tau_1,\tau_2,\tau_3$.
During dwell time $\tau_1$, the active rough holes are $R_i$, with $i=0,1$.
At the end of $\tau_1$, local measurements change the active holes from $R_i$ to $R_i'$.
During dwell time $\tau_2$, the active rough holes are $R_i'$.
At the end of $\tau_2$, local measurements change the active holes from $R_i'$ to $R_i''$.
During dwell time $\tau_3$, the active rough holes are $R_i''$.
Thus the rough hole pair is moved through positions $R_i \to R_i' \to R_i''$ over the three intervals.
During each interval, only the intended signal operator contributes.
Between intervals, local measurements are performed to move the holes and update the stabilizers.
These updates preserve the encoded logical qubit.
Therefore, they do not erase the logical phase accumulated in the previous intervals.
In both panels, the total phase is written onto the same logical operator $\bar Z$.
The protocol ends with one logical $\bar X$ measurement.
}
\label{fig:protocol}
\end{figure}

The protocol realizing this proposition has four steps.

\textit{1. Input state.}
We prepare the logical $+1$ eigenstate of $\bar X$,
\begin{equation}
|+_L\rangle
=
\frac{1}{\sqrt{2}}\bigl(|0_L\rangle+|1_L\rangle\bigr).
\end{equation}
This is the logical version of the usual Ramsey input state.
It is most sensitive to a phase generated by $\bar Z$.

\textit{2. Sensing evolution.}
During the sensing window, the unknown couplings act through physical operators $Z(c_\alpha)$.
Because all active signal operators represent the same logical operator $\bar Z$, the full physical evolution reduces to one logical $Z$ rotation.

Let $\eta_\alpha(t')$ denote the signed activation profile of the $\alpha$-th signal during $0\le t'\le T$.
We use a bang-bang profile,
\begin{equation}
\eta_\alpha(t')\in\{-1,0,+1\}.
\end{equation}
Here $\eta_\alpha(t')=+1$ means that the signal is active with its original sign.
The value $\eta_\alpha(t')=-1$ means that its sign is reversed by an echo operation.
The value $\eta_\alpha(t')=0$ means that the signal is not active during that part of the schedule.
General real weights are obtained by changing the duty cycle of this profile.

In the common logical frame, the effective logical Hamiltonian is
\begin{equation}
H_{\rm log}(t')
=
\frac{1}{2}
\left(
\sum_\alpha \eta_\alpha(t')\lambda_\alpha
\right)
\bar Z .
\end{equation}
Since this Hamiltonian is always proportional to the same logical operator $\bar Z$, the phases add directly.
Thus the logical evolution generated during the sensing part of the schedule is
\begin{equation}
U_{\rm log}(T)
=
\exp\!\left[
-\frac{i}{2}
\left(
\sum_\alpha \lambda_\alpha
\int_0^T \eta_\alpha(t')\,dt'
\right)
\bar Z
\right].
\end{equation}
The schedule realizes the effective evolution in Eq.~\eqref{eq:effective_evolution} if
\begin{equation}
\int_0^T \eta_\alpha(t')\,dt'
=
t s_\alpha
\label{eq:effective_evolution_condition}
\end{equation}
for every $\alpha$.
Here $t$ is the effective logical time in Eq.~\eqref{eq:effective_evolution}, while $T$ is the actual dwell time used for sensing.
When this condition holds,
\begin{equation}
U_{\rm log}(T)
=
\exp\!\left(
-\frac{i t}{2}
\sum_\alpha s_\alpha\lambda_\alpha \bar Z
\right).
\end{equation}

There are two natural ways to realize these coefficients, illustrated in Fig.~\ref{fig:protocol}.

\emph{Parallel loading.}
We keep the rough holes $R_0$ and $R_1$ fixed.
We assume each legal path $c_\alpha$ has at least one path-specific
data-qubit edge that is not shared by other paths, so that the $X$ gate control described below applies. 
Several signal paths may contribute during the same sensing window.
A convenient model is
\begin{equation}
H_{\mathrm{par}}(t')
=
\frac{1}{2}\sum_\alpha \eta_\alpha(t')\,\lambda_\alpha\, Z(c_\alpha),
\qquad
\eta_\alpha(t')\in\{\pm 1\}.
\end{equation}

To realize the effective evolution in Eq.~\eqref{eq:effective_evolution} with effective logical time $t$, the control must satisfy Eq.~\eqref{eq:effective_evolution_condition}
for every $\alpha$.
Since $|\eta_\alpha(t')|\le 1$, this requires
\begin{equation}
T_{\rm par}\ge t\max_\alpha |s_\alpha|.
\end{equation}
With our normalization $\max_\alpha |s_\alpha|=1$, the minimum parallel dwell time is
$ T_{\rm par}^{\min}=t $.

A negative sign is obtained by applying an odd number of local $X$ gates to
edge qubits on the support of $c_\alpha$.
This echo operation flips
\begin{equation}
Z(c_\alpha)\longmapsto -\,Z(c_\alpha)
\end{equation}
during that time segment.
These echo controls are treated as known fast control pulses. Any stabilizer
sign changes caused by these pulses are tracked in the Pauli frame, so they do not act as unknown physical Pauli errors.
Thus the control produces a signed activation profile for each signal.
A coefficient with $0<|s_\alpha|<1$ is obtained from the time average of this signed activation profile.
For example, when $T_{\rm par}=t$, one may take $\eta_\alpha(t')=+1$ for a fraction $(1+s_\alpha)/2$ of the time and $\eta_\alpha(t')=-1$ for a fraction $(1-s_\alpha)/2$ of the time.
This gives $\int_0^{T_{\rm par}} \eta_\alpha(t')\,dt'=t s_\alpha$.
If the same unknown coupling $\lambda_\alpha$ is available through several signal paths, multiplicity can also be used
(Appendix~\ref{app:logical-ramsey-protocol}).
In particular, multiplicity alone is enough for integer and rational weights, while partial-time activation gives a direct route to general normalized real weights.

\emph{Sequential hole motion.}
Instead of keeping all signal paths active on one fixed patch, we may move rough holes through a sequence of nearby positions.
For the simple schedule below, we require that at each dwell step, the chosen rough-hole positions make only the intended signal path legal.
Paths with the same endpoints are treated as one parallel group with a common
weight, unless additional local echo controls distinguish them. 
At step $r$, one legal path $c_r$ is active for a dwell time $\tau_r$.
If that step senses the coupling $\lambda_{\alpha_r}$, then its contribution is
\begin{equation}
H_r
=
\frac{1}{2}\,\sigma_r\,\lambda_{\alpha_r}\, Z(c_r),
\qquad
\sigma_r\in\{\pm1\},
\end{equation}
where $\sigma_r$ records whether an echo is used to reverse the sign in that step.

Let
\begin{equation}
T_{\rm seq}
=
\sum_r \tau_r
\end{equation}
be the actual dwell time of the sequential sensing schedule.
To realize the effective evolution in Eq.~\eqref{eq:effective_evolution} with effective logical time $t$, the dwell times must satisfy
\begin{equation}
\sum_{r:\,\alpha_r=\alpha}\sigma_r \tau_r
=
t s_\alpha
\end{equation}
for every $\alpha$.
So in the sequential picture, the weights are set by dwell times. The hole motion selects which physical path is active at each moment, but the phase is always written onto the same logical operator $\bar Z$.

For a fixed $\alpha$, the constraint above and $\tau_r\ge 0$ imply
\begin{equation}
t|s_\alpha|
=
\left|
\sum_{r:\,\alpha_r=\alpha}\sigma_r\tau_r
\right|
\le
\sum_{r:\,\alpha_r=\alpha}\tau_r .
\end{equation}
Summing over $\alpha$ therefore gives
\begin{equation}
T_{\rm seq}
\ge
t\sum_\alpha |s_\alpha|.
\end{equation}
This bound is achieved by assigning one dwell interval to each
$\alpha$ with $s_\alpha\neq 0$, with
$\tau_\alpha=t|s_\alpha|$ and
$\sigma_\alpha=\operatorname{sign}(s_\alpha)$.
Hence,
\begin{equation}
T_{\rm seq}^{\min}
=
t\sum_\alpha |s_\alpha|.
\end{equation}

Under the normalization $\max_\alpha |s_\alpha|=1$, sequential hole motion
is therefore generally slower than parallel loading by a factor
$\sum_\alpha |s_\alpha|$.
This factor can nevertheless be close to one when the target is dominated
by a single coefficient.
For example, if $|s_{\alpha_0}|=1$ and
$\sum_{\alpha\ne\alpha_0}|s_\alpha|\ll 1$, then
$T_{\rm seq}^{\min}\approx t=T_{\rm par}^{\min}$.

Between two dwell periods, we may update the code by local Pauli measurements that move the holes, as described in Appendix~\ref{app:stabilizer-updates-logical-invariance}. These intermediate code projections change the active stabilizers, but they preserve the encoded logical qubit up to a known Pauli-frame update. Therefore, they do not erase the logical phase already accumulated in the earlier steps.
At each step, the active path still represents the same logical operator $\bar Z$ on the instantaneous code space. The phases from all steps therefore add coherently, and the protocol ends with one logical $\bar X$ measurement on the final code patch.

We now give a simple example.
Suppose the desired quantity is the average of $M$ local couplings,
\begin{equation}
\bar\lambda
:=
\frac{1}{M}\sum_{\alpha=1}^{M}\lambda_\alpha.
\end{equation}
Under our normalization convention, we first sense the normalized sum
\begin{equation}
q_{\rm sum}
:=
\sum_{\alpha=1}^{M}\lambda_\alpha,
\qquad
s_\alpha=1
\quad
\text{for all }\alpha.
\end{equation}
The average is then obtained by the known rescaling
\begin{equation}
\bar\lambda
=
\frac{q_{\rm sum}}{M}.
\end{equation}

In the parallel picture, all $M$ paths can remain active during the same sensing window.
Thus $q_{\rm sum}$ is sensed for effective time $t$ using actual dwell time $T_{\rm par}=t$.
In the sequential picture, the paths are visited one by one.
Since $s_\alpha=1$ for every $\alpha$, each path must dwell for time $t$.
Thus
$T_{\rm seq}=Mt$.
Therefore, in this equal-weight example, sequential hole motion realizes the same effective logical evolution as parallel loading, but it takes $M$ times longer.
If the actual dwell time is fixed, the sequential protocol has a smaller effective logical time by the same factor.

\textit{3. Final measurement.}
After the schedule realizes the effective logical time $t$, we measure the logical operator $\bar X$.
Starting from $|+_L\rangle$, the two outcomes occur with probabilities
\begin{equation}
p_\pm
=
\frac{1}{2}\bigl[1\pm \cos(qt)\bigr].
\end{equation}
Thus the unknown weighted sum $q$ is read out from one logical Ramsey fringe.
This is the same weighted-sum sensing task as in distributed quantum sensing, but now it is encoded in one protected logical qubit.

\textit{4. Precision and the average case.}
As usual for Ramsey metrology, the value of Fisher information reveals the local sensitivity for $q$-estimation, e.g., when $q$ is known in advance to be within some small neighborhood of its true value. 
The binary distribution above has classical Fisher information $F=t^2$.
Hence, after $\nu$ repetitions, any unbiased estimator $Q$ of the normalized target $q$ satisfies
\begin{equation}
\mathrm{Var}(Q)\ge \frac{1}{\nu t^2}.
\end{equation}
This is the sensor-network Heisenberg limit for one normalized linear functional.

For the average task, however, it is cleaner to apply this bound to the normalized sensed variable
\begin{equation}
q_{\mathrm{sum}}
:=
\sum_{\alpha=1}^{M}\lambda_\alpha,
\qquad
s_\alpha=1
\quad
\text{for all }\alpha.
\end{equation}

If $Q_{\mathrm{sum}}$ is an unbiased estimator of $q_{\mathrm{sum}}$, then
\begin{equation}
\mathrm{Var}(Q_{\mathrm{sum}})\ge \frac{1}{\nu t^2}.
\end{equation}
We then estimate the mean by the known rescaling
\begin{equation}
\bar\Lambda
:=
\frac{Q_{\mathrm{sum}}}{M}.
\end{equation}
Therefore,
\begin{equation}
\mathrm{Var}(\bar\Lambda)
=
\frac{\mathrm{Var}(Q_{\mathrm{sum}})}{M^2}
\ge
\frac{1}{\nu M^2 t^2}.
\end{equation}
Equivalently,
\begin{equation}
\Delta\bar\Lambda
\ge
\frac{1}{\sqrt{\nu}\,Mt}.
\end{equation}
Thus, as a function of the effective logical time $t$, the average case recovers the usual Heisenberg scaling in the number of sensors.
For the parallel implementation, this is also the scaling in the actual sensing time because $T_{\rm par}=t$.
For the sequential implementation, the same effective time requires $T_{\rm seq}=Mt$.

The key difference from the usual GHZ picture is the physical form of the protocol.
Here the accumulated phase is stored at the logical level.
Therefore, as long as standard surface-code error correction can be carried out sufficiently well during the sensing process, all correctable local errors can be removed without erasing the target phase.
The remaining question is geometric:
when can a prescribed family of lattice-embedded signal paths $\{c_\alpha\}$ be realized as one common logical $\bar Z$?
Theorem~\ref{thm:global_odd_witness} and Proposition~\ref{prop:clean_puncture_realization} together answer exactly that question.

\section{Design rules for lattice-embedded signal supports}

We now turn to the geometric question behind Proposition~\ref{pro:Surface-code_weighted-sum_sensing}.
At this stage, the signal Hamiltonian has already been specified on the surface-code lattice. Each signal support is represented by a simple primal path $c_\alpha$ of data-qubit edges, and the corresponding physical signal operator is $Z(c_\alpha)$.
The question is the following:
when do all operators $Z(c_\alpha)$ act as the same nontrivial logical $\bar Z$ on the code space?

The desired punctured patch will contain two disjoint, simply connected (disk-like) rough holes $R_0$ and $R_1$.
Each hole is specified by a simply connected region in which the corresponding $X$-type star checks are turned off, together with the required removal or truncation of the affected $Z$-type plaquette checks.
These actual rough holes are distinct from the auxiliary witness-guided regions $\Omega_0$ and $\Omega_1$ introduced below.
A prescribed signal path $c_\alpha$ is legal as a $Z$-type path when one endpoint lies on the rough boundary of $R_0$, the other lies on the rough boundary of $R_1$, and its non-terminal part remains in the active code region.
The corresponding operator $Z(c_\alpha)$ is then a candidate logical $\bar Z$ operator.

The key object is again the witness.
As in Fig.~\ref{fig:thin_patch_example}, a witness is a simple closed loop on the dual lattice.
To compare it with a primal path $c$, we look edge by edge.
For each primal edge $e$, let $c_e,w_e\in\{0,1\}$ record, respectively,
whether the primal path $c$ uses $e$ and whether the dual loop $w$ crosses $e$.
We then define
\begin{equation}
\langle w,c\rangle
:=
\sum_e w_e c_e
\pmod 2.
\label{eq:F2_pairing}
\end{equation}
So $\langle w,c\rangle$ is just the parity of the crossings between $w$ and $c$.
If they cross once, or three times, then $\langle w,c\rangle=1$.
If they cross zero, two, or four times, then $\langle w,c\rangle=0$.

There are two separate questions.

First, there is a global topological question.
It asks whether the signal paths admit one common odd witness.
This question does not yet involve the actual rough holes.
It is answered by the global odd-witness criterion below.

Second, there is a local geometric realization question.
It asks whether a common odd witness can be converted into two actual
clean rough holes.
This question does involve the removed star checks.
It depends on whether the witness-guided regions can be opened without
turning off star checks at non-terminal vertices of the prescribed signal paths.
It is answered by the clean puncture realization statement after the theorem.

\subsection{Odd-witness criterion}

We first answer the global question.
For each signal path $c_\alpha$, let its endpoints (vertices) be $u_\alpha$ and
$v_\alpha$. Define the endpoint graph $\Gamma_{\rm end}$ as follows.
The vertices of $\Gamma_{\rm end}$ are all endpoints that appear among the
signal paths.
For each $c_\alpha$, we add one abstract edge $\eta_\alpha$ between
$u_\alpha$ and $v_\alpha$.
The edge $\eta_\alpha$ records only the two endpoints of $c_\alpha$.
It does not record the geometric shape of $c_\alpha$.

\begin{theorem}[Global odd-witness criterion]
\label{thm:global_odd_witness}
Let $\{c_\alpha\}_{\alpha\in\mathcal I}$ be simple primal paths on one
planar patch. Then the following
statements are equivalent.

\noindent(a) The endpoint graph $\Gamma_{\rm end}$ is bipartite.
Equivalently, every cycle in $\Gamma_{\rm end}$ has even length.

\noindent(b) There is no subset $\mathcal A\subseteq\mathcal I$ such that
$|\mathcal A|$ is odd and
\begin{equation}
D\left(\bigoplus_{\alpha\in\mathcal A}c_\alpha\right)=0 .
\end{equation}

\noindent(c) There exists a simple closed dual witness $w$ such that
\begin{equation}
\langle w,c_\alpha\rangle=1
\qquad
\text{for all }\alpha\in\mathcal I .
\label{eq:global_odd_witness}
\end{equation}
\end{theorem}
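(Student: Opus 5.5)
Throughout, I read $D$ as the $\mathbb F_2$ boundary map $C_1\to C_0$, so $Dc_\alpha=u_\alpha+v_\alpha$, and $\oplus$ as addition of $1$-chains (symmetric difference of edge sets). The plan is to prove the cycle (a)$\Leftrightarrow$(b), (c)$\Rightarrow$(b), (a)$\Rightarrow$(c).

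\emph{Step 1: a planar duality lemma.} Let $w$ be a simple closed dual loop in the planar patch. By the (discrete) Jordan curve theorem it separates the primal vertices into an inside set $\Omega$ and an outside set. Its crossing set $E(w)$ is then exactly the set of primal edges with one endpoint in $\Omega$, that is, $w=\delta\mathbf 1_\Omega$ as a $1$-cochain. The adjointness $\langle \delta f,c\rangle=\langle f,Dc\rangle$ then gives
\begin{equation}
\langle w,c\rangle=\#\{\text{endpoints of } c \text{ lying in } \Omega\}\pmod 2 .
\end{equation}
So $\langle w,c_\alpha\rangle=1$ exactly when $c_\alpha$ has one endpoint inside $w$ and one outside. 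This lemma drives both implications that involve (c).

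\emph{Step 2: (a)$\Leftrightarrow$(b).} By definition, $D(\bigoplus_{\mathcal A}c_\alpha)=\sum_{\mathcal A}(u_\alpha+v_\alpha)$. This vanishes iff every vertex of $\Gamma_{\rm end}$ has even degree in the edge subset $\{\eta_\alpha\}_{\alpha\in\mathcal A}$, i.e., iff that subset lies in the cycle space of $\Gamma_{\rm end}$.
\begin{itemize}
\item[(a)$\Leftarrow$(b):] If $\Gamma_{\rm end}$ is not bipartite, an odd cycle gives an odd $\mathcal A$ with $D=0$.
\item[(a)$\Rightarrow$(b):] Suppose an even-degree subgraph with $|\mathcal A|$ odd exists. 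It decomposes into edge-disjoint cycles, and one of them must be odd. Multi-edges are allowed and cause no trouble.
\end{itemize}

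\emph{Step 3: (c)$\Rightarrow$(b).} Suppose $|\mathcal A|$ is odd and $D(\bigoplus_{\mathcal A}c_\alpha)=0$. By Step 1,
\begin{equation}
1\equiv|\mathcal A|\equiv\sum_{\mathcal A}\langle w,c_\alpha\rangle=\Bigl\langle \mathbf 1_\Omega,\,D\bigl(\textstyle\bigoplus_{\mathcal A}c_\alpha\bigr)\Bigr\rangle=0 ,
\end{equation}
which is a contradiction.

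\emph{Step 4: (a)$\Rightarrow$(c).} Take a bipartition $V_0\sqcup V_1$ of the endpoint set such that every $\eta_\alpha$ joins $V_0$ to $V_1$. By Step 1, it suffices to build a vertex set $\Omega$ with three properties:
\begin{itemize}
\item[(i)] $V_0\subseteq\Omega$ and $V_1\cap\Omega=\emptyset$;
\item[(ii)] $\delta\mathbf 1_\Omega$ is a single simple closed dual loop;
\item[(iii)] that loop lies inside the patch.
\end{itemize}
I would build $\Omega$ as a thin neighbourhood of a tree. Choose a primal tree $T$ that connects all points of $V_0$, avoids $V_1$, and stays away from the outer patch boundary; such a tree exists because the grid with finitely many interior vertices removed stays connected. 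Then let $\Omega$ be the vertex set of $T$, or a slightly thickened version of it. Since $T$ is simply connected and its complement is connected, the boundary of its neighbourhood is one simple closed curve. Requirement (iii) explains why $T$ must avoid the patch edge. Finally, read off $w=\delta\mathbf 1_\Omega$ and apply Step 1.

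\emph{Main obstacle.} I expect the hard part to be making (ii) hold at lattice resolution. Branches of $T$ that pass close to each other, or to points of $V_1$, can make $\Omega$ pinch or enclose a pocket, and then $\delta\mathbf 1_\Omega$ splits into several loops or fails to be simple. My first remedy is to fill every pocket of the complement of $\Omega$ that contains no $V_1$ vertex. For pockets that do contain $V_1$ vertices, I would reroute branches of $T$, or refine the choice of $T$ (for instance, a geodesic tree with branch separations at least two), so that no such pocket arises. If the patch is too small to allow this, then (a)$\Rightarrow$(c) may need the mild assumption that the lattice is fine enough around the endpoints.
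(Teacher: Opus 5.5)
\paragraph{Steps 1--3.} These are sound. Your Jordan-curve lemma gives (c)$\Rightarrow$(b) directly, whereas the paper detours through a closed cochain and its vertex potential. Your cycle decomposition is equivalent to the paper's count $|\mathcal A|=\sum_{x\in V_0}\deg_{\mathcal A}(x)$.

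\paragraph{Step 4 has a genuine gap, and it is not about patch size.} Requirement (ii) asks that $\delta\mathbf 1_\Omega$ be a single simple cycle of the dual graph, i.e.\ a bond. That happens iff both $\Omega$ and $V\setminus\Omega$ induce connected subgraphs of the primal lattice, and such an $\Omega$ need not exist. Here is a counterexample. Let $a$ be an interior vertex with lattice neighbours $x_1,\dots,x_4$. Take the single-edge paths $c_i=ax_i$ ($i=1,\dots,4$) and one more path $c_5$ from $b$ to $y$, far away. Then $\Gamma_{\rm end}$ is a star plus a disjoint edge, hence bipartite. Every $x_i$ lies in the class opposite to $a$, and one of $b,y$ lies in the class of $a$. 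Now consider the two cases for a candidate $\Omega$. If $a\in\Omega$, connectivity of $\Omega$ forces $\Omega=\{a\}$. If $a\notin\Omega$, connectivity of the complement forces $V\setminus\Omega=\{a\}$. Either way $c_5$ is crossed evenly. So no lattice-level simple dual cycle is an odd witness here. Your claim that a primal tree through $V_0$ avoiding $V_1$ exists already fails, since deleting the $x_i$ isolates $a$. No rerouting, geodesic tree, or pocket-filling can repair this. Assuming the lattice is ``fine enough'' is not an option either, because the lattice comes with the prescribed paths.

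\paragraph{The missing idea.} You need to leave lattice resolution, which is exactly what the paper does. There, a simple witness is a topologically simple closed curve in dual position, with $w_e$ equal to its number of crossings with $e$ mod $2$. It is not required to be a simple cycle of the dual graph, and your example shows it cannot be. The construction goes as follows.
\begin{enumerate}
\item Work on the sphere obtained by adding the exterior face $f_\infty$. Choose a topological tree there that contains $V_0$ and avoids only the finite set $V_1$. It may pass through plaquette interiors and cross edges at interior points, so it always exists.
\item Take its regular neighbourhood, which is a disk. Its boundary $\gamma$, perturbed into dual position, is a simple closed curve separating $V_0$ from $V_1$.
\item Such a $\gamma$ may cross one edge twice. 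In the example, the strip leaving $a$ through a plaquette corner crosses a far side of that plaquette twice.
\end{enumerate}
Your Step 1 still survives for this $\gamma$. Applying the Jordan curve theorem edge by edge gives $w=\delta\mathbf 1_\Omega$, with $\Omega$ the primal vertices inside $\gamma$, and your parity count then concludes the proof.

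\paragraph{Requirement (iii).} Working on the sphere also removes your requirement (iii), which is itself an obstruction. Suppose both colour classes contain vertices on the outer boundary of the patch, e.g.\ a single path joining two boundary vertices. Then no loop confined to the patch interior separates them. The witness must therefore be allowed to pass through $f_\infty$, as the paper allows.
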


Condition (a) says that the endpoint vertices of the signal paths can be colored with two colors in such a way that every signal path connects vertices of opposite colors. Equivalently, $\Gamma_{\rm end}$ contains no odd cycle.
In condition (b), $D$ is an endpoint-parity map. Given a collection of paths, $D$ lists all their endpoint vertices and cancels identical vertices in pairs. Thus $D$ returns precisely those endpoint vertices that occur with odd multiplicity. Therefore, the equation in condition (b) says that the paths indexed by $\mathcal A$ have no unpaired endpoints; equivalently, every endpoint vertex appears an even number of times. In terms of $\Gamma_{\rm end}$, this means that each endpoint vertex is incident to an even number of the selected paths. Condition (b) rules out the possibility that such complete endpoint cancellation occurs for an odd number of selected paths. This is exactly the obstruction given by an odd cycle, and hence is equivalent to the bipartiteness of $\Gamma_{\rm end}$.
Condition (c) says that every signal path crosses the same simple closed dual loop with odd parity.

For example, the paths do not share endpoint vertices in Fig.~\ref{fig:witness_to_holes}, and the endpoint graph is a disjoint union of edges and is bipartite by default, satisfying condition (a); an odd witness can then be found by Theorem~\ref{thm:global_odd_witness}. In Fig.~\ref{fig:caveat}, we provide an example where the paths share endpoint vertices. Note that even though the paths are edge-disjoint, their endpoint
vertices may still coincide. In this case, the
existence of an odd witness depends on whether the endpoint graph contains
an odd cycle. Even cycles, as in Fig.~\ref{fig:caveat}, are allowed, whereas odd cycles are the
obstruction.
Fig.~\ref{fig:caveat} provides an example in which the global odd-witness criterion of Theorem~\ref{thm:global_odd_witness} is satisfied, whereas the local clean-opening condition required by Proposition~\ref{prop:clean_puncture_realization} fails.
We will discuss this situation later in detail. 

Theorem~\ref{thm:global_odd_witness} therefore completes the global
odd-witness test for the prescribed signal paths.

\subsection{Cleanly openable regions}

We now turn to the second, local question.
Choose a simple odd-witness representative $w^{(1)}$.
Because $w^{(1)}$ crosses every signal path $c_\alpha$ with odd parity, the two
endpoints of $c_\alpha$ lie on opposite sides of $w^{(1)}$.
Choose one side as the first witness-guided region $\Omega_0$.
The endpoints on this side are the endpoints that will terminate on the first
rough hole $R_0$---a region within $\Omega_0$ to be found later.
The other endpoints will terminate on the second rough hole $R_1$.

The regions $\Omega_0$ and $\Omega_1$ are only witness-guided regions.
The actual rough holes are the clean subregions
$R_i\subseteq\Omega_i$ chosen later.

A prescribed signal path may pass through a witness-guided region $\Omega_i$.
This does not by itself cause a problem.
When the actual rough hole $R_i$ is chosen, its assigned endpoint vertices may
lie on the rough boundary, but every non-terminal vertex of every prescribed
path must remain outside $R_i$.
This keeps the corresponding star stabilizers active and ensures that $Z(c_\alpha)$ commutes with them.
For a simple primal path, a non-terminal vertex is incident on two edges of the
path, whereas an endpoint vertex is incident on only one.

We say that the witness-guided regions are \emph{cleanly openable} if one can
choose two disjoint, simply connected actual rough holes
$R_0\subseteq \Omega_0$ and $R_1\subseteq \Omega_1$
satisfying the following two conditions.
First, each prescribed signal path $c_\alpha$ has one endpoint on the rough
boundary of $R_0$ and the other endpoint on the rough boundary of $R_1$.
Second, the non-terminal part of every prescribed signal path $c_\alpha$
remains in the active code region (outside $R_0\cup R_1$).
Equivalently, when the actual rough holes are chosen, no star check at a
non-terminal vertex of any $c_\alpha$ is turned off.

\begin{proposition}[Clean puncture realization]
\label{prop:clean_puncture_realization}
Assume that the global odd-witness condition in
Theorem~\ref{thm:global_odd_witness} holds and that the resulting 
witness-guided regions $\Omega_{0}$ and $\Omega_{1}$ are cleanly openable.
Let $R_0\subseteq\Omega_0$ and $R_1\subseteq\Omega_1$ be the resulting two disjoint actual rough holes.

Then each signal path $c_\alpha$ has one endpoint on the rough boundary of
$R_0$ and the other endpoint on the rough boundary of $R_1$, while its
non-terminal part remains in the active code region.

Let $c_\star$ be any legal reference path with one endpoint on the rough boundary
of $R_0$ and the other endpoint on the rough boundary of $R_1$, whose
non-terminal part also remains in the active code region.
For fixed clean rough holes in the two-hole planar setting considered here,
the $Z$-string operators supported on any two such paths differ by a product of active $Z$-type plaquette stabilizers.
Hence all prescribed signal operators have the same code-space action as $Z(c_\star)$:
\begin{equation}
P_{\mathcal C} Z(c_\alpha) P_{\mathcal C}
=
P_{\mathcal C} Z(c_\star) P_{\mathcal C}
=
P_{\mathcal C} \bar Z P_{\mathcal C}
\qquad
\text{for all }\alpha .
\end{equation}
Thus all prescribed signal operators $Z(c_\alpha)$ realize the same
nontrivial logical operator $\bar Z$, represented for example by
$Z(c_\star)$.

Conversely, suppose two clean actual rough holes $R_0$ and $R_1$ are already
fixed in the two-hole planar setting considered here.
If every prescribed signal path $c_\alpha$ has one endpoint on the rough
boundary of $R_0$ and the other endpoint on the rough boundary of $R_1$, with
its non-terminal part in the active code region, then there is a simple closed
dual witness $w$ separating $R_0$ from $R_1$.
Every such signal path crosses this separator with odd parity.
Hence
\begin{equation}
\langle w,c_\alpha\rangle=1
\qquad
\text{for all }\alpha .
\end{equation}
Thus the global odd-witness condition is necessary for any fixed clean
two-hole realization.
\end{proposition}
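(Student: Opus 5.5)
The proof has three parts, taken in the order of the statement: the endpoint placement, the common logical action, and the converse.

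\emph{Endpoint placement.} This follows directly from the definition of cleanly openable. Since $\langle w^{(1)},c_\alpha\rangle=1$, the endpoints of each $c_\alpha$ lie on opposite sides of $w^{(1)}$, so one lies in $\Omega_0$ and the other in $\Omega_1$. The chosen clean holes $R_i\subseteq\Omega_i$ then place these endpoints on the corresponding rough boundaries, and every non-terminal vertex stays active. Every active star check $A_v^X$ therefore meets $Z(c_\alpha)$ on an even number of edges: two at interior vertices, and none at endpoints, whose stars are off. Each active plaquette check, full or truncated, is $Z$-type and commutes trivially. Hence $Z(c_\alpha)$ is a logical operator.

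\emph{Common logical action.} Take a legal reference path $c_\star$ and form $d=c_\alpha\oplus c_\star$ as a $1$-chain on the active lattice. Its boundary, counted only at active vertices, vanishes: interior vertices of each path contribute even incidence, and endpoints sit on the rough boundaries, where no star check is active. So $d$ is a relative cycle with respect to $\partial R_0\cup\partial R_1$, and the task is to show it is a relative boundary, i.e., a sum of active (possibly truncated) plaquettes.

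The relevant space is $H_1(\Sigma,\partial R_0\cup\partial R_1;\mathbb F_2)$. Here $\Sigma$ is the patch minus the two disks; if the outer boundary is rough, include it among the boundary pieces as the paper's two-hole setting specifies. For the two-hole planar setting this relative homology is one-dimensional, since the code encodes one logical qubit. It is detected by the pairing with a dual cocycle separating $R_0$ from $R_1$, for instance a witness $w$ encircling $R_0$ inside the active region. I would verify that $\langle w,\cdot\rangle$ is well defined on relative classes: an active plaquette meets a closed dual loop an even number of times, and truncated plaquettes adjacent to $R_0$ are not crossed when $w$ is taken just outside $R_0$.

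Then $\langle w,c_\alpha\rangle=\langle w,c_\star\rangle=1$, because each path crosses from the $R_0$ side to the $R_1$ side. Thus $\langle w,d\rangle=0$, and by one-dimensionality $d$ is a relative boundary. It follows that $Z(c_\alpha)=Z(c_\star)\prod_p B_p^Z$, so the two operators agree on $\mathcal C$. Nontriviality follows because $X(w)$ is logical (it commutes with all active checks) and anticommutes with $Z(c_\star)$ by the odd crossing; this rules out $Z(c_\star)$ being in the stabilizer group.

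\emph{Converse.} Given fixed disjoint, simply connected clean holes, thicken $R_0$ by a dual-lattice collar that stays inside the active region and avoids $R_1$. Its boundary is a simple closed dual loop $w$ separating the two holes. Each legal $c_\alpha$ starts in the closed region bounded by $w$ on the $R_0$ side and ends outside it. A closed simple curve separates the plane, so by a discrete Jordan-curve parity argument the path crosses $w$ an odd number of times. Algebraically, $\langle w,c\rangle$ equals the parity of the number of endpoints of $c$ inside $w$, which is $\langle \delta\chi,c\rangle=\langle\chi,\partial c\rangle$, where $\chi$ is the indicator of the interior vertices.

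The main obstacle is the rank-one claim for relative homology, particularly its interaction with the truncated plaquettes at the rough boundaries and with the outer boundary type of the patch. I would handle it by a direct argument: cut along $c_\star$ to reduce to the absolute homology of a disk, or cite the standard count $k=1$ for the two-rough-hole code.
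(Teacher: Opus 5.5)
Your route is the paper's: legal paths are relative $1$-cycles, the relative group for two clean holes has rank one, every bridge lies in its nonzero class, and the converse comes from a dual collar around $R_0$. Your variations are harmless. You detect classes with $\langle w,\cdot\rangle$, where the paper uses the connecting map to $H_0(A)$ and $\widehat D\hat c=\hat r_0\oplus\hat r_1\neq 0$. You prove nontriviality through the anticommuting $X(w)$, where the paper uses $[\hat c_\star]\neq 0$. One fix is needed there: your ``not crossed'' justification is off, because the tightest loop around $R_0$ runs exactly through the truncated faces. Instead take $X(w)=\prod_{v\in R_0}A_v^X$, i.e.\ $X$ on the edges with exactly one endpoint in $R_0$; it meets every full or truncated plaquette evenly. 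Also, $\Omega_1$ is a side of $w^{(2)}$, not the far side of $w^{(1)}$, though the endpoint statement is just the definition of clean openability.

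The gap is the rank-one claim. It is the whole content of ``any two such paths differ by plaquettes,'' and you leave it unproved. Deducing it from ``the code encodes one logical qubit'' only restates it: in cellular form your relative group is $\ker H_X/\operatorname{im}H_Z^{\top}$, whose dimension is $k$. So unless $k=1$ is taken as a standing hypothesis, you assume what the proposition establishes. Your fallback of cutting along $c_\star$ does not give a disk in your planar picture: a disk minus two disks, cut along an arc joining the holes, is an annulus. It becomes a disk only after capping the outer boundary. Your parenthetical about a rough outer boundary goes the wrong way. Including any rough outer piece adds classes, e.g.\ a bridge from $R_1$ to the outer boundary, which pairs trivially with $w$. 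Then $\langle w,d\rangle=0$ no longer forces $[d]=0$. The missing idea is the paper's computation. Cap the outer boundary with $f_\infty$, which encodes that no $Z$-string may end there. Collapse $R_0$ and $R_1$ to points $\hat r_0,\hat r_1$ to get $\widehat G\simeq S^2$ with $H_1(\widehat G;\mathbb F_2)=0$. Then read off $H_1(\widehat G,A;\mathbb F_2)\cong\ker\bigl(H_0(A)\to H_0(\widehat G)\bigr)\cong\mathbb F_2$ from the long exact sequence of the pair. With that step supplied, the rest of your argument goes through.
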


The forward direction says that the global odd-witness condition (Theorem~\ref{thm:global_odd_witness}) becomes
sufficient for the existence of the required punctured surface code, after imposing the local clean-openability condition. 
The witness gives the global endpoint assignment, while the clean condition
checks whether this assignment can be realized by actual rough
holes.
The converse direction is a necessity statement for fixed clean rough holes.
Once $R_0$ and $R_1$ are fixed, any prescribed signal path connecting their rough boundaries must cross a dual separator between the two holes with odd parity.

The witness loop and the support of a logical $Z$ operator obey different boundary rules.
The witness is closed in the dual sense: it has no endpoints on the dual lattice.
By contrast, after the rough holes are introduced, a logical $Z$ operator may
be supported on a primal path with one endpoint on the rough boundary of $R_0$ and the other on the rough boundary of $R_1$.
In relative-homology language, this path is a relative cycle rather than an ordinary closed primal cycle.
Its endpoints are allowed because the corresponding star stabilizers are inactive at the rough boundaries.

For readers who want the algebraic formulation, Appendix~\ref{app:chain-cochain-notation-odd-witness} writes the closedness of the witness as
\begin{equation}
B^\top w = 0,
\end{equation}
where $B$ is the face--edge incidence matrix over $\mathbb F_2$.
This formula only says that the dual support of $w$ has no endpoints:
around each plaquette, the loop enters and leaves in pairs.
Likewise, saying that two $Z$-strings differ only by plaquette stabilizers means that their supports differ by a sum of plaquette boundaries.
We keep that algebraic language mostly in the appendices.
In the main text, the geometric picture is the more useful one.

If no global odd witness satisfying Eq.~\eqref{eq:global_odd_witness} exists,
then the prescribed signal family cannot be realized as one common logical operator by this clean two-hole construction. In that case, the family
must be modified, split into smaller compatible groups, or treated by another
control strategy.

\begin{figure}[tb]
    \centering
    \includegraphics[width=\linewidth]{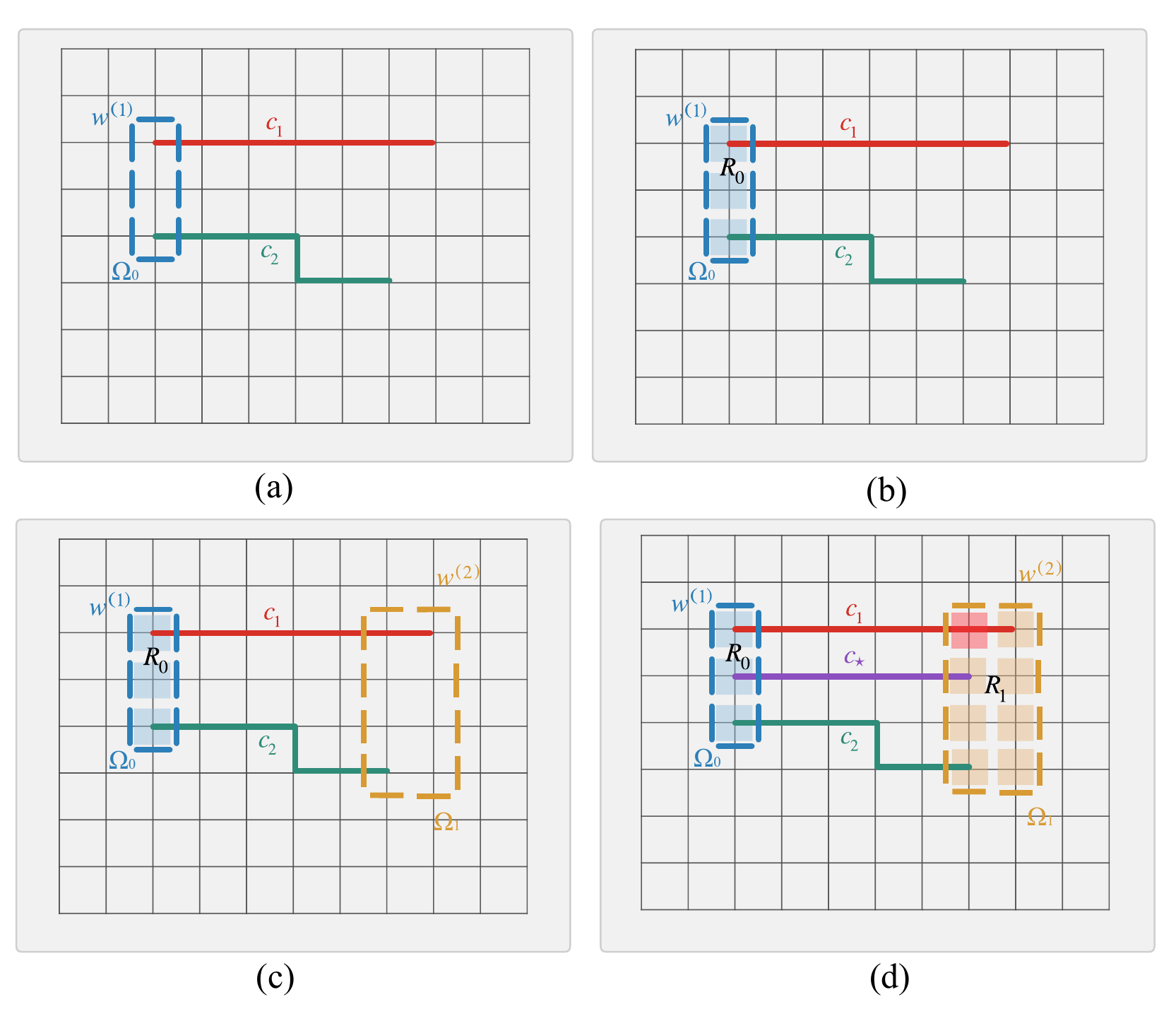}
    \caption{
Stepwise synthesis of two clean rough holes from the global odd-witness condition and the local clean-opening condition.
Panel (a) shows the first simple odd-witness representative $w^{(1)}$.
It crosses every prescribed signal path with odd parity and guides the choice of the first witness-guided region $\Omega_0$.
Panel~(b) opens the first actual rough hole
$R_0\subseteq\Omega_0$ by turning off the star stabilizers in the selected region $R_0$.
Here and below, opening a rough hole also includes the corresponding removal of interior plaquette stabilizers and truncation of the affected plaquette stabilizers along its boundary.
Panel (c) chooses a second simple odd-witness representative $w^{(2)}$ on the
side away from $R_0$.
It guides the choice of second witness-guided region $\Omega_1$.
Panel~(d) opens the second actual rough hole
$R_1\subseteq\Omega_1$ and shows a reference path $c_\star$ connecting the two rough boundaries.
Here $R_1\subsetneq\Omega_1$ because a non-terminal portion of $c_1$ passes through $\Omega_1$.
That portion is excluded from $R_1$ and remains in the active code region.
Proposition~\ref{prop:clean_puncture_realization} therefore applies.
Hence all prescribed signal operators $Z(c_\alpha)$ realize the same nontrivial logical operator $\bar Z$ on the code space.}
\label{fig:witness_to_holes}
\end{figure}

\begin{figure}[tb]
    \centering
    \includegraphics[width=\linewidth]{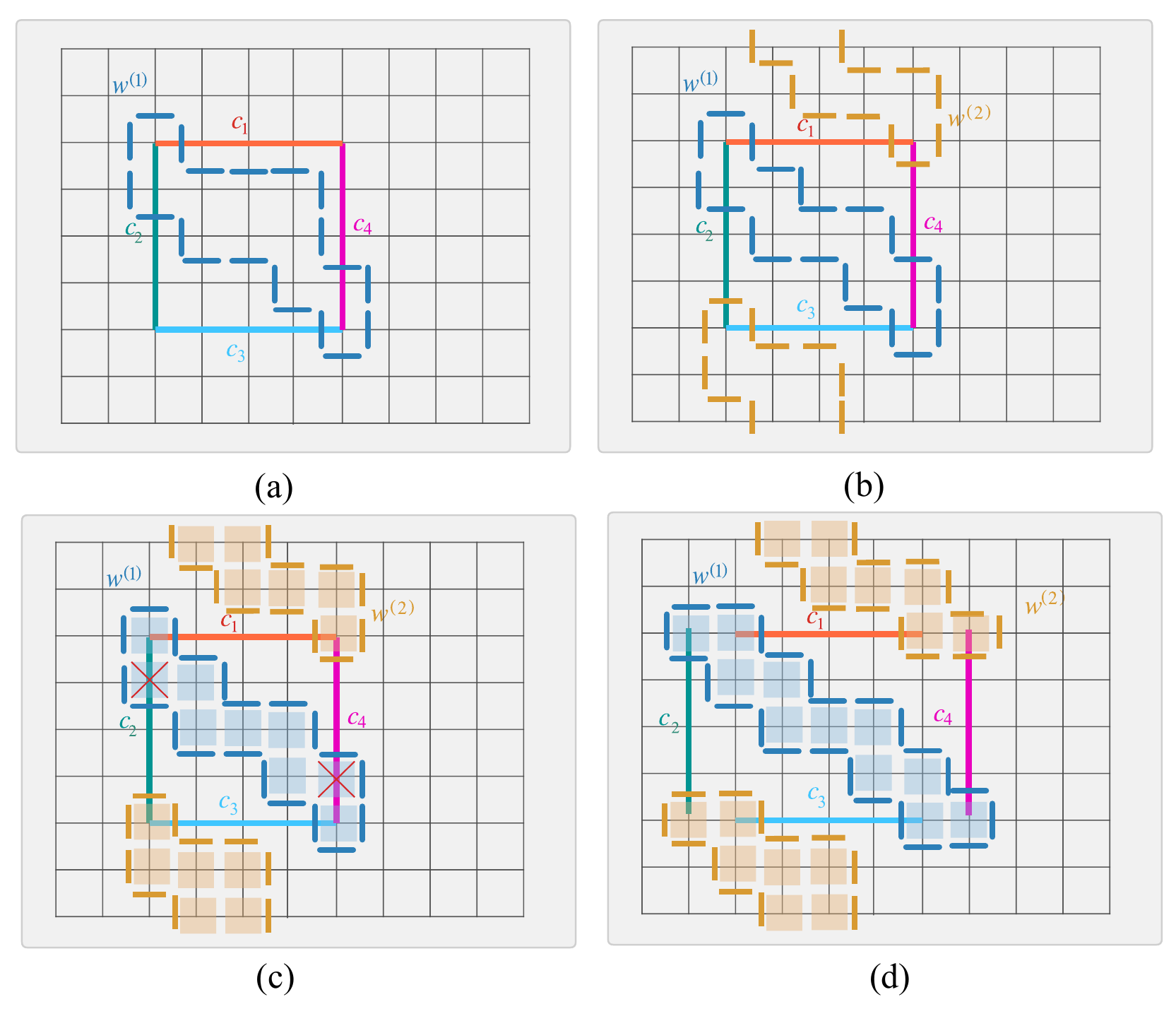}
\caption{
A local obstruction to the clean rough-hole realization step and a simple local repair.
The portions of $w^{(2)}$ that leave through the top and bottom of the displayed window are connected outside the window within a larger planar patch.
The code geometry remains an ordinary planar patch, not a torus.
Panel~(a) shows a first simple odd-witness representative $w^{(1)}$,
which crosses each of the four signal paths $c_1,c_2,c_3,c_4$ exactly once.
Panel~(b) shows a second simple odd-witness representative $w^{(2)}$ that also crosses each signal path exactly once.
The endpoint graph $\Gamma_{\rm end}$ is a four-cycle and is therefore bipartite.
Thus the global odd-witness criterion is satisfied.
In panel~(c), clean rough-hole realization nevertheless fails locally.
The shaded regions are the witness-guided regions $\Omega_0$ and $\Omega_1$.
The two marked sites are non-terminal star-check vertices of the signal paths lying inside these witness-guided regions.
The corresponding star checks must remain active and therefore cannot be included in the actual rough holes.
For each signal path, only its assigned endpoint vertices may contact the
rough boundaries, while all non-terminal star-check vertices must remain in
the active code region.
Excluding these two forbidden star-check sites disconnects the remaining
allowed part of the affected witness-guided region.
Hence no connected actual rough hole satisfying the clean condition can be chosen in panel~(c).
Panel~(d) shows a local repair.
If $c_2$ and $c_4$ are shifted by one column while $c_1$ and $c_3$ are kept unchanged, the resulting witness-guided regions are cleanly openable.
The failure in panel~(c) is therefore a local layout obstruction, not a failure of the global odd-witness criterion.
}
    \label{fig:caveat}
\end{figure}

\subsection{Constructing the punctured surface code}

The global odd-witness criterion is not only a test.
Together with the local clean-opening condition, it gives a concrete construction.
Figure~\ref{fig:witness_to_holes} shows the construction.

We first state the clean-up rule used in the construction.

\emph{Local clean-opening rule.}
Given a witness-guided region $\Omega_i$, choose a simply connected rough hole $R_i\subseteq\Omega_i$ by specifying the star stabilizers to be turned off.
The affected plaquette stabilizers are removed or truncated as described above.
The rough boundary must contain the assigned endpoint vertices, while every non-terminal vertex of every prescribed signal path remains in the active code region.

A signal path may therefore pass through $\Omega_i$.
This is harmless, because its non-terminal portion is carved out when $R_i$ is
chosen.
In practice, one may start from $\Omega_i$ as the candidate hole-opening
region, then carve out any non-terminal parts of the prescribed signal paths
and keep them in the active code region.
The clean-up succeeds if the remaining region still contains the assigned
endpoints and can be chosen as the actual rough hole $R_i$.

We now describe the construction.

\begin{enumerate}[wide, labelwidth=!,labelindent=0pt, leftmargin=2em, label={(\arabic*)}]
\item
By Theorem~\ref{thm:global_odd_witness}, choose a simple odd-witness
representative $w^{(1)}$.
Choose one side of $w^{(1)}$ as the first witness-guided region $\Omega_0$.
The endpoints on this side are assigned to the first rough hole.
The other endpoints are assigned to the second rough hole.
The region $\Omega_0$ is only a guide, not the actual rough hole.

A shorter representative may be used to keep the construction local, but this
choice does not change the topological criterion.

\item
Apply the clean-up rule inside $\Omega_0$ to the endpoints assigned to the
first rough hole.
This gives a simply connected actual rough hole $R_0\subseteq\Omega_0$.
The rough boundary of $R_0$ contains these assigned endpoints, while the
non-terminal parts of all prescribed signal paths remain in the active code
region.

\item
Keep $R_0$ fixed and collapse it to one collapsed vertex $\tilde r_0$.
This collapse is only an auxiliary geometric construction; it is not a physical code-deformation step.
After this collapse, each signal path has one endpoint at $\tilde r_0$ and one
remaining endpoint outside $R_0$.
Thus the quotient endpoint graph is a star graph, possibly with repeated
edges, and is therefore bipartite.

Apply Theorem~\ref{thm:global_odd_witness} to this quotient patch.
This gives a second simple odd-witness representative $w^{(2)}$.
Lift it back to the original patch.
Let $\Omega_1$ be the side of $w^{(2)}$ that contains the remaining endpoints
and does not contain $R_0$.

\item
Apply the same clean-up rule inside $\Omega_1$ to the remaining endpoints.
This gives a simply connected actual rough hole $R_1\subseteq\Omega_1$.
The rough boundary of $R_1$ contains the remaining endpoints, while the
non-terminal parts of all prescribed signal paths remain in the active code
region.
\end{enumerate}

Panels (a)--(d) of Fig.~\ref{fig:witness_to_holes} follow these four steps:
first $w^{(1)}$, then $\Omega_0$ and $R_0$, then $w^{(2)}$, and finally $\Omega_1$ and $R_1$ together with
the reference path $c_\star$.
Appendix~\ref{app:rough-hole-construction-simple-witnesses} gives the
detailed rough-hole construction, and
Appendix~\ref{app:relative-homology-fixed-rough-holes} proves that, for the
resulting fixed holes, all prescribed signal operators $Z(c_\alpha)$ realize
the same nontrivial logical operator $\bar Z$.

Figure~\ref{fig:caveat} shows why clean opening must be checked separately.
Its endpoint graph is an even cycle, so the global odd-witness condition is satisfied.
However, excluding the non-terminal path vertices disconnects the witness-guided regions, and no connected clean rough holes can be chosen for the layout in panel~(c).
This is a local geometric obstruction rather than a failure of the global criterion.
As panel~(d) shows, a small local deformation of the prescribed paths can restore clean openability.

Together, Proposition~\ref{pro:Surface-code_weighted-sum_sensing},
Theorem~\ref{thm:global_odd_witness}, and
Proposition~\ref{prop:clean_puncture_realization} give the full strategy to perform distributed sensing for lattice-embedded $Z$-type couplings.
Proposition~\ref{pro:Surface-code_weighted-sum_sensing} gives the sensing
protocol once all signal operators act as the same logical $\bar Z$.
Theorem~\ref{thm:global_odd_witness} gives the global parity test.
Proposition~\ref{prop:clean_puncture_realization} gives the local condition
that turns the witness-guided regions into actual clean rough holes.
When both the global odd-witness condition and the local clean-opening condition hold, the prescribed signal operators are compressed into one protected logical generator.

\section{Lattice realization of abstract support structures}

The preceding section treats a prescribed family of lattice-embedded signal paths $\{c_\alpha\}$.
It determines when compatible rough holes can be chosen so that all operators $Z(c_\alpha)$ realize the same nontrivial logical $\bar Z$.
We now turn to the complementary placement problem.
Suppose instead that the signal Hamiltonian specifies only a family of abstract supports $\{S_\alpha\}$, namely which qubit labels appear in each term and which labels are shared between different terms.
This abstract support structure fixes the overlaps and symmetric differences between the signal terms, but it does not yet specify a placement on the planar surface-code lattice.

A lattice realization assigns distinct abstract qubit labels to distinct data-qubit edges, while every repeated occurrence of a shared label is assigned to the same edge.
After placement, we denote the image of $S_\alpha$ by
$c_\alpha\subseteq E$.
We require each $c_\alpha$ to form a simple primal path and seek two rough holes $R_0$ and $R_1$ such that every $c_\alpha$ is a legal path between the two rough boundaries.
The goal is that all corresponding operators $Z(c_\alpha)$ realize the same nontrivial logical $\bar Z$.

If the abstract supports are pairwise disjoint, they are straightforward to realize on a sufficiently large patch.
One may choose two clean rough holes and realize each abstract support as a legal primal path between the two rough boundaries,
with the resulting paths mutually separated.
For these fixed holes, Proposition~\ref{prop:clean_puncture_realization} implies that all operators $Z(c_\alpha)$ have the same logical $\bar Z$ action.

When the abstract supports overlap, each shared qubit label must instead be mapped to one shared data-qubit edge.
Such placements are constrained by a \textit{stabilizer-difference condition}.
Let $\mathcal S_Z^{\mathrm{act}}$ denote the subgroup generated by all active full and truncated $Z$-type plaquette stabilizers.
If two placed signal operators realize the same nontrivial logical $\bar Z$, then their product acts trivially on the code space.
\begin{equation}
Z(c_a)Z(c_b)
=
Z(c_a\oplus c_b)
\in
\mathcal S_Z^{\mathrm{act}},
\label{eq:stabilizer-difference-principle}
\end{equation}
where $\oplus$ denotes the symmetric difference of the two edge sets.
Conversely, once one of the two operators is known to realize the logical $\bar Z$, Eq.~\eqref{eq:stabilizer-difference-principle} ensures that the other has the same logical action.

To illustrate this general condition, we now specialize to three-body $Z$-type signal terms.
Thus $|S_\alpha|=3$, and each placed support $c_\alpha$ is required to form a simple three-edge primal path.

For the overlapping constructions below, we allow the punctured code to have distance two rather than imposing the distance-three condition
used elsewhere in the paper. A distance-two code can detect an
arbitrary single-qubit Pauli error and can correct a known
single-qubit erasure, but it cannot correct an unknown single-qubit
Pauli error. This reduction in protection permits the local
rough-boundary geometry needed for nontrivial overlaps between
three-body terms. Under a distance-three restriction, the examples
considered here reduce to disjoint three-body supports, which can be
trivially placed independently on a sufficiently large patch, for example on a
thin surface-code strip.

We consider two local overlap patterns: two three-body supports may share either one data qubit or two data
qubits. 

\subsection{One-qubit overlaps: an open, nonbranching backbone}

Consider first two signal paths that share one data qubit. In the representative pair in Fig.~\ref{fig:three_qubit_overlap}(a), 
$Z(c_1)=Z_1Z_2Z_3$ and $Z(c_2)=Z_3Z_4Z_5$.
They share the factor $Z_3$ at the rough boundary of $R_0$.
Let $p_{\mathrm L}$ and $p_{\mathrm R}$ denote the two adjacent green plaquettes along the rough boundary of $R_1$.
Their active weight-three truncated stabilizers are
$B_{p_{\mathrm L}}^Z=Z_1Z_2Z_6$ and
$B_{p_{\mathrm R}}^Z=Z_4Z_5Z_6$.
It follows that
\begin{equation}
\label{eq:fig5-local-stabilizer-difference}
\begin{aligned}
Z(c_1)Z(c_2)
&=(Z_1Z_2Z_3)(Z_3Z_4Z_5) \\
&=Z_1Z_2Z_4Z_5 \\
&=(Z_1Z_2Z_6)(Z_4Z_5Z_6) \\
&=B_{p_{\mathrm L}}^ZB_{p_{\mathrm R}}^Z
\in\mathcal S_Z^{\mathrm{act}} .
\end{aligned}
\end{equation}
Here $Z_3$ cancels between the two signal operators, while $Z_6$ cancels between the two truncated plaquette stabilizers.
Thus $c_1$ and $c_2$ satisfy the stabilizer-difference condition.

The same local relation can be translated along the strip and
reflected between the two rough boundaries. Once one path in the
sequence is identified with the logical $\bar Z$, Eq.~\eqref{eq:fig5-local-stabilizer-difference} shows
that every successive path has the same logical action. To keep every
three-edge path connected to both rough boundaries, the shared
terminal data qubit alternates between the two boundaries. The
resulting extended sequence of overlapping supports is the
\emph{backbone} structure shown in Fig.~\ref{fig:three_qubit_overlap}(a). 

\begin{figure}
    \centering
\includegraphics[width=\columnwidth]{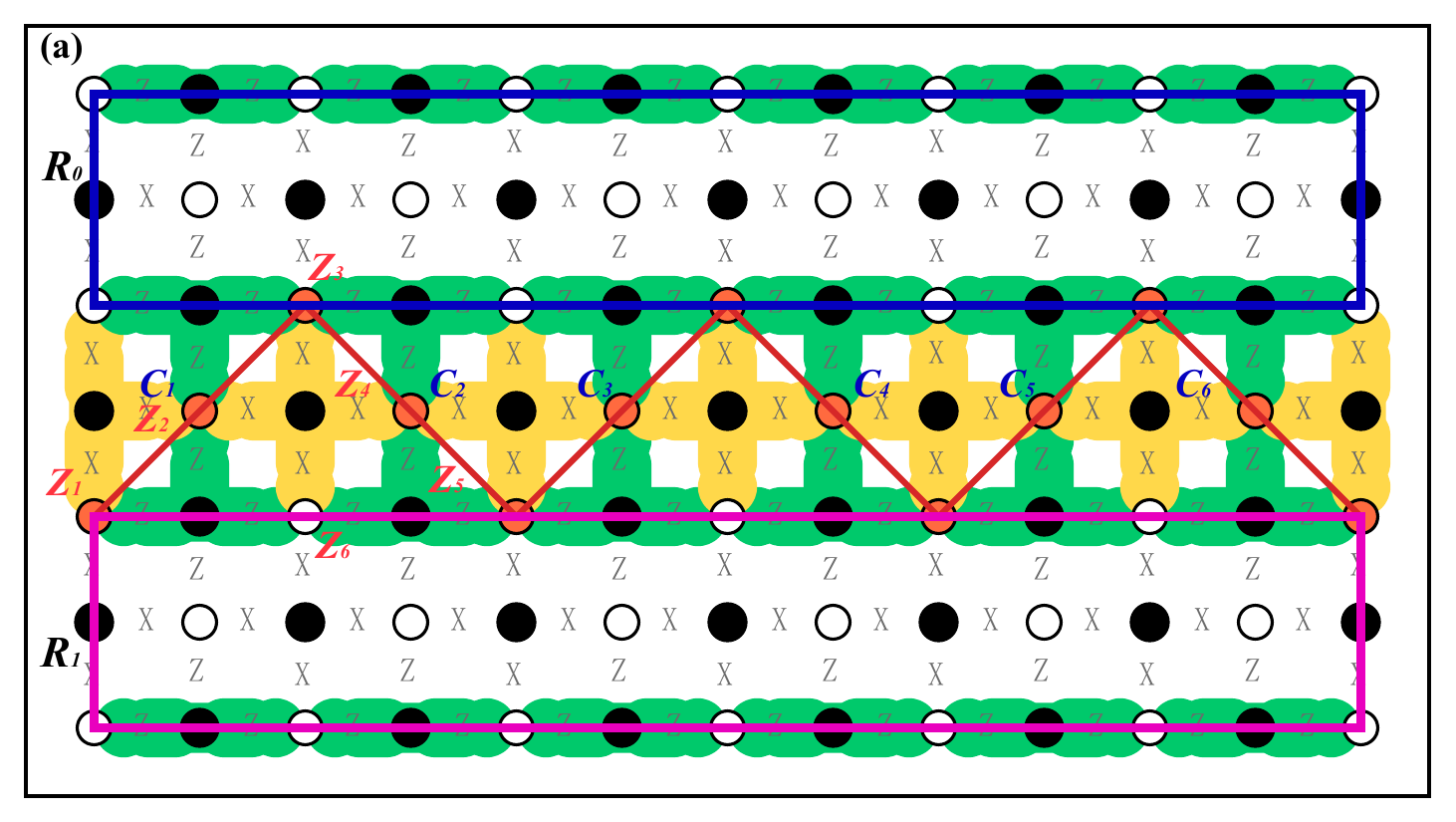}\par
\vspace{0.6ex}
\includegraphics[width=\columnwidth]{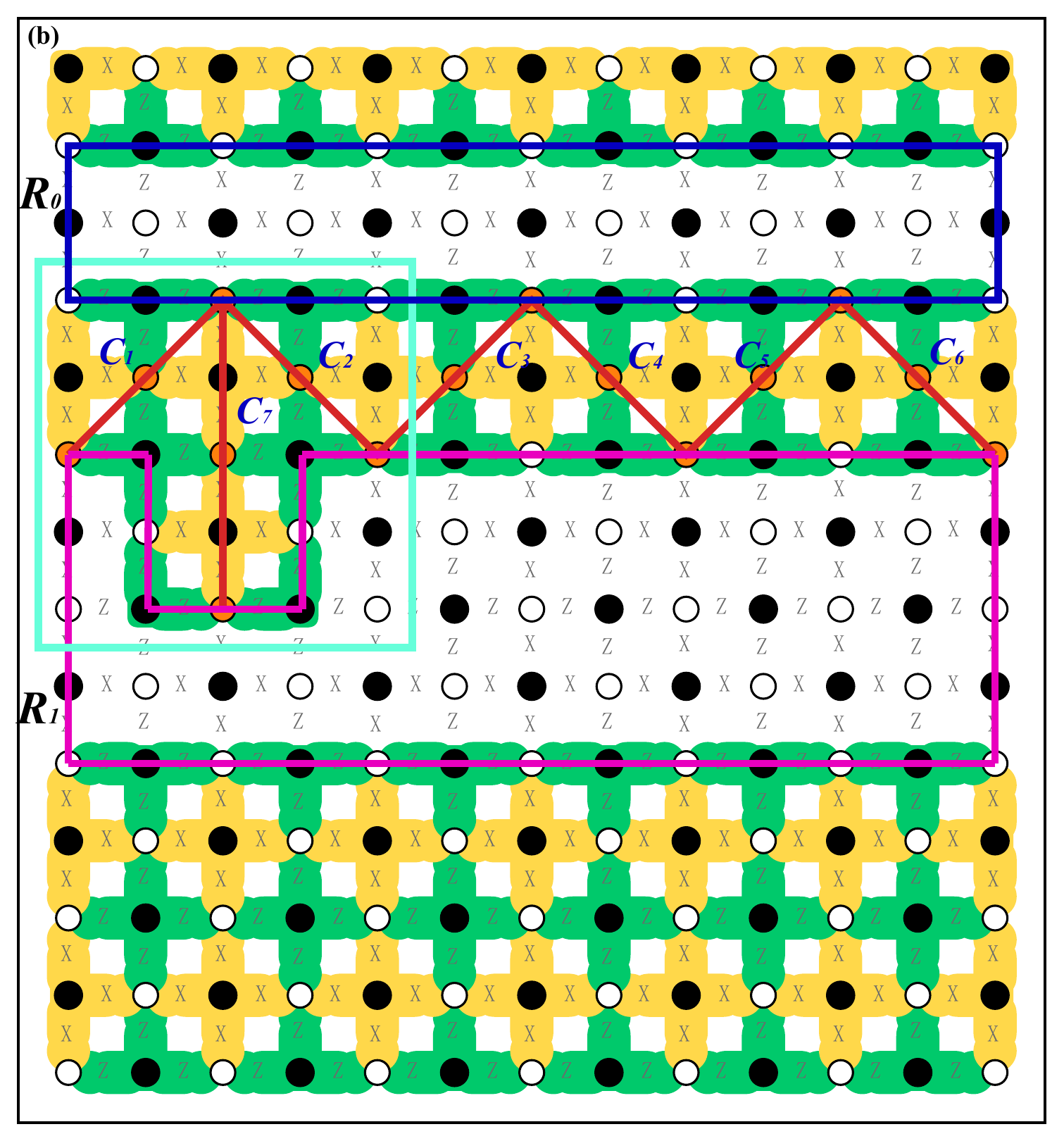}
    \caption{
  \textbf{Compatible placement of one-qubit-overlapping three-body $Z$-type signal paths on a punctured surface code lattice.}
  White circles denote data qubits, and black circles denote syndrome-measurement qubits.
  The green and yellow shapes indicate the underlying locations of
  $Z$-type and $X$-type stabilizers, respectively.
    The blue and pink outlines mark the rough boundaries associated with
  the two rough holes $R_0$ and $R_1$.
Each labeled red path $C_i$ represents the placed support $c_i$ of one three-body signal term, namely a simple three-edge primal path.
  (a) Six paths form a single open, nonbranching backbone.
Successive paths share one terminal data qubit, with the shared
contacts alternating between the two rough boundaries.
The labels $Z_1,\ldots,Z_6$ identify the representative local
stabilizer relation derived in
Eq.~\eqref{eq:fig5-local-stabilizer-difference}.
    (b) Local completion of the same backbone.
The cyan box highlights the added path $C_7$ near the left endpoint.
The paths $C_1$, $C_2$, and $C_7$ share the same terminal data qubit
on the rough boundary of $R_0$, while $C_7$ terminates on the nearby
indentation of the $R_1$ boundary.
This addition remains local and neither creates a second extended
backbone nor closes the original backbone.
    }
  \label{fig:three_qubit_overlap}
\end{figure}

\emph{Nonbranching backbone.}
Within this construction, successive supports must form a single
sequence (see Fig.~\ref{fig:three_qubit_overlap}(a)),
\begin{equation}
\begin{aligned}
Z_{q_1}Z_{q_2}Z_{q_3}
&\;\longrightarrow\;
Z_{q_3}Z_{q_4}Z_{q_5}
\;\longrightarrow\;
Z_{q_5}Z_{q_6}Z_{q_7}
\;\longrightarrow\;\cdots ,
\end{aligned}
\end{equation}
where each arrow represents a one-qubit overlap between neighboring
terms. This sequence is the extended backbone.

A genuine branch is not allowed. For example, the pattern
\begin{equation}
\begin{aligned}
&Z_{q_1}Z_{q_2}Z_{q_3}
\;\longrightarrow\;
Z_{q_3}Z_{q_4}Z_{q_5}
\;\longrightarrow\;
Z_{q_5}Z_{q_6}Z_{q_7},\\
&\phantom{Z_{q_1}Z_{q_2}}\searrow\\
&\phantom{Z_{q_1}Z_{q_2}Z_{q_3}\;\longrightarrow\;}
Z_{q_3}Z_{q_8}Z_{q_9}
\;\longrightarrow\;
Z_{q_9}Z_{q_{10}}Z_{q_{11}} .
\end{aligned}
\end{equation}
splits at $q_3$ into two independent long-range continuations. In the
local placement considered here, both continuations cannot be embedded
while simultaneously preserving the required stabilizer relations and
the legal rough-boundary contacts of every path. The extended structure
must therefore remain nonbranching.

\emph{Local completion.}
The construction does permit one additional local motif near an
endpoint of the backbone. At the operator level, it has the form (see Fig.~\ref{fig:three_qubit_overlap}(b))
\begin{equation}
\begin{aligned}
Z_{q_1}Z_{q_2}Z_{q_3}
&\;\longrightarrow\;
Z_{q_3}Z_{q_4}Z_{q_5}
\;\longrightarrow\;
Z_{q_5}Z_{q_6}Z_{q_7}
\;\longrightarrow\;\cdots\\
&\hspace{2.9em}\uparrow\\[-0.3ex]
&\hspace{2.2em}Z_{q_3}Z_{q_8}Z_{q_9} .
\end{aligned}
\end{equation}
This extra term is confined to a local neighborhood, as in
Fig.~\ref{fig:three_qubit_overlap}(b). Its detailed geometry can vary, but it cannot be continued
into another long sequence. It therefore does not produce a second
backbone. 

Geometric constraints on shared rough-boundary contacts, as well as the local extensions permitted by different lattice embeddings, are discussed in Appendix~\ref{app:local-overlap-constraints}. 

These placement rules explain why the backbone does not close.
The permitted local continuation moves connect neighboring terms end
to end but do not provide a route back to an earlier overlap junction.
Because the patch is planar and non-toroidal, the backbone also cannot
close by wrapping around the lattice. Finally, the local
completion is terminal and cannot seed another extended continuation.
The allowed one-qubit-overlap realization is therefore a single open,
nonbranching backbone.

\begin{figure}[htbp]
    \centering
    \includegraphics[width=1\linewidth]{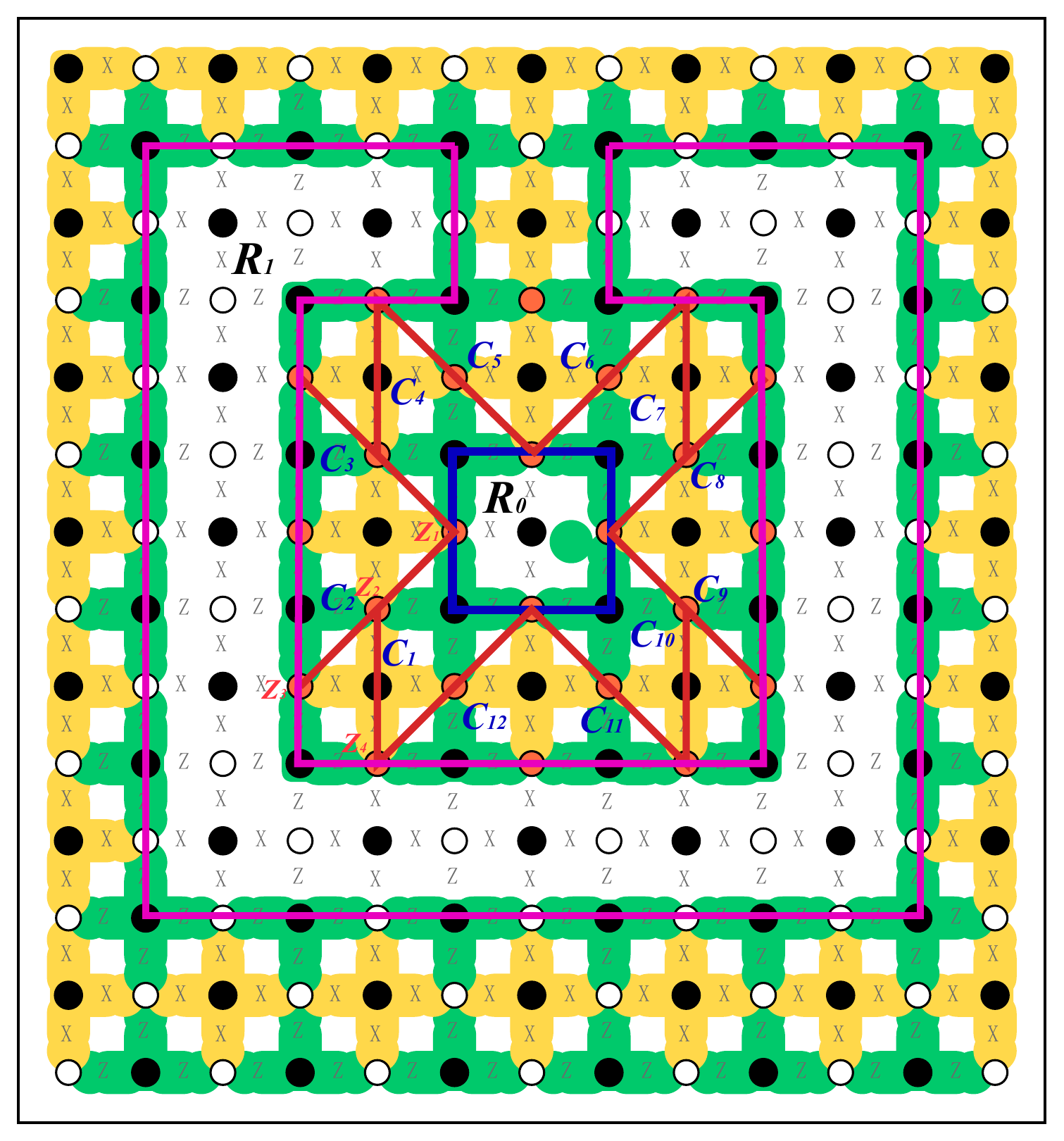}
    \caption{\textbf{A weight-two boundary stabilizer realizes a local $90^\circ$ turn between
overlapping three-body $Z$-type signal paths.}
The background notation follows Fig.~\ref{fig:three_qubit_overlap}.
The blue inner and pink outer outlines mark the rough boundaries of
$R_0$ and $R_1$, respectively, with the active code region lying between
them.
    The labeled red paths $C_1,\ldots,C_{12}$ are simple three-edge primal paths connecting the two rough boundaries.
The pair labeled $C_1$ and $C_2$ forms the representative lower-left turn
analyzed in the main text.
The labels $Z_1$ and $Z_2$ mark the shared data qubits, whereas $Z_3$ and
$Z_4$ mark the two non-shared data qubits on the perpendicular segments
meeting at an L-shaped corner of the $R_1$ boundary.
The operator $Z_3Z_4$ is the active weight-two truncated plaquette
stabilizer at this corner.
Repeating the same local motif gives the displayed sequence
$C_1,\ldots,C_{12}$.}
    \label{fig:weight2}
\end{figure}

\subsection{Two-qubit overlaps: a local 90$^\circ$ turn}

Now consider two three-body paths that share two data qubits, as in
Fig.~\ref{fig:weight2}. 

Write
\begin{equation}
Z(c_a)=Z_1Z_2Z_3,
\qquad
Z(c_b)=Z_1Z_2Z_4 .
\end{equation}
The shared factors $Z_1Z_2$ cancel in their product, giving
\begin{equation}
Z(c_a)Z(c_b)=Z_3Z_4 .
\end{equation}
Thus $c_a\oplus c_b$ consists of the two non-shared data-qubit edges.
By Eq.~\eqref{eq:stabilizer-difference-principle}, compatibility requires
$Z_3Z_4\in\mathcal S_Z^{\mathrm{act}}$.

In the placement shown in Fig.~\ref{fig:weight2}, the data qubits carrying $Z_3$ and
$Z_4$ lie on two perpendicular boundary segments that meet at an
L-shaped corner of the rough boundary of $R_1$. The product $Z_3Z_4$
is exactly the active weight-two truncated plaquette stabilizer at
that corner. The two signal paths therefore have the same logical
action, and their two non-shared edges form a local $90^\circ$ turn.
Repeating this motif produces the turning sequence displayed in the
figure.

\subsection{Design summary}

The stabilizer-difference condition in Eq.~\eqref{eq:stabilizer-difference-principle} is the central
algebraic requirement for any lattice realization with overlapping
supports. For the three-body interactions considered here, it leads
to two useful local building blocks:

\begin{itemize}[wide, labelwidth=!, labelindent=0pt, leftmargin=2em]

\item With a one-qubit overlap, neighboring supports can be
concatenated into one open, nonbranching backbone. A strictly local
completion may be added near an endpoint when the boundary geometry
allows it.

\item With a two-qubit overlap, the two non-shared edges must support
an active weight-two truncated plaquette stabilizer. An L-shaped rough-boundary
corner provides such a stabilizer and realizes a local $90^\circ$
turn.

\end{itemize}

These building blocks may be combined as long as every placed support
remains a legal path between the rough boundaries of $R_0$ and $R_1$.
Proposition~\ref{prop:clean_puncture_realization} then guarantees that all physical signal operators
$Z(c_\alpha)$ represent the same nontrivial logical $\bar Z$.

\section{Discussion}

In this paper, we study the metrological task of estimating a chosen linear functional $q=\sum_{\alpha} s_{\alpha}\lambda_{\alpha}$ of local many-body $Z$-type couplings on a punctured surface-code patch, where the weights $s_{\alpha}$ are known and the coupling strengths $\lambda_{\alpha}$ are unknown.
For a prescribed family of lattice-embedded $Z$-type signal paths,
we derive a global odd-witness criterion for their topological compatibility.
When a separate local clean-opening condition is also satisfied, the witness
construction identifies two actual rough holes such that all corresponding
physical signal operators $Z(c_\alpha)$ realize the same nontrivial logical $\bar Z$.
Once this compatibility is achieved, the logical Ramsey protocol reduces
the physical sensing Hamiltonian within the code space to an effective
logical Hamiltonian proportional to $q\bar Z$.
The target phase can be accumulated either by parallel loading on a fixed
punctured patch or by sequential hole motion, with the two implementations
having different dwell-time and control overheads.
In this way, the protocol provides a surface-code-protected realization of
weighted-sum sensing in a distributed quantum sensor network.
While we focus on $Z$-type signals for concreteness, the same framework applies to $X$-type couplings by exchanging $X$ and $Z$ throughout.

For the complementary design problem, in which abstract support structures are specified before lattice realization,
pairwise disjoint supports can be placed as separated legal primal paths on a sufficiently large patch, whereas lattice realizations of overlapping
supports are more restrictive and must satisfy the stabilizer-difference condition.
For the three-body constructions considered here, compatible one-qubit overlaps combine into a single open, nonbranching backbone, which may admit a local completion when the boundary geometry permits.
A two-qubit overlap can instead be realized as a local $90^\circ$ turn, in which the product of the two corresponding physical signal operators is an active weight-two truncated plaquette stabilizer.
Subject to the requirement that every placed support remain a legal path between the two rough boundaries, these two building blocks yield more elaborate overlapping layouts whose physical signal operators still realize the same nontrivial logical $\bar Z$.

Assuming syndrome extraction can be performed accurately and sufficiently fast during sensing, all errors that remain correctable by the code do not erase the accumulated logical phase.
Moreover, the phase is stored nonlocally, so an observer restricted to a correctable region cannot recover $q$ from local measurements within that region (Appendix~\ref{app:logical-ramsey-protocol}).
Future work can extend the same geometric criteria to richer code families, higher-weight interactions, and more general Pauli structure.
The appendices give the full proofs and the detailed construction algorithms.

\section*{Acknowledgments}

L.Q. thanks Haidong Yuan, Jun Feng, and Zhiyao Hu for helpful discussions. S.Z. acknowledges support from Perimeter Institute for Theoretical Physics, a research institute supported in part by the Government of Canada through the Department of Innovation, Science and Economic Development Canada and by the Province of Ontario through the Ministry of Colleges and Universities.

\section*{Author Contributions}

L.Q. and S.Z. jointly led the development of this work. L.Q. carried out the main theoretical analysis, developed the mathematical framework and proofs, prepared the figures, and wrote the original draft. S.Z. formulated the research problem, proposed the central idea, supervised the project, contributed substantially to the theoretical framework and interpretation of the results, and revised the manuscript. Z.O. contributed to theoretical discussions, analysis of the results, and revision of the manuscript. All authors reviewed and approved the final manuscript.

Large language model tools were used to assist with English-language editing and rephrasing. Their use was limited to language-related assistance. All scientific content was developed and verified by the authors, who take full responsibility for the work.

\appendix
\onecolumngrid
\counterwithin{theorem}{section}
\counterwithin{proposition}{section}
\counterwithin{lemma}{section}
\counterwithin{corollary}{section}

\renewcommand{\thetheorem}{\thesection.\arabic{theorem}}
\renewcommand{\theproposition}{\thesection.\arabic{proposition}}
\renewcommand{\thelemma}{\thesection.\arabic{lemma}}
\renewcommand{\thecorollary}{\thesection.\arabic{corollary}}

\section{Logical Invariance under Stabilizer Updates and Sequential Sensing}
\label{app:stabilizer-updates-logical-invariance}
This appendix gives the stabilizer-algebra justification for the sequential
hole-motion protocol used in the main text. In that protocol, local Pauli
measurements replace active stabilizer constraints and therefore change the
instantaneous code-space branch. All statements below are conditional on the
recorded measurement outcomes.

The point needed in the main text is that these local stabilizer updates
transport the encoded logical state when the logical Pauli frame is updated
consistently with the measurement record. Thus the physical supports of
convenient logical Pauli representatives may change, and known
measurement-record-dependent signs may appear, but the encoded Bloch vector
does not change. In particular, a Ramsey phase accumulated during an earlier
dwell interval is not erased by later code deformation; any known Pauli-frame
signs are tracked classically and absorbed into subsequent signed-control
choices.

\subsection*{Elementary stabilizer updates and logical invariance}

We work on the $n$-qubit Hilbert space
$\mathcal H=(\mathbb C^2)^{\otimes n}$ and denote the $n$-qubit Pauli
group by
$\mathcal P_n=\{\pm 1,\pm i\}\times\{I,X,Y,Z\}^{\otimes n}$.
Whenever a Pauli operator is used below as a stabilizer constraint, a logical
Pauli representative, a measured observable, or a sensing-Hamiltonian term,
we choose a Hermitian representative, so that $L=L^\dagger$ and $L^2=I$.
For a stabilizer constraint, its Hermitian sign specifies the eigenspace
defining the current code-space branch. For a sensing-Hamiltonian term, its
Hermitian sign specifies the direction of logical phase accumulation.

Let
\begin{equation}
\mathcal S
=
\left\langle
P_1,\ldots,P_m
\right\rangle
\subseteq
\mathcal P_n
\end{equation}
be generated by independent mutually commuting Hermitian Pauli operators,
with $P_i^2=I$, $[P_i,P_j]=0$ for all $i,j$, and
$-I\notin\mathcal S$. Its code space and projector are
\begin{equation}
\mathcal C
=
\left\{
|\psi\rangle\in\mathcal H
\ \middle|\
P_i|\psi\rangle=|\psi\rangle
\text{ for all } i
\right\},
\qquad
P_{\mathcal C}
=
\prod_{i=1}^{m}\frac{I+P_i}{2}.
\end{equation}

As in the two-rough-hole surface-code construction used in the main text, we
assume that $\mathcal C$ encodes one logical qubit. The Pauli normalizer of
$\mathcal S$ is
\begin{equation}
\mathcal N(\mathcal S)
=
\left\{
L\in\mathcal P_n
\ \middle|\
[L,S]=0
\text{ for all } S\in\mathcal S
\right\}.
\end{equation}
We choose Hermitian physical representatives
$\bar X,\bar Z\in\mathcal N(\mathcal S)$ for the logical Pauli operators,
with $\bar X\bar Z=-\bar Z\bar X$, and define
$\bar Y:=i\bar X\bar Z$. A general encoded state may then be written as
\begin{equation}
\rho_L
=
P_{\mathcal C}
\frac{
I+\alpha_x\bar X+\alpha_y\bar Y+\alpha_z\bar Z
}{2}
P_{\mathcal C},
\qquad
\|\vec{\alpha}\|\le 1,
\label{eq:app-initial-logical-state}
\end{equation}
where $\vec{\alpha}=(\alpha_x,\alpha_y,\alpha_z)$ is the Bloch vector of
the encoded logical qubit.

We will use the standard equivalence between physical representatives of
the same logical operator. If $L\in\mathcal N(\mathcal S)$ and
$S\in\mathcal S$, then $S$ acts as the identity on the code space and
therefore
\begin{equation}
P_{\mathcal C}(LS)P_{\mathcal C}
=
P_{\mathcal C}LP_{\mathcal C}.
\label{eq:app-physical-equivalence}
\end{equation}
Thus multiplying a logical Pauli representative by a stabilizer may change
its physical support without changing its encoded action. By contrast, an
additional overall sign is not removed by stabilizer equivalence:
$P_{\mathcal C}(-LS)P_{\mathcal C}
=
-P_{\mathcal C}LP_{\mathcal C}$.
This distinction will matter when a physical signal operator is compared with
the logical sensing generator in a later dwell interval.

We now consider the measurement of a Hermitian Pauli observable
$Q=Q^\dagger$ with $Q^2=I$. Replacing $Q$ by $-Q$ only exchanges the labels
of the two measurement outcomes. We therefore fix a convenient Hermitian
representative for each measured observable. In particular, if a measured
observable is proportional to an already imposed stabilizer constraint, we
choose the representative lying in $\mathcal S$, so that its deterministic
outcome is labeled $+1$.

With this convention, measuring a Hermitian Pauli observable $Q$ on a code
state gives three conceptually distinct cases:
\begin{enumerate}[wide, labelwidth=!, labelindent=0pt, leftmargin=2em, label={(\arabic*)}]
    \item If $Q\in\mathcal S$, then the outcome is deterministically $+1$
    and the encoded state is unchanged.

    \item If $Q\in\mathcal N(\mathcal S)$ but is not proportional to any
    element of $\mathcal S$, then, in the one-logical-qubit setting
    considered here, $Q$ acts as a nontrivial logical Pauli operator on the
    code space. Measuring $Q$ therefore performs a logical measurement on
    the encoded qubit.

    \item If $Q\notin\mathcal N(\mathcal S)$, then $Q$ anticommutes with at
    least one stabilizer generator. As an error operator, it creates a
    nonzero syndrome and is not a logical Pauli operator for the original
    code. As a measurement operator, however, it can be used to deform the
    code: after a suitable change of stabilizer generating set, one
    anticommuting stabilizer constraint is replaced by the
    outcome-dependent measured constraint.
\end{enumerate}

The third case is the one relevant for the code-deformation protocol studied
below. We first rewrite the generators of the original stabilizer group so
that $Q$ anticommutes with exactly one displayed generator. We then replace
that generator by the measured constraint determined by the recorded
outcome. Finally, we update the logical Pauli representatives so that they
represent the same encoded logical axes in the new code-space branch.

Because $Q\notin\mathcal N(\mathcal S)$, we may relabel the original
generators so that $\{Q,P_1\}=0$. Define an equivalent generating set by
\begin{equation}
\tilde P_1
:=
P_1,
\qquad
\tilde P_j
:=
\begin{cases}
P_j,
&
[Q,P_j]=0,
\\[4pt]
P_jP_1,
&
\{Q,P_j\}=0,
\end{cases}
\qquad
j=2,\ldots,m.
\label{eq:app-adapted-generator-definition}
\end{equation}
If $P_j$ anticommutes with $Q$, then $P_jP_1$ commutes with $Q$, since both
factors anticommute with $Q$. Moreover, whenever
$\tilde P_j=P_jP_1$, one has $P_j=\tilde P_j\tilde P_1$. Hence this
replacement is invertible and leaves the stabilizer group unchanged:
\begin{equation}
\mathcal S
=
\left\langle
\tilde P_1,\ldots,\tilde P_m
\right\rangle.
\end{equation}
The adapted generating set satisfies
\begin{equation}
\{Q,\tilde P_1\}=0,
\qquad
[Q,\tilde P_j]=0
\quad
\text{for all } j\ge 2.
\label{eq:app-adapted-generators}
\end{equation}
Thus measuring $Q$ can be viewed as replacing the single displayed
constraint $\tilde P_1$ while retaining the constraints
$\tilde P_2,\ldots,\tilde P_m$.

Let $\mu\in\{+1,-1\}$ denote the recorded outcome of measuring $Q$, and
define
\begin{equation}
\Pi_\mu
:=
\frac{I+\mu Q}{2},
\qquad
p_\mu
:=
\operatorname{Tr}\!\left(\rho_L\Pi_\mu\right),
\qquad
\rho_{L,\mu}'
:=
\frac{\Pi_\mu\rho_L\Pi_\mu}{p_\mu}.
\label{eq:app-measurement-definitions}
\end{equation}
Because $\rho_L$ is supported on the original code space, it is stabilized
by $\tilde P_1$. Using $\{Q,\tilde P_1\}=0$, we obtain
\begin{align}
\operatorname{Tr}(\rho_LQ)
&=
\operatorname{Tr}\!\left(
\tilde P_1\rho_L\tilde P_1Q
\right)
\nonumber\\
&=
\operatorname{Tr}\!\left(
\rho_L\tilde P_1Q\tilde P_1
\right)
\nonumber\\
&=
-\operatorname{Tr}(\rho_LQ).
\end{align}
Therefore,
\begin{equation}
\operatorname{Tr}(\rho_LQ)=0,
\qquad
p_\mu=\frac{1}{2}.
\label{eq:app-equal-probability}
\end{equation}

The outcome $\mu$ specifies the updated stabilizer branch. Let
$P_0:=\prod_{j=2}^{m}(I+\tilde P_j)/2$ denote the projector onto the common
$+1$ eigenspace of the retained stabilizer constraints. The updated
stabilizer group and the corresponding code-space projector are
\begin{equation}
\mathcal S_\mu'
=
\left\langle
\mu Q,\tilde P_2,\ldots,\tilde P_m
\right\rangle,
\qquad
P_{\mathcal C_\mu'}
=
\Pi_\mu P_0.
\label{eq:app-updated-stabilizer}
\end{equation}
The updated generators commute by
Eq.~\eqref{eq:app-adapted-generators}. They remain independent because $\mu Q$ cannot be generated by
$\tilde P_2,\ldots,\tilde P_m$. Indeed, every product of the latter
commutes with $\tilde P_1$, whereas $\mu Q$ anticommutes with
$\tilde P_1$. Since $p_\mu=1/2$, the post-measurement
state is nonzero, so the measured constraints are consistent and
$\mathcal C_\mu'$ is a nonempty code space of the same dimension as
$\mathcal C$.

We now update the logical Pauli representatives across this stabilizer
replacement. For each $\bar L\in\{\bar X,\bar Z\}$, define
\begin{equation}
\epsilon_L
:=
\begin{cases}
0,
&
[Q,\bar L]=0,
\\[4pt]
1,
&
\{Q,\bar L\}=0,
\end{cases}
\qquad
\bar L'
:=
\bar L\tilde P_1^{\epsilon_L}.
\label{eq:app-transported-representatives}
\end{equation}
If $\bar L$ already commutes with $Q$, it is retained unchanged. If
$\bar L$ anticommutes with $Q$, it is multiplied by the removed old
constraint $\tilde P_1$. This multiplication has two simultaneous effects.

First, $\bar L'$ represents the same logical operator as $\bar L$ before
the measurement. Indeed, $\tilde P_1$ is a stabilizer of the original code,
so Eq.~\eqref{eq:app-physical-equivalence} gives
\begin{equation}
P_{\mathcal C}\bar L'P_{\mathcal C}
=
P_{\mathcal C}\bar LP_{\mathcal C},
\qquad
\bar L\in\{\bar X,\bar Z\}.
\label{eq:app-same-old-logical-action}
\end{equation}

Second, $\bar L'$ is compatible with the updated stabilizer constraints.
If $\epsilon_L=1$, then both $\bar L$ and $\tilde P_1$ anticommute with
$Q$, so their product commutes with $Q$; if $\epsilon_L=0$, this commutation
already holds by definition. Hence $[\bar L',\mu Q]=0$. In addition,
$\bar L$ commutes with every original stabilizer generator because it is a
logical representative, and $\tilde P_1$ commutes with every retained
generator because the stabilizer group is Abelian. Therefore,
\begin{equation}
[\bar L',\mu Q]=0,
\qquad
[\bar L',\tilde P_j]=0
\quad
\text{for all } j\ge 2,
\qquad
\bar L\in\{\bar X,\bar Z\}.
\label{eq:app-updated-logical-commutation}
\end{equation}
It follows that $\bar X',\bar Z'\in\mathcal N(\mathcal S_\mu')$.

The updated representatives preserve the logical Pauli anticommutation
relation. Since $\bar X$ and $\bar Z$ commute with $\tilde P_1$, we obtain
\begin{align}
\bar X'\bar Z'
&=
\bar X\tilde P_1^{\epsilon_X}
\bar Z\tilde P_1^{\epsilon_Z}
\nonumber\\
&=
\bar X\bar Z
\tilde P_1^{\epsilon_X+\epsilon_Z}
\nonumber\\
&=
-\bar Z'\bar X'.
\end{align}
Define $\bar Y':=i\bar X'\bar Z'$. Because $\bar X'$ and $\bar Z'$
anticommute, neither can belong to $\mathcal S_\mu'$: every stabilizer
element commutes with every operator in
$\mathcal N(\mathcal S_\mu')$, whereas $\bar X'$ and $\bar Z'$ do not
commute with each other. Therefore, $\bar X',\bar Y',\bar Z'$ form a
nontrivial logical Pauli frame for the updated branch
$\mathcal C_\mu'$.

It remains to show that the encoded state is carried to the updated branch
without any change in its logical Bloch vector. Define
\begin{equation}
R'(\vec{\alpha})
:=
\frac{
I+\alpha_x\bar X'
+\alpha_y\bar Y'
+\alpha_z\bar Z'
}{2}.
\end{equation}
We now record the commutation relations needed in the state-update
calculation and explain their origin.

First, each updated logical representative is obtained by multiplying an
original logical representative by a power of the old stabilizer
$\tilde P_1$. Both the original logical representative and $\tilde P_1$
commute with every generator of the original stabilizer group. Hence
$\bar X'$, $\bar Y'$, and $\bar Z'$ commute with every original stabilizer
generator, and therefore
\begin{equation}
[R'(\vec{\alpha}),P_{\mathcal C}]=0.
\label{eq:app-R-old-projector-commutation}
\end{equation}
This relation states that the updated logical representatives still preserve
the original code space before the measurement.

Second, by construction each updated logical representative commutes with
the measured operator $Q$. Since $\Pi_\mu=(I+\mu Q)/2$, it follows that
\begin{equation}
[R'(\vec{\alpha}),\Pi_\mu]=0.
\label{eq:app-R-measurement-projector-commutation}
\end{equation}
This relation states that the updated logical representatives are compatible
with the measured constraint defining the selected outcome branch.

Third, Eq.~\eqref{eq:app-updated-logical-commutation} shows that the
updated logical representatives commute with every retained stabilizer
generator $\tilde P_j$ for $j\ge 2$. Since $P_0$ is the product of the
corresponding retained-constraint projectors, we have
\begin{equation}
[R'(\vec{\alpha}),P_0]=0.
\label{eq:app-R-retained-projector-commutation}
\end{equation}
Together with
Eq.~\eqref{eq:app-R-measurement-projector-commutation} and
$P_{\mathcal C_\mu'}=\Pi_\mu P_0$, this implies
\begin{equation}
[R'(\vec{\alpha}),P_{\mathcal C_\mu'}]=0.
\label{eq:app-R-new-projector-commutation}
\end{equation}
Thus the updated logical representatives also preserve the new code-space
branch after the measurement.

Because $\bar X'$ and $\bar Z'$ act identically to $\bar X$ and $\bar Z$
on $\mathcal C$, and because the same is then true for
$\bar Y'=i\bar X'\bar Z'$, the initial encoded state in
Eq.~\eqref{eq:app-initial-logical-state} may be written as
$\rho_L=P_{\mathcal C}R'(\vec{\alpha})P_{\mathcal C}$.
Furthermore, since $\Pi_\mu$ commutes with $P_0$ and
$\Pi_\mu\tilde P_1\Pi_\mu=0$, we have
\begin{align}
\Pi_\mu P_{\mathcal C}\Pi_\mu
&=
\Pi_\mu
\frac{I+\tilde P_1}{2}
P_0
\Pi_\mu
\nonumber\\
&=
\frac{1}{2}\Pi_\mu P_0
\nonumber\\
&=
\frac{1}{2}P_{\mathcal C_\mu'}.
\label{eq:app-projector-identity}
\end{align}
Using Eqs.~\eqref{eq:app-R-old-projector-commutation},
\eqref{eq:app-R-measurement-projector-commutation},
\eqref{eq:app-R-new-projector-commutation},
and \eqref{eq:app-projector-identity}, together with $p_\mu=1/2$, we
obtain
\begin{align}
\rho_{L,\mu}'
&=
\frac{
\Pi_\mu
P_{\mathcal C}
R'(\vec{\alpha})
P_{\mathcal C}
\Pi_\mu
}{
p_\mu
}
\nonumber\\
&=
2R'(\vec{\alpha})
\Pi_\mu P_{\mathcal C}\Pi_\mu
\nonumber\\
&=
R'(\vec{\alpha})P_{\mathcal C_\mu'}
\nonumber\\
&=
P_{\mathcal C_\mu'}
R'(\vec{\alpha})
P_{\mathcal C_\mu'}
\nonumber\\
&=
P_{\mathcal C_\mu'}
\frac{
I+\alpha_x\bar X'
+\alpha_y\bar Y'
+\alpha_z\bar Z'
}{2}
P_{\mathcal C_\mu'}.
\label{eq:app-logical-invariance}
\end{align}

Equation~\eqref{eq:app-logical-invariance} is the elementary logical
invariance statement used below. Measuring $Q$ replaces one stabilizer
constraint by the outcome-dependent constraint $\mu Q$ and changes the
instantaneous code space from $\mathcal C$ to $\mathcal C_\mu'$. The logical
Pauli representatives are simultaneously updated from
$(\bar X,\bar Y,\bar Z)$ to $(\bar X',\bar Y',\bar Z')$, while the encoded
Bloch vector $\vec{\alpha}$ remains unchanged.

In the surface-code geometry below, we apply this update rule twice. Only after the final branch has been reached do we make one additional stabilizer-equivalent representative choice by shortening the final logical $X$ loop so that its support visibly surrounds the displaced rough hole.
This final shortening does not enter the logical-invariance proof; it only
provides a geometrically convenient logical representative in the final
code-space branch.

\subsection*{Example: moving one rough hole by one lattice cell}

We now give a concrete local realization of the preceding elementary update
for an $X$-cut hole, namely, a rough hole in the convention used in the main
text. This is the $X\leftrightarrow Z$ counterpart of the standard
single-cell motion of a $Z$-cut hole described in
Ref.~\cite{PhysRevA.86.032324}. For an $X$-cut hole, a logical
$\bar Z$ is represented by a $Z$-string connecting the two rough holes,
whereas its conjugate logical $\bar X$ is represented by an $X$ loop
surrounding one of the holes.

In the algebraic discussion below, the superscript $k=0,1,2$ labels the
logical Pauli representatives before either local measurement, after the
first measurement, and after the second measurement, respectively. The three panels in Fig.~\ref{fig:FIG6} illustrate the local
deformation process that moves the lower rough hole downward by one lattice
cell. They show the initial configuration, the intermediate enlarged-hole
configuration, and the completed displaced-hole configuration, together with
convenient representatives of the logical Pauli operators during the
deformation.

\begin{figure}[t]
    \centering
    \includegraphics[width=0.82\linewidth]{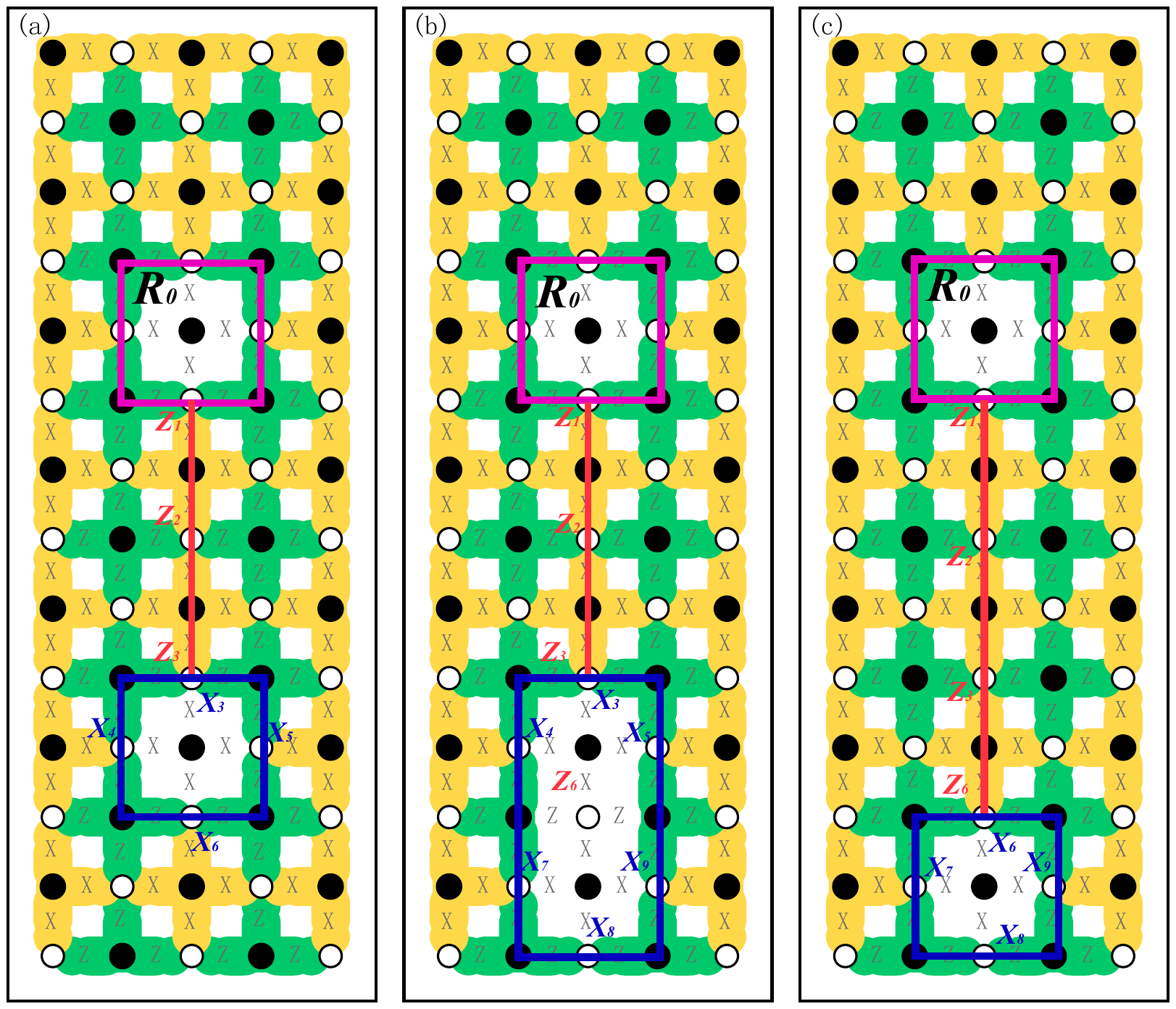}
    \caption{
    Single-cell downward motion of the lower rough ($X$-cut) hole, with the
    upper rough hole $R_0$ fixed. Panel (a) shows the initial code-space
    configuration. The lower hole omits the check
    $A_{\uparrow}=X_3X_4X_5X_6$, whose boundary loop is chosen as
    $\bar X^{(0)}=X_3X_4X_5X_6$, while the connecting logical $Z$-string is
    $\bar Z^{(0)}=Z_1Z_2Z_3$. Panel (b) depicts the transition geometry for
    the two consecutive local updates. First, measuring $Q_1=Z_6$ with
    outcome $\mu$ replaces the active lower check
    $A_{\downarrow}=X_6X_7X_8X_9$ by $\mu Z_6$, thereby enlarging the
    lower hole and updating the logical $X$ representative to
    $\bar X^{(1)}=X_3X_4X_5X_7X_8X_9$, while the algebraic representative
    immediately after this first update remains
    $\bar Z^{(1)}=Z_1Z_2Z_3$. Second, while passing through this enlarged-hole
    transition geometry, measuring
    $Q_2=A_{\uparrow}=X_3X_4X_5X_6$ with outcome $\nu$ replaces
    $\mu Z_6$ by $\nu A_{\uparrow}$ and requires the logical $Z$
    representative to be updated to
    $\bar Z^{(2)}=\mu Z_1Z_2Z_3Z_6$. Panel (c) shows the completed displaced-hole
    configuration, in which the final logical representatives may be chosen
    as $\bar Z^{(2)}=\mu Z_1Z_2Z_3Z_6$ and
    $\bar X^{(2)}=\nu X_6X_7X_8X_9$. The signs $\mu$ and $\nu$ are known
    measurement-record-dependent Pauli-frame signs and are therefore not
    displayed in the operator-support labels of the figure.
    }
    \label{fig:FIG6}
\end{figure}

Consider the initial configuration shown in
Fig.~\ref{fig:FIG6}(a), with code space denoted by
$\mathcal C^{(0)}$. We choose
\begin{equation}
\bar Z^{(0)}
=
Z_1Z_2Z_3,
\qquad
\bar X^{(0)}
=
A_{\uparrow}
=
X_3X_4X_5X_6,
\qquad
A_{\downarrow}
=
X_6X_7X_8X_9.
\label{eq:app-initial-local-representatives}
\end{equation}
Here $A_{\uparrow}$ is the $X$-type check omitted inside the initial rough
hole, whereas $A_{\downarrow}$ is the neighboring active $X$-type
stabilizer immediately below the hole. In the initial configuration, the $X$-type check $A_{\uparrow}$ is switched
off, and its boundary loop is used as the logical $\bar X$ representative.

Moving the lower rough hole downward by one lattice cell replaces the omitted
check $A_{\uparrow}$ by the omitted check $A_{\downarrow}$. The move is
implemented through two elementary measurements. In the local calculation
below, we display only the exchanged stabilizer constraint and the logical
strings relevant to the sensing protocol. Any neighboring stabilizer checks
not displayed in Fig.~\ref{fig:FIG6} may be represented by an
equivalent compatible generating set in each measured branch. This uses only
stabilizer-generator freedom and does not affect the logical-string relations
derived below.

The first measurement enlarges the lower hole. We measure $Q_1=Z_6$ and
take $A_{\downarrow}$ as the old stabilizer constraint replaced in this
elementary update. The relevant commutation relations are
\begin{equation}
\{Q_1,A_{\downarrow}\}=0,
\qquad
\{Q_1,\bar X^{(0)}\}=0,
\qquad
[Q_1,\bar Z^{(0)}]=0.
\label{eq:app-first-measurement-commutation}
\end{equation}
If the recorded outcome is $\mu\in\{+1,-1\}$, then the local stabilizer
constraint is updated as
\begin{equation}
A_{\downarrow}
\longmapsto
\mu Z_6,
\label{eq:app-first-local-update}
\end{equation}
and we denote the resulting intermediate code-space branch by
$\mathcal C_{\mu}^{(1)}$.

Applying Eq.~\eqref{eq:app-transported-representatives} to this first
measurement gives
\begin{equation}
\bar X^{(1)}
:=
A_{\downarrow}\bar X^{(0)}
=
X_3X_4X_5X_7X_8X_9,
\qquad
\bar Z^{(1)}
:=
\bar Z^{(0)}
=
Z_1Z_2Z_3.
\label{eq:app-first-logical-update}
\end{equation}
The logical $X$ representative must be modified because
$\bar X^{(0)}$ anticommutes with $Q_1$, whereas the logical $Z$
representative is unchanged because it already commutes with $Q_1$.
In particular, the first measurement does not require the logical
$Z$-string to acquire the factor $Z_6$.

The enlarged-hole geometry produced by the first measurement is the geometry
shown in panel (b). Since $\mu Z_6$ is already a stabilizer constraint of
$\mathcal C_{\mu}^{(1)}$, the extended string
$\mu Z_1Z_2Z_3Z_6$ would also be a stabilizer-equivalent logical $Z$
representative in this intermediate branch. However, for the purpose of
making the elementary-update logic explicit, we retain
$\bar Z^{(1)}=Z_1Z_2Z_3$ after the first update and introduce the extended
representative only when it becomes required by the second update. This is
why panel (b) may display the extended red-string while the
algebraic representative associated immediately with the first update in
Eq.~\eqref{eq:app-first-logical-update} remains unextended.

The second measurement contracts the upper part of the enlarged hole. We
measure $Q_2=A_{\uparrow}=X_3X_4X_5X_6$ on the intermediate code-space
branch $\mathcal C_{\mu}^{(1)}$. At this stage, the intermediate stabilizer
constraint replaced by the measurement is $\mu Z_6$. Although $Q_2$
coincides as a physical Pauli string with the initial representative
$\bar X^{(0)}$, it is not a logical measurement in the intermediate branch:
it anticommutes with the current stabilizer constraint $\mu Z_6$. The
relations relevant to the logical update are
\begin{equation}
[Q_2,\bar X^{(1)}]=0,
\qquad
\{Q_2,\bar Z^{(1)}\}=0,
\qquad
\{Q_2,\mu Z_6\}=0.
\label{eq:app-second-measurement-commutation}
\end{equation}
The second relation follows because $Q_2$ and $\bar Z^{(1)}$ overlap on
qubit $3$ only. The third relation identifies $\mu Z_6$ as the constraint
replaced in this second elementary update. Thus the anticommutation with
$\bar Z^{(1)}$ does not mean that the logical state is measured; rather, it
means that the logical $Z$ representative must be updated according to the
general rule proved above.

If the recorded outcome of measuring $Q_2$ is
$\nu\in\{+1,-1\}$, then the local stabilizer constraint is updated as
\begin{equation}
\mu Z_6
\longmapsto
\nu A_{\uparrow},
\label{eq:app-second-local-update}
\end{equation}
and we denote the resulting final code-space branch by
$\mathcal C_{\mu,\nu}^{(2)}$. Because $\bar Z^{(1)}$ anticommutes with
$Q_2$, Eq.~\eqref{eq:app-transported-representatives} requires it to be
multiplied by the removed old constraint $\mu Z_6$:
\begin{equation}
\bar Z^{(2)}
:=
\bar Z^{(1)}(\mu Z_6)
=
\mu Z_1Z_2Z_3Z_6.
\label{eq:app-final-z}
\end{equation}
The updated representative $\bar Z^{(2)}$ commutes with the newly imposed
check $\nu A_{\uparrow}$ because the corresponding $Z$-string overlaps with
$A_{\uparrow}$ on qubits $3$ and $6$, yielding an even number of local
$X$--$Z$ anticommutations. Hence the factor $Z_6$ enters the logical
$Z$-string specifically because the second measurement reimposes the upper
$X$-type check.

For the logical $X$ operator, no modification is required by the second
measurement update because $[Q_2,\bar X^{(1)}]=0$.
Consequently, $\bar X^{(1)}$ is already a valid logical $X$ representative
in the final branch. To display the loop surrounding the displaced lower
hole, however, we now make one stabilizer-equivalent representative choice
within $\mathcal C_{\mu,\nu}^{(2)}$. Since $\nu A_{\uparrow}$ is a
stabilizer constraint of this final branch, we choose
\begin{align}
\bar X^{(2)}
&:=
(\nu A_{\uparrow})\bar X^{(1)}
\nonumber\\
&=
\nu
(X_3X_4X_5X_6)
(X_3X_4X_5X_7X_8X_9)
\nonumber\\
&=
\nu X_6X_7X_8X_9.
\label{eq:app-final-x}
\end{align}
With $\bar Y^{(2)}:=i\bar X^{(2)}\bar Z^{(2)}$, the operators
$\bar X^{(2)},\bar Y^{(2)},\bar Z^{(2)}$ provide a geometrically convenient
logical Pauli frame in the final code-space branch. The shortening of the
logical $X$ loop in Eq.~\eqref{eq:app-final-x} changes only its physical
representative, not its encoded logical action.

At the level of the explicitly exchanged stabilizer constraint, the
single-cell motion is summarized by
\begin{equation}
A_{\downarrow}
\overset{Z_6,\mu}{\longmapsto}
\mu Z_6
\overset{A_{\uparrow},\nu}{\longmapsto}
\nu A_{\uparrow}.
\label{eq:app-local-move}
\end{equation}
At the level of the logical representatives, the first measurement updates
the logical $X$ loop while leaving the logical $Z$-string unchanged; the
second measurement necessarily extends the logical $Z$-string to the
displaced hole, after which the logical $X$ loop is shortened using the
newly imposed final stabilizer. By
Eq.~\eqref{eq:app-logical-invariance}, the logical Bloch vector is
preserved through both elementary measurement updates.

The recorded outcomes also determine how a bare physical signal operator in
the final geometry is related to the logical sensing generator. Define
$Z(c_{\rm final}):=Z_1Z_2Z_3Z_6$. From
Eq.~\eqref{eq:app-final-z}, we obtain
\begin{equation}
\bar Z^{(2)}
=
\mu Z(c_{\rm final}),
\qquad
Z(c_{\rm final})
=
\mu\bar Z^{(2)}.
\label{eq:app-frame-sign}
\end{equation}
Thus the same geometrical string operator $Z(c_{\rm final})$ realizes either
$\bar Z^{(2)}$ or $-\bar Z^{(2)}$ in the final logical frame, depending on
the known outcome $\mu$.

Suppose that a subsequent dwell interval is intended to contribute the
logical Hamiltonian term $(\sigma\lambda/2)\bar Z^{(2)}$, where
$\sigma\in\{+1,-1\}$ is the desired logical sensing sign fixed by the
prescribed weighted sum. Since
$Z(c_{\rm final})
=
\mu \bar Z^{(2)}$,
the signed control applied to the bare physical signal operator in this branch
must include the known factor $\mu$. The resulting controlled Hamiltonian is
\begin{equation}
H_{\rm dwell}
=
\frac{\sigma\mu\lambda}{2}
Z(c_{\rm final}).
\end{equation}
Indeed, on the final code space,
\begin{equation}
P_{\mathcal C_{\mu,\nu}^{(2)}}
H_{\rm dwell}
P_{\mathcal C_{\mu,\nu}^{(2)}}
=
\frac{\sigma\lambda}{2}
P_{\mathcal C_{\mu,\nu}^{(2)}}
\bar Z^{(2)}
P_{\mathcal C_{\mu,\nu}^{(2)}}.
\label{eq:app-controlled-dwell-hamiltonian}
\end{equation}
The factor $\mu$ is known once the hole-motion measurement record has been
obtained. Within the signed-control convention used in the main text, its effect on
the subsequent sensing evolution can therefore be compensated
adaptively. In particular, for any edge $e$ in the support of
$c_{\rm final}$,
\begin{equation}
X_e Z(c_{\rm final}) X_e
=
-
Z(c_{\rm final}),
\end{equation}
so the same local-$X$ echo control used to implement a prescribed negative
sensing sign may also absorb the known branch-dependent factor $\mu$.
Accordingly, $\mu$ remains part of the recorded Pauli frame, but its known
effect on the dwell Hamiltonian is absorbed into the control convention and
is not displayed separately in the main text or in Appendix~\ref{app:logical-ramsey-protocol}. The outcome $\nu$
similarly fixes a known Pauli-frame sign for the conjugate logical operator
and is incorporated into subsequent logical control or final readout.

\subsection*{Sequential hole motion and preservation of the accumulated logical phase}

We now apply the elementary-update statement to a full sequential
hole-motion protocol. Let the dwell configurations be indexed by
$r=1,\ldots,L$. Conditioned on all previously recorded local measurement
outcomes, denote the instantaneous code space during the $r$th dwell
interval by $\mathcal C^{(r)}$, and choose logical Pauli representatives
$\bar X^{(r)}$, $\bar Y^{(r)}$, and $\bar Z^{(r)}$ in that branch.

Assume that the deformation from configuration $r$ to configuration $r+1$
is implemented by a finite sequence of elementary stabilizer updates of the
type proved above. At each elementary measurement, the logical Pauli
representatives are updated according to
Eq.~\eqref{eq:app-transported-representatives}. Once the final code-space
branch for a dwell configuration has been reached, a logical representative
may additionally be replaced by a stabilizer-equivalent representative
within that branch when a geometrically simpler $Z$-string or dual $X$-loop is desired,
as in Eq.~\eqref{eq:app-final-x}. Repeated application of
Eq.~\eqref{eq:app-logical-invariance} gives
\begin{align}
\rho_L^{(r)}
&=
P_{\mathcal C^{(r)}}
\frac{
I+\alpha_x\bar X^{(r)}
+\alpha_y\bar Y^{(r)}
+\alpha_z\bar Z^{(r)}
}{2}
P_{\mathcal C^{(r)}}
\nonumber\\
&\longmapsto
P_{\mathcal C^{(r+1)}}
\frac{
I+\alpha_x\bar X^{(r+1)}
+\alpha_y\bar Y^{(r+1)}
+\alpha_z\bar Z^{(r+1)}
}{2}
P_{\mathcal C^{(r+1)}}.
\label{eq:app-sequential-bloch-transport}
\end{align}
Thus a deformation between successive dwell configurations changes the
physical code space and may change the physical supports and known
Pauli-frame signs of the logical representatives, but it does not change the
encoded Bloch vector $(\alpha_x,\alpha_y,\alpha_z)$.

For the Ramsey-type state used in the sensing protocol, the same statement
directly expresses preservation of the accumulated logical phase. If, after a
dwell interval in configuration $r$, the encoded state stores a phase $\phi$
as
\begin{equation}
\rho_{\phi}^{(r)}
=
P_{\mathcal C^{(r)}}
\frac{
I+\cos\phi\,\bar X^{(r)}+\sin\phi\,\bar Y^{(r)}
}{2}
P_{\mathcal C^{(r)}},
\end{equation}
then after the subsequent hole deformation its conditional state is
\begin{equation}
\rho_{\phi}^{(r+1)}
=
P_{\mathcal C^{(r+1)}}
\frac{
I+\cos\phi\,\bar X^{(r+1)}+\sin\phi\,\bar Y^{(r+1)}
}{2}
P_{\mathcal C^{(r+1)}}.
\label{eq:app-sequential-phase-transport}
\end{equation}
Hence the accumulated Ramsey phase $\phi$ is preserved in the updated
logical frame.

This conclusion concerns the conditional code-deformation map generated by
the stabilizer-update measurements. Throughout this work, we assume that
the total duration of all hole-motion and stabilizer-update operations is
negligible compared with the total dwell time over which the sensing phase
is accumulated. We therefore neglect any additional phase contribution from
the unknown sensing Hamiltonian during the deformation operations.

\section{Algebraic Form of the Global Odd-Witness Criterion}
\label{app:chain-cochain-notation-odd-witness}
This appendix gives the algebraic version of the global odd-witness
criterion used in the main text. We keep the notation close to the geometric
language there, but make the chain--cochain formulation explicit.
Let
\begin{equation}
G_{\rm phys}=(V,E,F_{\rm int})
\end{equation}
be the physical planar patch. Its underlying space is homeomorphic to a closed
disk. The edge set $E$ labels data qubits, and $F_{\rm int}$ labels the
physical plaquettes.

For homological analysis, we add the unbounded exterior region as one extra
face $f_\infty$ and set
\begin{equation}
G=(V,E,F),
\qquad
F=F_{\rm int}\cup\{f_\infty\}.
\end{equation}
This operation does not identify opposite boundary points. It does not turn
the planar patch into a torus. It only turns the planar cellulation into a
cellulation of a sphere. Hence
\begin{equation}
H_1(G;\mathbb F_2)=0.
\end{equation}
Unless stated otherwise, all chain and cochain groups below are those of this
extended complex $G$.

All chains and cochains are taken over $\mathbb F_2=\{0,1\}$.
Thus each cell is either absent or present in a chain.
Addition is modulo $2$, so if the same cell appears in two chains, it cancels
in their sum.
We denote this addition by $\oplus$.

We use the cellular chain complex of the extended cellulation $G$.
The chain spaces are
\begin{equation}
C_0(G;\mathbb F_2)\cong \mathbb F_2^V,
\qquad
C_1(G;\mathbb F_2)\cong \mathbb F_2^E,
\qquad
C_2(G;\mathbb F_2)\cong \mathbb F_2^F .
\end{equation}
Here $C_0$ is generated by vertices, $C_1$ by edges, and $C_2$ by faces.
Equivalently, a chain is a $0/1$ vector whose entries indicate which vertices,
edges, or faces are included.
For example, if $E_c\subseteq E$ is a set of edges, then it defines the
$1$-chain
\begin{equation}
c=\bigoplus_{e\in E_c} e
\in C_1(G;\mathbb F_2).
\end{equation}
Conversely, every $1$-chain can be written in this form for a unique subset
$E_c\subseteq E$.

The boundary maps are
\begin{equation}
\partial_1:C_1\to C_0,
\qquad
\partial_2:C_2\to C_1 .
\end{equation}

In matrix form, we write the edge--vertex incidence matrix as
\begin{equation}
D:\mathbb F_2^E\to \mathbb F_2^V,
\qquad
Dc=\partial_1 c,
\end{equation}
and the face--edge incidence matrix as
\begin{equation}
B:\mathbb F_2^F\to \mathbb F_2^E,
\qquad
Bs=\partial_2 s .
\end{equation}
Thus $Dc$ gives the endpoint parity vector of an edge set $c$, and $Bs$
gives the XOR sum of selected plaquette boundaries.

Cochains are linear functionals on chains. Over $\mathbb F_2$, we identify
them with the same vector spaces:
\begin{equation}
C^0(G;\mathbb F_2)\cong \mathbb F_2^V,
\qquad
C^1(G;\mathbb F_2)\cong \mathbb F_2^E,
\qquad
C^2(G;\mathbb F_2)\cong \mathbb F_2^F .
\end{equation}
For example, a $1$-cochain $w\in C^1(G;\mathbb F_2)$ assigns a value
$w_e\in\mathbb F_2$ to each edge $e\in E$. If a $1$-chain is written as
\begin{equation}
c=\bigoplus_{e\in E_c}e,
\end{equation}
where $E_c\subseteq E$, then the pairing between $w$ and $c$ is
\begin{equation}
\langle w,c\rangle
=
\bigoplus_{e\in E_c}w_e .
\end{equation}
Thus $\langle w,c\rangle$ records whether $c$ contains an odd or even number
of edges marked by $w$.
The coboundary operators are the transposes of the boundary matrices:
\begin{equation}
\delta^0=D^\top:C^0\to C^1,
\qquad
\delta^1=B^\top:C^1\to C^2 .
\end{equation}

In the main text, a witness is a simple geometric object. It is represented
by a simple closed dual loop and is used to guide the construction of the
rough holes. This simplicity requirement is part of the main-text construction.

In this appendix, we first introduce a slightly broader algebraic object in
order to express the parity conditions linearly. In this algebraic step, the
symbol $w$ may denote an arbitrary $1$-cochain in $C^1(G;\mathbb F_2)$.
Such a cochain assigns a value in $\mathbb F_2$ to each edge, and it is not
yet required to be a simple closed dual loop.

The relevant algebraic conditions are
$
B^\top w=0
$
and
$
\langle w,c_\alpha\rangle=1
$
for all $\alpha$.
The first condition says that $w$ is closed as a dual $1$-cochain. The second
condition says that $w$ has odd pairing with every signal chain.

This broader use of $w$ is only an intermediate algebraic step. The criterion
below shows that, whenever such a closed algebraic cochain exists, the same
odd-pairing condition can also be realized by a simple witness of the kind used
in the main text.

A $1$-cochain $w\in C^1(G;\mathbb F_2)$ is called closed, or a
$1$-cocycle, if
$
B^\top w=0
$.
Concretely, this means that for every face $f\in F$, the number of boundary
edges $e\in\partial f$ with $w_e=1$ is even. Equivalently,
\begin{equation}
\bigoplus_{e\in\partial f} w_e=0
\qquad
\text{for every } f\in F .
\end{equation}

This is the same as saying that $w$ has no endpoints in the dual picture.
Indeed, a face of $G$ becomes a vertex of the dual graph, and an edge with
$w_e=1$ becomes a selected dual edge crossing $e$. The condition above says
that an even number of selected dual edges meet at every dual vertex. Since
the exterior face $f_\infty$ is included in $F$, this evenness condition is also
imposed at the outer dual vertex.

Thus a closed $1$-cochain gives an even dual subgraph. It need not be one
simple closed loop. It may be a disjoint union of closed dual cycles, or more
generally an even dual subgraph.

On a simply connected patch, every $1$-cocycle is a coboundary. This gives
a binary potential on vertices, or equivalently a global $0/1$ coloring. This
is the basic algebraic tool used below.

\begin{proposition}[Cocycle--potential correspondence]
\label{prop:cocycle-potential-correspondence}
Let $G$ be the extended complex obtained from the physical planar patch by
adding the outer face. Since $G$ is topologically a sphere,
\begin{equation}
H_1(G;\mathbb F_2)=0.
\end{equation}
Let $w\in C^1(G;\mathbb F_2)$ satisfy $B^\top w=0$. Then the following hold.

\noindent(i) There exists a $0$-cochain
$t\in C^0(G;\mathbb F_2)\cong \mathbb F_2^V$ such that
\begin{equation}
w=D^\top t.
\end{equation}
Here a $0$-cochain is a binary function on the vertex set $V$, or equivalently
a $0/1$ coloring of all primal vertices. For every edge $e=(u,v)$, the equation
above means
\begin{equation}
w_e=t(u)\oplus t(v).
\end{equation}
Thus $w_e=1$ exactly when the two endpoints of $e$ have different
$t$-values. Equivalently, $w$ records the edges across which the binary vertex
coloring changes.

\noindent(ii) The potential $t$ is unique up to a global flip. If $t'$ also
satisfies $w=D^\top t'$, then there exists a constant $c\in\{0,1\}$ such that
\begin{equation}
t'(v)=t(v)\oplus c
\qquad
\text{for all }v\in V.
\end{equation}

In other words, every closed $1$-cochain on the extended sphere is the
coboundary of a binary vertex coloring. The coloring is unique up to flipping
all vertex colors.
\end{proposition}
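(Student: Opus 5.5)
The plan is to use the vanishing of cohomology on the sphere: $H^1(G;\mathbb F_2)\cong H_1(G;\mathbb F_2)=0$. Rather than citing this isomorphism abstractly, I would construct the potential explicitly. That keeps the argument self-contained and yields part (ii) essentially for free.

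For (i), first use that $G$ is connected (a disk patch plus one outer face). Fix a base vertex $v_0$ and set $t(v_0)=0$. For any other vertex $v$, choose a primal path $\gamma$ from $v_0$ to $v$ and define
\begin{equation}
t(v):=\langle w,\gamma\rangle .
\end{equation}
The key step is showing that this is independent of the choice of $\gamma$. If $\gamma,\gamma'$ are two such paths, then $\gamma\oplus\gamma'$ is a $1$-cycle, since $D(\gamma\oplus\gamma')=0$. Because $H_1(G;\mathbb F_2)=0$, this cycle is a boundary: $\gamma\oplus\gamma'=Bs$ for some $s\in C_2$. Therefore
\begin{equation}
\langle w,\gamma\oplus\gamma'\rangle=\langle w,Bs\rangle=\langle B^\top w,s\rangle=0 .
\end{equation}
The closedness of $w$ is used exactly here, and the inclusion of $f_\infty$ in $F$ is what makes cycles surrounding the whole patch bounding as well.

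With $t$ well defined, take an edge $e=(u,v)$ and extend a path to $u$ by the edge $e$. This gives $t(v)=t(u)\oplus w_e$, i.e. $w=D^\top t$.

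The fact $H_1(G;\mathbb F_2)=0$ itself I would take as stated: $G$ cellulates $S^2$. If a combinatorial justification is wanted, it follows from Euler's formula $|V|-|E|+|F|=2$ together with $\operatorname{rank}D=|V|-1$ (connectedness) and $\operatorname{rank}B=|F|-1$ (the only face relation is that the sum of all faces has zero boundary). Checking that $\operatorname{rank} B=|F|-1$ is the one genuinely topological step and the main obstacle if one insists on a fully combinatorial proof. The cleanest route there is the Jordan-curve argument: on a sphere, a set of faces with zero boundary must be empty or everything.

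For (ii), suppose $D^\top t=D^\top t'$. Then $D^\top(t\oplus t')=0$, so $t\oplus t'$ takes equal values at the two ends of every edge. By connectedness of $G$, $t\oplus t'$ is constant on $V$, i.e. $t'=t\oplus c$ for some constant $c\in\{0,1\}$. Conversely, a global flip clearly preserves $D^\top t$, since each edge sees both endpoints flipped.
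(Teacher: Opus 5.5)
Your proposal is correct and follows essentially the same route as the paper: define $t$ by summing $w$ along paths from a base vertex, get path independence from $H_1(G;\mathbb F_2)=0$ together with $B^\top w=0$, read off $w_e=t(u)\oplus t(v)$ by extending a path across $e$, and obtain uniqueness up to a global flip from connectedness. Your optional Euler-formula sketch of $H_1(G;\mathbb F_2)=0$ goes beyond the paper, which simply takes this fact from $G$ being a cellulation of $S^2$.
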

\textbf{Proof.}
We construct the vertex potential $t$ by summing $w$ along paths, and then
show that this construction is independent of the chosen path.

First, we show that $w$ pairs trivially with every closed $1$-chain. Let
$C\in C_1(G;\mathbb F_2)$ satisfy $\partial_1 C=0$. Since
$H_1(G;\mathbb F_2)=0$, there exists $S\in C_2(G;\mathbb F_2)$ such that
\begin{equation}
C=\partial_2 S=BS.
\end{equation}
Using the natural pairing
\begin{equation}
\langle w,C\rangle:=\sum_{e\in E}w_e C_e\in\mathbb F_2,
\end{equation}
we compute
\begin{equation}
\langle w,C\rangle
=
\langle w,BS\rangle
=
\langle B^\top w,S\rangle.
\end{equation}
Because $B^\top w=0$, we get
\begin{equation}
\langle w,C\rangle=0.
\end{equation}
Thus the mod-$2$ sum of $w_e$ along any closed $1$-chain is zero.

Next we define $t$. Choose a reference vertex $v_0\in V$ and set
\begin{equation}
t(v_0):=0.
\end{equation}
For any vertex $v\in V$, choose an edge path from $v_0$ to $v$,
\begin{equation}
\pi:v_0\to v_1\to\cdots\to v_k=v,
\end{equation}
with edges $e_i=(v_{i-1},v_i)$. Define
\begin{equation}
t(v):=\bigoplus_{i=1}^k w_{e_i}.
\end{equation}
Thus $t(v)$ is the parity of the edges in the chosen path on which $w_e=1$.

We now show that this does not depend on the chosen path. Let $\pi_1$ and
$\pi_2$ be two edge paths from $v_0$ to $v$. Viewed as $1$-chains over
$\mathbb F_2$, their sum
\begin{equation}
C:=\pi_1\oplus \pi_2
\end{equation}
is a closed $1$-chain, because endpoints cancel in $\mathbb F_2$. By the
closed-chain result above,
\begin{equation}
\langle w,C\rangle=0.
\end{equation}
This is exactly
\begin{equation}
\bigoplus_{e\in\pi_1}w_e
=
\bigoplus_{e\in\pi_2}w_e.
\end{equation}
Therefore $t(v)$ is path independent. So $t$ is a well-defined $0$-cochain on
$V$.

We now prove $w=D^\top t$. Take any edge $e=(u,v)$. Choose a path $\pi_u$
from $v_0$ to $u$, and extend it by $e$ to get a path $\pi_v$ from $v_0$ to
$v$. By definition,
\begin{equation}
t(v)
=
t(u)\oplus w_e.
\end{equation}
Rearranging gives
\begin{equation}
w_e=t(u)\oplus t(v).
\end{equation}
Since this holds for every edge, we obtain $w=D^\top t$. This proves (i).

Finally we prove (ii). Suppose $t$ and $t'$ both satisfy
$w=D^\top t=D^\top t'$. Define $s:=t\oplus t'$. Then
\begin{equation}
D^\top s
=
D^\top t\oplus D^\top t'
=
w\oplus w
=
0.
\end{equation}
So for every edge $(u,v)$,
\begin{equation}
s(u)\oplus s(v)=0.
\end{equation}
Hence $s(u)=s(v)$ on every adjacent pair. Since the $1$-skeleton
of $G$ is connected, $s$ must be constant on all vertices. Thus there exists
$c\in\{0,1\}$ such that
\begin{equation}
t'(v)=t(v)\oplus c
\qquad
\text{for all }v\in V.
\end{equation}
Thus $t'$ differs from $t$ only by a global flip. This proves part (ii).
\qed

We now apply the cocycle--potential correspondence to the prescribed signal
chains. This step only tests the global parity condition for the signal
chains. It does not yet
construct rough holes.

Let $\mathcal I$ be the set of signal labels. For each $\alpha\in\mathcal I$,
let $c_\alpha\in C_1(G;\mathbb F_2)$ be a simple primal path with endpoints
$u_\alpha$ and $v_\alpha$. Thus
\begin{equation}
D c_\alpha=u_\alpha\oplus v_\alpha .
\end{equation}
Here we identify a vertex with the corresponding basis vector in
$C_0(G;\mathbb F_2)$.
We use \textit{chain} when treating $c_{\alpha}$ as an element of $C_{1}\left(G ; \mathbb{F}_{2}\right)$, and \textit{path} when emphasizing that it is simple and connected; both refer to the same object.

Define the endpoint graph $\Gamma_{\rm end}$ as follows. Its vertex set is
the set of all endpoints that appear among the chains. For each signal chain
$c_\alpha$, we add one abstract edge $\eta_\alpha$ between $u_\alpha$ and
$v_\alpha$. This graph is an abstract multigraph. The edge $\eta_\alpha$
records only the two endpoints of $c_\alpha$. It does not record the full
geometric shape of $c_\alpha$.

We also use the following terminology. A \emph{simple odd witness} is a
simple closed dual loop $w_{\rm s}$ in the discretized dual cellulation. We
always take such a loop to be in dual position, meaning that it crosses primal
edges transversely and does not pass through primal vertices. This is only a
generic-position convention used to make the crossing parity unambiguous.
A small perturbation of the loop can always enforce this convention without
changing its mod-$2$ intersection parity with any primal signal chain.

With this convention, we identify $w_{\rm s}$ with the corresponding
$1$-cochain, also denoted by $w_{\rm s}$, whose value on a primal edge $e$
is $1$ exactly when the loop crosses $e$ an odd number of times. Under this identification, the pairing
\begin{equation}
\langle w_{\rm s},c_\alpha\rangle
\end{equation}
is the mod-$2$ number of intersections between the dual loop $w_{\rm s}$
and the primal chain $c_\alpha$. We call $w_{\rm s}$ odd for the signal family
if
\begin{equation}
\langle w_{\rm s},c_\alpha\rangle=1
\qquad
\text{for all }\alpha\in\mathcal I .
\end{equation}
Equivalently, every signal chain crosses $w_{\rm s}$ an odd number of times.
Since $w_{\rm s}$ is a closed dual loop, the associated $1$-cochain is closed,
so
\begin{equation}
B^\top w_{\rm s}=0.
\end{equation}

For the main-text version of Theorem~\ref{thm:global_odd_witness}, this appendix adds one equivalent algebraic condition, formulated as a closed $1$-cochain with odd pairing against every prescribed signal path.
This condition, written explicitly in Eq.~\eqref{eq:alg-odd-witness}, is one
of the equivalent formulations of the simple-witness criterion used in the
main text. Its role is to express the parity constraints in linear algebraic
form, so that they can be checked through the cocycle--potential correspondence
and the induced coloring of the endpoint graph.

\begin{proposition}[Global odd-parity and simple-witness criterion]
\label{prop:global-odd-parity-simple-witness}
The following four statements are equivalent.

\noindent(i) There exists a $1$-cochain
$w_{\rm alg}\in C^1(G;\mathbb F_2)$ such that
\begin{equation}
B^\top w_{\rm alg}=0,
\qquad
\langle w_{\rm alg},c_\alpha\rangle=1
\quad
\text{for all }\alpha\in\mathcal I .
\label{eq:alg-odd-witness}
\end{equation}
We call such a cochain an algebraic odd witness.

\noindent(ii) The endpoint graph $\Gamma_{\rm end}$ is bipartite.

\noindent(iii) There is no subset $\mathcal A\subseteq\mathcal I$ such that
$|\mathcal A|$ is odd and
\begin{equation}
D\left(\bigoplus_{\alpha\in\mathcal A}c_\alpha\right)=0 .
\label{eq:no-odd-closed-dependency}
\end{equation}

\noindent(iv) There exists a simple odd witness $w_{\rm s}$, in the sense
defined above, such that
\begin{equation}
\langle w_{\rm s},c_\alpha\rangle=1
\quad
\text{for all }\alpha\in\mathcal I .
\label{eq:simple-odd-witness}
\end{equation}
\end{proposition}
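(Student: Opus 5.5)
We prove the cycle of implications (i)$\Rightarrow$(ii)$\Rightarrow$(iii)$\Rightarrow$(ii)$\Rightarrow$(iv)$\Rightarrow$(i). The last one is immediate, since a simple closed dual loop in dual position gives a cochain with $B^\top w_{\rm s}=0$, as already noted before the statement.

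\emph{(i)$\Rightarrow$(ii).} Apply Proposition~\ref{prop:cocycle-potential-correspondence} to $w_{\rm alg}$ to get a potential $t$ with $w_{\rm alg}=D^\top t$. For each $\alpha$ this gives
\[
1=\langle D^\top t,c_\alpha\rangle=\langle t,Dc_\alpha\rangle=t(u_\alpha)\oplus t(v_\alpha).
\]
Restricting $t$ to the endpoint vertices therefore 2-colors $\Gamma_{\rm end}$ properly.

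\emph{(ii)$\Leftrightarrow$(iii).} Take a subset $\mathcal A$ with $D(\bigoplus_{\mathcal A}c_\alpha)=\bigoplus_{\mathcal A}(u_\alpha\oplus v_\alpha)=0$. Then the abstract edges $\{\eta_\alpha\}_{\alpha\in\mathcal A}$ form an even-degree subgraph of $\Gamma_{\rm end}$. If $\Gamma_{\rm end}$ is bipartite with coloring $\chi$, summing $\chi(u_\alpha)\oplus\chi(v_\alpha)=1$ over $\mathcal A$ gives $|\mathcal A|\equiv\langle\chi,\bigoplus_{\mathcal A}(u_\alpha\oplus v_\alpha)\rangle=0$, so (iii) holds. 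Conversely, an odd cycle in $\Gamma_{\rm end}$ is an even-degree subgraph with an odd number of edges, so it violates (iii). Note that a loop edge ($u_\alpha=v_\alpha$) cannot occur because each $c_\alpha$ is simple, and parallel edges are harmless.

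\emph{(ii)$\Rightarrow$(iv).} This is the main step. Fix a bipartition $\chi$ of the endpoint set $V_{\rm end}$. The goal is a simple closed dual loop that separates the color-0 endpoints from the color-1 endpoints. By the Jordan curve theorem, each $c_\alpha$ joins the two sides and therefore crosses such a loop an odd number of times, provided the loop is in dual position.

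Any extension $t$ of $\chi$ to all of $V$ gives a closed cochain $w=D^\top t$ with odd pairings, but its support is an even dual subgraph, possibly with several components, not a single loop. I would instead choose the extension so that the set $U=t^{-1}(0)$ is connected, and likewise its complement. Then the dual boundary of $U$ on the sphere $G$ is a single simple cycle, because both sides are connected; this uses planar duality between cuts and cycles.

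To build such a $U$, I would take a spanning tree $T$ of the color-0 endpoints inside the primal graph and grow $U$ as a connected set containing them while avoiding the color-1 endpoints. This can fail, because color-0 vertices may be separated by color-1 ones. The fallback handles this: take any extension $t$, then repeatedly flip the value of $t$ on a whole connected component of $t^{-1}(0)$ or $t^{-1}(1)$ that contains no endpoint. Such a flip leaves every pairing with every $c_\alpha$ unchanged, because $\langle D^\top t,c_\alpha\rangle$ depends only on the endpoint values of $t$.

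Where components genuinely contain endpoints of the same color that cannot be merged, I would instead merge dual cycles pairwise. Two components of $w$ can be joined by a thin dual tube running along a path that avoids all endpoint vertices; the endpoints are finite, so such a path exists after refining. Each crossing of the tube with a primal edge adds two intersections, so the tube changes no parity. Repeating this until a single simple loop remains gives $w_{\rm s}$.

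The main obstacle is making this tube surgery rigorous on the discrete dual lattice, which requires keeping the loop simple and in dual position. I would allow passing to a subdivided dual cellulation, which the generic-position convention already permits.
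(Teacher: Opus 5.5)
Your proof is correct. Your arguments for (i)$\Rightarrow$(ii), (ii)$\Leftrightarrow$(iii) and (iv)$\Rightarrow$(i) are essentially the paper's. Your pairing of the coloring $\chi$ against $D(\bigoplus_{\mathcal A}c_\alpha)$ is a compact form of the paper's count of selected edges from the $V_0$ side. Skipping a direct (ii)$\Rightarrow$(i) is harmless, because your cycle of implications covers it.

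The genuine difference is in (ii)$\Rightarrow$(iv). The paper never goes through the cut $D^\top t$ there. It joins the color-$0$ endpoints by a tree drawn in the continuum $S^2\setminus\mathsf U_1$, not in the primal graph. This means the blocking by color-$1$ endpoints that stopped your first attempt cannot occur. The regular neighborhood of that tree is a disk, so its boundary is a single simple closed curve separating $\mathsf U_0$ from $\mathsf U_1$, with no surgery needed. Your abandoned first idea is therefore exactly the paper's proof, once the tree is allowed to cut across faces and edges.

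Your band-sum route also works, and it proves slightly more. It shows that every algebraic odd witness is itself, edge by edge mod $2$, the crossing cochain of one simple loop, which gives (i)$\Rightarrow$(iv) directly. It does need two steps you left implicit:
\begin{enumerate}
\item The support of $D^\top t$ is an even dual subgraph that can have degree-four dual vertices, for example at a face whose corners alternate in color. Before merging, you must resolve it into disjoint simple closed curves. One way is to take the boundary of a thin regular neighborhood of the subgraph induced by $t^{-1}(0)$, together with half-edge stubs on the cut edges.
\item At each merge, you must choose two curves that bound a common complementary region, so that the tube misses the remaining curves.
\end{enumerate}
If the tube is attached inside face interiors, each edge it crosses is crossed twice. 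The crossing cochain itself is then unchanged, not just the pairings, and your component-flipping step becomes unnecessary.

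Both routes rely on the paper's convention that a simple witness is a continuous loop in dual position, not a simple cycle of the unrefined dual graph. The blocking you noticed is exactly why that convention is needed. When blocking occurs for a forced bipartition, no bond of the primal graph, and hence no simple dual-graph cycle, can separate the two endpoint classes.
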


\textbf{Proof.}
We first prove the equivalence of (i), (ii), and (iii). Then we prove that
(ii) is equivalent to the existence of a simple odd witness (iv).

First, we prove that (i) implies (ii). Assume that $w_{\rm alg}$ exists.
Since $B^\top w_{\rm alg}=0$, Proposition~\ref{prop:cocycle-potential-correspondence} gives a $0$-cochain
$t\in C^0(G;\mathbb F_2)$ such that
\begin{equation}
w_{\rm alg}=D^\top t .
\end{equation}
For each signal chain $c_\alpha$, we have
\begin{equation}
1
=
\langle w_{\rm alg},c_\alpha\rangle
=
\langle D^\top t,c_\alpha\rangle
=
\langle t,Dc_\alpha\rangle
=
t(u_\alpha)\oplus t(v_\alpha).
\end{equation}
Thus the two endpoints of every endpoint edge $\eta_\alpha$ have opposite
colors under $t$. The restriction of $t$ to the endpoint vertices gives a
binary coloring of $\Gamma_{\rm end}$ in which every endpoint edge connects
opposite colors. Hence, $\Gamma_{\rm end}$ is bipartite.

Next, we prove that (ii) implies (i). Assume that $\Gamma_{\rm end}$ is
bipartite. Choose a binary coloring of its vertices, still denoted by $t$,
such that
\begin{equation}
t(u_\alpha)\oplus t(v_\alpha)=1
\qquad
\text{for all }\alpha\in\mathcal I .
\end{equation}
Extend $t$ arbitrarily to all vertices of $G$. Define
\begin{equation}
w_{\rm alg}:=D^\top t .
\end{equation}
Then
\begin{equation}
B^\top w_{\rm alg}
=
B^\top D^\top t
=
(DB)^\top t
=
0,
\end{equation}
because $DB=0$. Also, for each $\alpha$,
\begin{equation}
\langle w_{\rm alg},c_\alpha\rangle
=
\langle D^\top t,c_\alpha\rangle
=
\langle t,Dc_\alpha\rangle
=
t(u_\alpha)\oplus t(v_\alpha)
=
1.
\end{equation}
Thus $w_{\rm alg}$ satisfies statement (i). This proves the
equivalence of (i) and (ii).

We now prove the equivalence of (ii) and (iii). Let
\begin{equation}
\partial_{\rm end}:C_1(\Gamma_{\rm end};\mathbb F_2)
\to
C_0(\Gamma_{\rm end};\mathbb F_2)
\end{equation}
be the boundary map of the endpoint graph, defined on each endpoint edge by
\begin{equation}
\partial_{\rm end}\eta_\alpha
=
u_\alpha\oplus v_\alpha .
\end{equation}
Since $u_\alpha$ and $v_\alpha$ are the endpoints of the physical signal chain
$c_\alpha$, we have
\begin{equation}
\partial_{\rm end}\eta_\alpha
=
D c_\alpha .
\end{equation}
Therefore, for every subset $\mathcal A\subseteq\mathcal I$,
\begin{equation}
\partial_{\rm end}
\left(
\bigoplus_{\alpha\in\mathcal A}\eta_\alpha
\right)
=
D
\left(
\bigoplus_{\alpha\in\mathcal A}c_\alpha
\right).
\label{eq:end-boundary-physical-boundary}
\end{equation}
Thus
\begin{equation}
D
\left(
\bigoplus_{\alpha\in\mathcal A}c_\alpha
\right)
=0
\end{equation}
if and only if the selected endpoint edges
$\{\eta_\alpha:\alpha\in\mathcal A\}$ form a closed $1$-chain in
$\Gamma_{\rm end}$ over $\mathbb F_2$.

We now prove that (ii) implies (iii). Assume that $\Gamma_{\rm end}$ is
bipartite. Let its two vertex classes be $V_0$ and $V_1$. Then every endpoint
edge has one endpoint in $V_0$ and one endpoint in $V_1$.

Suppose that $\mathcal A\subseteq\mathcal I$ satisfies
\begin{equation}
D
\left(
\bigoplus_{\alpha\in\mathcal A}c_\alpha
\right)
=0 .
\end{equation}
By Eq.~\eqref{eq:end-boundary-physical-boundary}, the corresponding selected
endpoint edges form a closed $1$-chain in $\Gamma_{\rm end}$. Equivalently,
every endpoint vertex is incident to an even number of selected endpoint
edges.

Let $V_{\rm end}$ denote the vertex set of $\Gamma_{\rm end}$. For a vertex $x\in V_{\rm end}$, let $\deg_{\mathcal A}(x)$ be the number of
selected endpoint edges incident to $x$. The equation above says that every
vertex has even selected degree:
\begin{equation}
\deg_{\mathcal A}(x)\equiv 0 \pmod 2
\qquad
\text{for every }x\in V_{\rm end}.
\end{equation}

Now count the selected endpoint edges from the $V_0$ side. Since every
selected endpoint edge has exactly one endpoint in $V_0$, we have the integer
equality
\begin{equation}
|\mathcal A|
=
\sum_{x\in V_0}\deg_{\mathcal A}(x).
\end{equation}
Each term on the right-hand side is even. Therefore $|\mathcal A|$ is even.
Hence no odd-cardinality subset $\mathcal A$ can satisfy
Eq.~\eqref{eq:no-odd-closed-dependency}. This proves (iii).

Conversely, we prove that (iii) implies (ii) by contrapositive. Assume
that $\Gamma_{\rm end}$ is not bipartite. Then $\Gamma_{\rm end}$ contains an
odd cycle. Let the endpoint edges on this odd cycle be
\begin{equation}
\eta_{\alpha_1},\eta_{\alpha_2},\ldots,\eta_{\alpha_{2k+1}} .
\end{equation}
Set
\begin{equation}
\mathcal A
=
\{\alpha_1,\alpha_2,\ldots,\alpha_{2k+1}\}.
\end{equation}
Then $|\mathcal A|=2k+1$, so $\mathcal A$ has odd cardinality.

The selected endpoint edges form a cycle. Hence every vertex on this cycle
has selected degree $2$, and every other vertex has selected degree $0$.
Therefore
\begin{equation}
\partial_{\rm end}
\left(
\bigoplus_{\alpha\in\mathcal A}\eta_\alpha
\right)
=0 .
\end{equation}
Using Eq.~\eqref{eq:end-boundary-physical-boundary}, we obtain
\begin{equation}
D
\left(
\bigoplus_{\alpha\in\mathcal A}c_\alpha
\right)
=0 .
\end{equation}
Thus there exists an odd-cardinality subset $\mathcal A$ whose physical
boundary cancels. This violates (iii). Therefore, if (iii) holds,
$\Gamma_{\rm end}$ must be bipartite.

This proves the equivalence of (ii) and (iii).

It remains to relate the bipartition to a simple witness. We first prove that
(ii) implies (iv). Assume that $\Gamma_{\rm end}$ is bipartite. Choose a
bipartition of its endpoint vertices,
\begin{equation}
\mathsf U_0\sqcup \mathsf U_1 .
\end{equation}
Thus each signal chain has one endpoint in $\mathsf U_0$ and one endpoint in
$\mathsf U_1$. Write
\begin{equation}
D c_\alpha
=
u_\alpha^{(0)}\oplus u_\alpha^{(1)},
\qquad
u_\alpha^{(0)}\in\mathsf U_0,
\qquad
u_\alpha^{(1)}\in\mathsf U_1 .
\end{equation}

We work on the sphere obtained by adding the outer face. Choose an embedded
tree $T_0$ on the sphere such that
\begin{equation}
\mathsf U_0\subset T_0,
\qquad
T_0\cap \mathsf U_1=\varnothing .
\end{equation}
This is possible because $\mathsf U_1$ is finite, so the punctured sphere
$S^2\setminus \mathsf U_1$ is path connected. Hence the points in
$\mathsf U_0$ can be connected by embedded arcs that avoid $\mathsf U_1$.
Taking finitely many such arcs gives a connected embedded graph containing
$\mathsf U_0$ and disjoint from $\mathsf U_1$. If this graph contains cycles,
we delete redundant arcs and keep a spanning tree.

Let $N(T_0)$ be a small closed regular neighborhood of $T_0$, chosen so that
\begin{equation}
\mathsf U_0\subset \operatorname{int}N(T_0),
\qquad
\mathsf U_1\cap N(T_0)=\varnothing .
\end{equation}
Since $T_0$ is a tree, $N(T_0)$ is a disk. Hence its boundary
\begin{equation}
\gamma:=\partial N(T_0)
\end{equation}
is a simple closed curve separating $\mathsf U_0$ from $\mathsf U_1$. We choose $\gamma$ in dual position with respect to the primal cellulation.
This means that $\gamma$ crosses primal edges transversely, does not pass
through primal vertices, and does not overlap with any primal edge. This is
only a generic-position convention, and it can be achieved by a small
perturbation without changing which endpoint set lies on which side of
$\gamma$.
After this harmless local deformation, we regard $\gamma$ as a simple closed loop on the dual lattice.
The deformation does not change its mod-$2$ intersection parity with any primal chain.

Let $w_{\rm s}\in C^1(G;\mathbb F_2)$ be the crossing cochain associated with
$\gamma$. Thus $(w_{\rm s})_e=1$ exactly when $\gamma$ crosses the primal
edge $e$ an odd number of times.

Now fix any signal path $c_\alpha$. Its endpoint
$u_\alpha^{(0)}$ lies inside $N(T_0)$, while its endpoint
$u_\alpha^{(1)}$ lies outside $N(T_0)$. Hence the two endpoints lie on
different sides of the simple closed curve $\gamma$. As the path
$c_\alpha$ moves from one side of $\gamma$ to the other, it must cross
$\gamma$ an odd number of times. Therefore
\begin{equation}
\langle w_{\rm s},c_\alpha\rangle=1
\qquad
\text{for all }\alpha\in\mathcal I .
\end{equation}
Thus a simple odd witness exists. This proves (iv).

Finally, (iv) implies (i). Let $w_{\rm s}$ be a simple odd witness, and
identify it with its crossing cochain. Since $w_{\rm s}$ is represented by a
closed curve in dual position, it has no dual endpoints. Equivalently, each
face boundary is crossed an even number of times, including the outer face.
Thus
\begin{equation}
B^\top w_{\rm s}=0.
\end{equation}
Together with
\begin{equation}
\langle w_{\rm s},c_\alpha\rangle=1
\qquad
\text{for all }\alpha\in\mathcal I ,
\end{equation}
this is exactly statement (i), with $w_{\rm alg}=w_{\rm s}$. Since we have already proved that (i) implies (ii), a simple odd witness also
forces the endpoint graph $\Gamma_{\rm end}$ to be bipartite.

All four statements are equivalent.
\qed

The proposition separates two ideas. The algebraic object in statement (i) is
a closed dual parity test. Its support may be a union of closed dual cycles.
It need not be one simple loop.

The planar geometry gives more. Statement (iv) says that, once the endpoint
graph is bipartite, we can construct a simple odd witness. We do not obtain
this simple loop by choosing one connected component of the algebraic
$w_{\rm alg}$. Instead, we use the endpoint bipartition. We connect one
endpoint class by a tree, thicken the tree to a disk, and take the boundary
of that disk. This boundary gives a simple odd witness in the sense defined above. Since every signal
path has endpoints on opposite sides, every signal path crosses this loop
oddly.

The obstruction in the global criterion is not the existence of a closed
$1$-cycle in the union of the signal chains. The obstruction is an odd closed
dependency among the signal chains. Thus an even closed dependency is not itself an obstruction. For example, four chains may satisfy
\begin{equation}
D(c_1\oplus c_2\oplus c_3\oplus c_4)=0,
\end{equation}
and still admit a simple odd witness, because
\begin{equation}
1\oplus 1\oplus 1\oplus 1=0.
\end{equation}
In contrast, three chains satisfying
\begin{equation}
D(c_1\oplus c_2\oplus c_3)=0
\end{equation}
cannot all pair oddly with a closed witness, because
\begin{equation}
1\oplus 1\oplus 1=1.
\end{equation}
This distinction explains the role of Fig.~\ref{fig:caveat} in the main text. The configuration shown in that figure
may satisfy the global odd-witness criterion even though the signal chains
form a closed cycle. Its possible failure is local. It belongs to the clean
rough-hole realization step, not to the global witness criterion.

We now give two simple global situations in which the odd-witness part of
the criterion is guaranteed.

\begin{corollary}[Useful sufficient cases]
\label{cor:useful-sufficient-cases}
The following two sufficient conditions imply that a simple odd witness exists.

\noindent(i) If the endpoint graph $\Gamma_{\rm end}$ is a forest, then a
simple odd witness exists.

\noindent(ii) Assume the signal chains are edge-disjoint. If the physical
union graph
\begin{equation}
\Gamma_{\rm sig}:=\bigcup_{\alpha\in\mathcal I}c_\alpha
\end{equation}
has no cycle, then a simple odd witness exists.
\end{corollary}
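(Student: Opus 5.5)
Both parts reduce to Proposition~\ref{prop:global-odd-parity-simple-witness}. It shows that bipartiteness of $\Gamma_{\rm end}$, i.e.\ statement (ii), is equivalent to the existence of a simple odd witness, statement (iv). So in each case the plan is only to verify that $\Gamma_{\rm end}$ is bipartite, or equivalently that the no-odd-dependency condition (iii) holds. The witness then comes directly from the tree-thickening construction in that proof.

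For part (i), the argument is a single observation. A forest has no cycles at all, so in particular it has no odd cycles, and a graph without odd cycles is bipartite. For an explicit bipartition, root each component of the forest and color vertices by the parity of their depth. Every edge joins consecutive depths, so its endpoints get opposite colors. One small point needs care: $\Gamma_{\rm end}$ is a multigraph. A repeated edge between the same two endpoints is a $2$-cycle, so the ``forest'' hypothesis excludes it. Even if it were allowed, a $2$-cycle is even and would do no harm. Invoking (ii)$\Rightarrow$(iv) of Proposition~\ref{prop:global-odd-parity-simple-witness} then finishes part (i).

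For part (ii), the plan is to verify condition (iii) of the same proposition directly. Let $\mathcal A\subseteq\mathcal I$ satisfy $D\bigl(\bigoplus_{\alpha\in\mathcal A}c_\alpha\bigr)=0$. Because the chains are edge-disjoint, the $\mathbb F_2$ sum has no cancellations. So $C:=\bigoplus_{\alpha\in\mathcal A}c_\alpha$ is exactly the union of the selected paths as edge sets, and it is a subgraph of $\Gamma_{\rm sig}$. The condition $DC=0$ says every vertex has even degree in $C$. A nonempty finite graph with all degrees even contains a cycle, but $\Gamma_{\rm sig}$ is acyclic, so $C=0$. Edge-disjointness again means the selected nonempty paths must all be absent, forcing $\mathcal A=\varnothing$. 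Its cardinality is even, so (iii) holds, and (iii)$\Rightarrow$(iv) gives the simple odd witness.

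The only delicate step is in part (ii): identifying the $\mathbb F_2$ sum with the set-theoretic union so that nonempty paths cannot cancel. This is exactly where edge-disjointness is used. Without it, two paths sharing edges could cancel, and the argument would have to go through $\Gamma_{\rm end}$ instead. I would also check the degenerate case of a path with coinciding endpoints. This cannot occur, since the paths are simple and contain at least one edge, so every $c_\alpha\neq0$ and $Dc_\alpha\neq0$.
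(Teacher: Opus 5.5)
Your proof is correct and follows essentially the same route as the paper. Part (i) is the observation that a forest is bipartite. In part (ii), edge-disjointness rules out $\mathbb F_2$ cancellation, so any closed dependency $D\bigl(\bigoplus_{\alpha\in\mathcal A}c_\alpha\bigr)=0$ with $\mathcal A\neq\varnothing$ would give a nonzero even subgraph, and hence a cycle, in $\Gamma_{\rm sig}$. The only difference is cosmetic: the paper phrases (ii) as a contradiction showing $\Gamma_{\rm end}$ is a forest and then invokes part (i), whereas you verify condition (iii) of Proposition~\ref{prop:global-odd-parity-simple-witness} directly.
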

\textbf{Proof.}
For (i), a forest is bipartite. Hence Proposition~\ref{prop:global-odd-parity-simple-witness} gives a simple odd
witness.

For (ii), it is enough to show that $\Gamma_{\rm end}$ is a forest. Suppose
not. Then $\Gamma_{\rm end}$ contains a cycle with endpoint edges
\begin{equation}
\eta_{\alpha_1},\eta_{\alpha_2},\ldots,\eta_{\alpha_m}.
\end{equation}
Let
\begin{equation}
\mathcal A=\{\alpha_1,\alpha_2,\ldots,\alpha_m\}.
\end{equation}
Since these endpoint edges form a cycle,
\begin{equation}
\partial_{\rm end}
\left(
\bigoplus_{\alpha\in\mathcal A}\eta_\alpha
\right)
=0.
\end{equation}
Using the relation
\begin{equation}
\partial_{\rm end}\eta_\alpha=Dc_\alpha,
\end{equation}
we get
\begin{equation}
D
\left(
\bigoplus_{\alpha\in\mathcal A}c_\alpha
\right)
=0.
\end{equation}
Define
\begin{equation}
z_{\rm sig}:=
\bigoplus_{\alpha\in\mathcal A}c_\alpha .
\end{equation}
Then $Dz_{\rm sig}=0$. Also $z_{\rm sig}\neq 0$, because the selected signal chains have pairwise
disjoint edge supports. Hence no edge appearing in one selected chain can be
cancelled by an edge from another selected chain.

Thus $z_{\rm sig}$ is a nonzero closed $1$-chain in the finite graph
$\Gamma_{\rm sig}$. The support of any nonzero closed $1$-chain in a finite
graph contains an ordinary graph cycle. Hence $\Gamma_{\rm sig}$ contains a
cycle. This contradicts the assumption that $\Gamma_{\rm sig}$ has no cycle.

Therefore $\Gamma_{\rm end}$ is a forest. By (i), a simple odd witness exists.
\qed

The edge-disjointness assumption in
Corollary~\ref{cor:useful-sufficient-cases}(ii) is used only for that
sufficient condition. It prevents mod-$2$ cancellation of shared physical
edges when selected signal chains are summed. It is not part of the global
odd-witness criterion in
Proposition~\ref{prop:global-odd-parity-simple-witness}.

\section{Constructing rough holes from simple witnesses}
\label{app:rough-hole-construction-simple-witnesses}
This appendix gives the local geometric construction that converts a simple
odd witness into clean rough holes. The global odd-witness criterion itself was
proved in Appendix~\ref{app:chain-cochain-notation-odd-witness}; here we
explain the additional local step needed to realize the corresponding punctured
surface code.

By Proposition~\ref{prop:global-odd-parity-simple-witness}, in the globally compatible case there exists a simple
odd witness $w^{(1)}$. Thus $w^{(1)}$ is a simple closed dual loop and
\begin{equation}
\langle w^{(1)},c_\alpha\rangle=1
\qquad
\text{for all }\alpha\in\mathcal I .
\label{eq:first-simple-witness-condition}
\end{equation}

Equivalently, $w^{(1)}$ separates the sphere into two connected sides. Since
$c_\alpha$ crosses $w^{(1)}$ oddly, the two endpoints of $c_\alpha$ lie on
opposite sides of $w^{(1)}$. We choose one side and call it $\Omega_0$. In
practice, we usually choose the side from which a clean rough hole can be
implemented with fewer turned-off star checks. This choice is only an
economical convention and has no additional topological significance.

Throughout Appendices~\ref{app:rough-hole-construction-simple-witnesses} and~\ref{app:relative-homology-fixed-rough-holes}, a witness-guided region $\Omega_i$ or an
actual rough-hole region $R_i$ may be described either by the corresponding
set of star-check vertices or by its disk-like cellular carrier in the planar
complex. We use the same symbol for these two associated descriptions. The
star-check-vertex description is used when discussing which checks are turned
off and how the prescribed signal chains contact a region, whereas the
cellular-carrier description is used when discussing boundaries, quotients,
and relative homology.
Let $U_0$ be the set of endpoints of the signal chains lying in $\Omega_0$:
\begin{equation}
U_0
:=
\{u_\alpha^{(0)}:\alpha\in\mathcal I\}.
\end{equation}
If several signal chains share the same endpoint vertex in $\Omega_0$, that
vertex appears only once in $U_0$; the different chains are still distinguished
by their labels $\alpha$.
For each $\alpha$, the other endpoint of $c_\alpha$ is denoted by
$u_\alpha^{(1)}$. Thus
\begin{equation}
D c_\alpha
=
u_\alpha^{(0)}\oplus u_\alpha^{(1)},
\qquad
u_\alpha^{(0)}\in \Omega_0,
\qquad
u_\alpha^{(1)}\notin \Omega_0 .
\label{eq:first-endpoint-sides}
\end{equation}

The region $\Omega_0$ is only a guide. It is not itself the rough hole. No
star check is turned off merely because it lies in $\Omega_0$.

The actual first rough hole is a simply connected disk-like star-check region
\begin{equation}
R_0\subseteq \Omega_0 .
\end{equation}
As in the main text, opening $R_{0}$ also requires the corresponding removal of interior $Z$-type plaquette checks and truncation of the affected plaquette checks along its boundary; we do not repeat this remark when $ R_{1} $ is opened below.
It must contain the endpoint set $U_0$:
\begin{equation}
U_0\subseteq R_0 .
\end{equation}
It must also satisfy the clean condition. Let $\operatorname{Vert}(c_\alpha)$
denote the set of star-check vertices touched by the path $c_\alpha$. For the
first rough hole, the only allowed contact of $c_\alpha$ with $R_0$ is the
assigned endpoint $u_\alpha^{(0)}$:
\begin{equation}
R_0\cap \operatorname{Vert}(c_\alpha)
\subseteq
\{u_\alpha^{(0)}\}
\qquad
\text{for all }\alpha\in\mathcal I .
\label{eq:R0-clean-condition}
\end{equation}
Equivalently, if $v$ is a non-terminal vertex of any prescribed signal chain,
then the star check at $v$ is not included in $R_0$ and is therefore kept active.
In the simple-path case considered here, a non-terminal vertex is a star-check
vertex where two edges of the same path meet, so the path passes through that
vertex rather than terminating there.

This is a local geometric condition. It is not part of the global
odd-witness criterion. The choice of $R_0$ is not canonical; it is a local
search inside $\Omega_0$ for a simply connected disk-like star-check region that
contains the assigned endpoints and avoids all forbidden non-terminal
vertices.

Equivalently, one may search for $R_0$ in two complementary ways. One may
connect the endpoint vertices in $U_0$ inside $\Omega_0$ by allowed
star-check paths and then thicken the resulting connected set. Or one may
start from all of $\Omega_0$, remove the non-terminal vertices forbidden by
the clean rule, and check whether the remaining allowed region has a
simply connected disk-like component containing $U_0$. If no such choice exists,
then the simple odd witness remains valid, but this particular
witness-guided side fails only at the local clean-realization step.

When $R_0$ is chosen, we turn off exactly the selected $X$-type star checks
in $R_0$. This creates the first rough boundary. After this operation, each
signal chain $c_\alpha$ has one terminal contact with the rough boundary
created by $R_0$, while all its non-terminal star-check vertices remain in the
active code region. In this sense, each $c_\alpha$ terminates on the first
rough boundary.

For the auxiliary quotient construction, we now use the disk-like cellular
carrier associated with the actual rough-hole region $R_0$. Define
\begin{equation}
\widetilde G_0:=G/R_0
\end{equation}
to be the quotient complex obtained by collapsing $R_0$ to one vertex. We
denote the collapsed vertex by $\tilde r_0$. Thus all vertices contained in
$R_0$, including the endpoints $u_\alpha^{(0)}$, are identified with the same
quotient vertex $\tilde r_0$. Let
\begin{equation}
\pi_0:G\to \widetilde G_0
\end{equation}
be the quotient map.

Because $R_0$ is contractible and disk-like, collapsing $R_0$ does not create
a handle. The quotient $\widetilde G_0$ is again a planar complex on the sphere.
In particular,
\begin{equation}
H_1(\widetilde G_0;\mathbb F_2)=0 .
\label{eq:quotient-h1-zero}
\end{equation}

Let
\begin{equation}
\widetilde D_0:C_1(\widetilde G_0;\mathbb F_2)\to C_0(\widetilde G_0;\mathbb F_2),
\qquad
\widetilde B_0:C_2(\widetilde G_0;\mathbb F_2)\to C_1(\widetilde G_0;\mathbb F_2)
\end{equation}
be the incidence matrices of $\widetilde G_0$.

The quotient map $\pi_0:G\to \widetilde G_0$ induces a cellular chain map.
We use the same notation $(\pi_0)_\#$ for the induced maps in all degrees. In
degree one, this gives
\begin{equation}
(\pi_0)_\#:C_1(G;\mathbb F_2)\to C_1(\widetilde G_0;\mathbb F_2).
\end{equation}
On a basis edge $e$, the map sends $e$ to zero if $e$ is contained in the
collapsed subcomplex $R_0$, because such an edge is contracted to a point. If
$e$ is not contained in $R_0$, then $e$ is sent to its image edge in the quotient
complex, with any endpoint in $R_0$ identified with $\tilde r_0$. The map is
extended to an arbitrary $1$-chain by applying it edge by edge and adding the
resulting image edges over $\mathbb F_2$.

Each signal chain $c_\alpha$ induces the quotient chain
\begin{equation}
\tilde c_\alpha:=(\pi_0)_\# c_\alpha
\in C_1(\widetilde G_0;\mathbb F_2).
\end{equation}
By the clean condition for $R_0$, the chain $c_\alpha$ meets $R_0$ only at
its assigned endpoint $u_\alpha^{(0)}$. Hence no non-terminal edge or vertex
of $c_\alpha$ is collapsed, and $\tilde c_\alpha$ is still a simple primal path
after the quotient. The endpoint $u_\alpha^{(0)}\in R_0$ becomes
$\tilde r_0$, while the other endpoint $u_\alpha^{(1)}$ remains outside
$R_0$. We write
\begin{equation}
\tilde u_\alpha:=\pi_0(u_\alpha^{(1)}) .
\end{equation}
Thus
\begin{equation}
\widetilde D_0\tilde c_\alpha
=
\tilde r_0\oplus \tilde u_\alpha .
\label{eq:quotient-chain-boundary}
\end{equation}
Let
\begin{equation}
\widetilde U_1
:=
\{\tilde u_\alpha:\alpha\in\mathcal I\}
\subseteq V(\widetilde G_0).
\end{equation}

We now construct the second witness on $\widetilde G_0$. The quotient chains
$\tilde c_\alpha$ all have one endpoint at $\tilde r_0$ and the other endpoint
in $\widetilde U_1$. Therefore the endpoint multigraph of the quotient chains
is bipartite, with bipartition
\begin{equation}
\{\tilde r_0\}\sqcup \widetilde U_1 .
\end{equation}
Hence Proposition~\ref{prop:global-odd-parity-simple-witness} applies on the quotient complex $\widetilde G_0$.

Thus there exists a simple odd witness $w^{(2)}$ on $\widetilde G_0$ such that
\begin{equation}
\langle w^{(2)},\tilde c_\alpha\rangle=1
\qquad
\text{for all }\alpha\in\mathcal I .
\label{eq:second-simple-witness-condition}
\end{equation}
Since each $\tilde c_\alpha$ crosses $w^{(2)}$ oddly, the vertex $\tilde r_0$
and every vertex in $\widetilde U_1$ lie on opposite sides of $w^{(2)}$.
Let $\widetilde\Omega_1$ be the side of $w^{(2)}$ that contains
$\widetilde U_1$. Then
\begin{equation}
\widetilde U_1\subseteq \widetilde\Omega_1,
\qquad
\tilde r_0\notin \widetilde\Omega_1 .
\label{eq:second-region-quotient}
\end{equation}

Since $w^{(2)}$ is a simple closed dual loop in dual position, it does not
pass through the collapsed primal vertex $\tilde r_0$. The quotient map
$\pi_0$ changes the geometry only inside the disk-like rough-hole region
$R_0$, which is represented in $\widetilde G_0$ by the single vertex
$\tilde r_0$. Therefore, after restoring the collapsed region, the loop
$w^{(2)}$ is naturally identified with a simple closed dual loop in the
original patch that is disjoint from $R_0$. This is the meaning of lifting
$w^{(2)}$ back to the original patch in the geometric construction described
in the main text. We use the notation $w^{(2)}$ for this corresponding
witness loop in either geometry.

The side of $w^{(2)}$ containing the remaining endpoint set can accordingly
be pulled back to the original patch. We define the second
witness-guided star-check region by
\begin{equation}
\Omega_1
:=
\pi_0^{-1}(\widetilde\Omega_1)\cap V(G).
\label{eq:omega1-definition}
\end{equation}
Since $\tilde r_0\notin\widetilde\Omega_1$, the region $\Omega_1$ is disjoint
from $R_0$. Also, by construction,
\begin{equation}
U_1:=\{u_\alpha^{(1)}:\alpha\in\mathcal I\}
\subseteq \Omega_1 .
\end{equation}

Again, $\Omega_1$ is only a witness-guided region. The actual second rough hole is a simply connected disk-like star-check region
\begin{equation}
R_1\subseteq \Omega_1
\end{equation}
such that
\begin{equation}
U_1\subseteq R_1 .
\end{equation}
It must satisfy the same clean condition as $R_0$. For the second rough hole, the only allowed contact of
$c_\alpha$ with $R_1$ is the assigned endpoint $u_\alpha^{(1)}$:
\begin{equation}
R_1\cap \operatorname{Vert}(c_\alpha)
\subseteq
\{u_\alpha^{(1)}\}
\qquad
\text{for all }\alpha\in\mathcal I .
\label{eq:R1-clean-condition}
\end{equation}
Equivalently, if $v$ is a non-terminal vertex of any prescribed signal chain
$c_\alpha$, then the star check at $v$ is not included in $R_1$ and is therefore
kept active. Thus a star-check site may lie in the witness-guided region
$\Omega_1$, but still not belong to the actual rough hole $R_1$.

After $R_1$ is chosen, we turn off exactly the selected $X$-type star checks
in $R_1$. If two disjoint clean simply connected disk-like rough holes
\begin{equation}
R_0\subseteq\Omega_0,
\qquad
R_1\subseteq\Omega_1
\end{equation}
can be chosen in the interior of the physical patch, then every prescribed
signal chain $c_\alpha$ has one terminal contact with each rough hole, while
all of its non-terminal star-check vertices remain in the active code region.
This is the object referred to in the main text as a legal $Z$-type path.
In the appendices, where the two rough holes are fixed and relative homology
is used, we call such a path a legal \textit{bridge} from $R_0$ to $R_1$.

Thus every prescribed $c_\alpha$ is a legal bridge from $R_0$ to
$R_1$. Appendix~\ref{app:relative-homology-fixed-rough-holes} then fixes these
two clean rough holes and proves that all legal bridges between them represent
the same nonzero relative homology class. It follows that all prescribed signal
operators $Z(c_\alpha)$ act as the same nontrivial logical operator $\bar Z$ on
the punctured surface-code space.

\section{Relative homology for fixed clean rough holes}
\label{app:relative-homology-fixed-rough-holes}
This appendix proves the fixed-hole homological statement used in the main
text. The input is a pair of disjoint clean actual rough holes
$R_0$ and $R_1$ in the interior of the planar patch, obtained from the
construction in
Appendix~\ref{app:rough-hole-construction-simple-witnesses}. From this point
on, the witness-guided regions $\Omega_i$ no longer enter the argument. The fixed-hole analysis uses only the terminal-contact condition supplied by
the construction. A legal primal $Z$-chain may contact each $R_i$ only at
its terminal endpoint on that rough boundary, while all non-terminal
star-check vertices remain in the active code region. When such a chain has
one terminal endpoint on $R_0$ and the other on $R_1$, we call it a bridge
from $R_0$ to $R_1$.

We show that every bridge represents the same nonzero relative homology
class. Consequently, all bridge operators act as the same nontrivial logical
operator $\bar Z$ on the punctured-code space. We also include the proof of
Proposition~\ref{prop:clean_puncture_realization} from the main text.

We use the extended planar complex $G=(V,E,F)$ and the face--edge incidence
map
\begin{equation}
B:\mathbb F_2^{F}\to \mathbb F_2^{E}
\end{equation}
introduced in Appendix~\ref{app:chain-cochain-notation-odd-witness}. A
$Z$-string support is a $1$-chain $c\in C_1(G;\mathbb F_2)$, and
$Z(c)$ denotes the product of $Z$ operators on the corresponding edge
qubits. Since $\operatorname{im}B$ is the plaquette-boundary span, two
chains differing by an element of $\operatorname{im}B$ give operators that
act identically on the code space.

For homological analysis, we collapse each actual rough hole $R_i$ to a
single marked vertex. Equivalently, we quotient $G$ by the relation that
identifies all cells in $R_i$, separately for $i=0,1$. We write the
resulting quotient complex as
\begin{equation}
\widehat G:=G/(R_0\cup R_1).
\end{equation}
The two collapsed vertices are denoted by
$\hat r_0$
and
$\hat r_1$.
We also set
\begin{equation}
A:=\{\hat r_0,\hat r_1\}.
\end{equation}

This quotient is only an analysis device. Physically, a signal chain
terminates on the rough boundary of $R_i$. In the quotient complex, all
allowed endpoints on that rough boundary are identified with the single
marked vertex $\hat r_i$. Thus $A$ is the set of allowed endpoints for
legal $Z$-chains.

Let
\begin{equation}
\pi:G\to \widehat G
\end{equation}
be the quotient map. It induces a linear map on $1$-chains,
\begin{equation}
\pi_\#:C_1(G;\mathbb F_2)\to C_1(\widehat G;\mathbb F_2).
\end{equation}
On a basis edge $e$, the map $\pi_\#$ sends $e$ to zero if $e$ is
collapsed to a point, and otherwise sends $e$ to its image edge in
$\widehat G$. The map is then extended linearly over $\mathbb F_2$.
For a physical $Z$-chain $c$, we write
\begin{equation}
\hat c:=\pi_\#(c)\in C_1(\widehat G;\mathbb F_2).
\end{equation}
For the clean chains considered below, no non-terminal part of the chain is
collapsed; only the allowed endpoint contacts are identified with
$\hat r_0$ or $\hat r_1$.

Let
\begin{equation}
\widehat D:C_1(\widehat G;\mathbb F_2)\to C_0(\widehat G;\mathbb F_2)
\end{equation}
be the edge--vertex incidence map of $\widehat G$. Thus
$\widehat D\hat c$ records the endpoints of $\hat c$, counted modulo
$2$.

Let
\begin{equation}
\widehat B:
C_2(\widehat G;\mathbb F_2)
\to
C_1(\widehat G;\mathbb F_2)
\end{equation}
be the face--edge incidence map of the quotient complex $\widehat G$.
Equivalently, $\widehat B$ is the cellular boundary map $\partial_2$ on
$\widehat G$.

A legal $Z$-chain is allowed to have endpoints only on the rough
boundaries. After collapsing each rough hole to a marked vertex, this means
that the boundary of the quotient chain $\hat c$ is supported only on
$A=\{\hat r_0,\hat r_1\}$. Equivalently,
\begin{equation}
\widehat D\hat c\in C_0(A;\mathbb F_2).
\end{equation}
Thus $\hat c$ need not be an ordinary $1$-cycle in $\widehat G$, because its
boundary may be nonzero. However, it is a relative $1$-cycle for the pair
$(\widehat G,A)$, because its boundary becomes zero after quotienting out the
$0$-chains supported on $A$.

Since $A$ consists only of $0$-cells, $C_1(A;\mathbb F_2)=0$. Hence a
relative $1$-chain can be represented directly by an ordinary $1$-chain in
$\widehat G$. The only difference is the cycle condition: ordinary homology
requires $\widehat D\hat c=0$, whereas relative homology only requires
$\widehat D\hat c\in C_0(A;\mathbb F_2)$. This is why relative homology is
the natural language for legal open $Z$-strings.

For a legal bridge $c$ from $R_0$ to $R_1$, in the sense introduced in
Appendix~\ref{app:rough-hole-construction-simple-witnesses}, the quotient
chain $\hat c$ has one endpoint at each marked rough vertex. Hence
\begin{equation}
\widehat D\hat c
=
\hat r_0\oplus \hat r_1 .
\label{eq:bridge-boundary-app}
\end{equation}
This is the boundary condition for the bridge class in
$H_1(\widehat G,A;\mathbb F_2)$.

We now compute the relative group $H_1(\widehat G,A;\mathbb F_2)$. We use
the long exact sequence of the pair $(\widehat G,A)$~\cite{Hatcher2001AlgebraicTopology}. For any pair $A\subset X$, the
relative chain group is defined by
\begin{equation}
C_k(X,A;\mathbb F_2)
:=
C_k(X;\mathbb F_2)/C_k(A;\mathbb F_2).
\end{equation}
This gives a short exact sequence of chain complexes,
\begin{equation}
0
\to
C_*(A;\mathbb F_2)
\to
C_*(X;\mathbb F_2)
\to
C_*(X,A;\mathbb F_2)
\to
0 .
\end{equation}
Taking homology gives the long exact sequence
\begin{equation}
\cdots
\to
H_1(A)
\to
H_1(X)
\to
H_1(X,A)
\xrightarrow{\partial_{\rm rel}}
H_0(A)
\xrightarrow{i_*}
H_0(X)
\to
H_0(X,A)
\to
0 .
\end{equation}
We apply this to
\begin{equation}
X=\widehat G,
\qquad
A=\{\hat r_0,\hat r_1\}.
\end{equation}

The space $A$ is a discrete space with two contractible components. Hence
\begin{equation}
H_1(A;\mathbb F_2)=0.
\end{equation}
Also, by construction, each actual rough hole is a simply connected disk-like
subcomplex in the interior of the disk-like patch. After the outer face is
added, $G$ is a cellulation of $S^2$. Collapsing $R_0$ and $R_1$ to points
only shrinks two contractible regions; it does not create a handle. Thus
$\widehat G$ is homotopy equivalent to $S^2$, and hence
\begin{equation}
H_1(\widehat G;\mathbb F_2)=0.
\end{equation}
Finally, $\widehat G$ is connected and $A$ is nonempty. Therefore the map
\begin{equation}
i_*:H_0(A;\mathbb F_2)\to H_0(\widehat G;\mathbb F_2)
\end{equation}
is onto, and hence
\begin{equation}
H_0(\widehat G,A;\mathbb F_2)=0.
\end{equation}
Thus the relevant part of the long exact sequence reduces to
\begin{equation}
0
\to
H_1(\widehat G,A;\mathbb F_2)
\xrightarrow{\partial_{\rm rel}}
H_0(A;\mathbb F_2)
\xrightarrow{i_*}
H_0(\widehat G;\mathbb F_2)
\to
0 .
\label{eq:les-pair-app}
\end{equation}

We now compute the kernel of $i_*$. Since $A$ has two connected
components,
\begin{equation}
H_0(A;\mathbb F_2)
=
\operatorname{span}_{\mathbb F_2}
\{[\hat r_0],[\hat r_1]\}.
\end{equation}
Here $[\hat r_i]$ denotes the connected component of the point
$\hat r_i$ inside $A$. Since $\widehat G$ is connected,
$H_0(\widehat G;\mathbb F_2)$ has one generator. We denote it by $[\widehat G]$.

The inclusion-induced map $i_*$ sends both points of $A$ to this same
connected component:
\begin{equation}
i_*([\hat r_0])=[\widehat G],
\qquad
i_*([\hat r_1])=[\widehat G].
\end{equation}
Therefore, for $a,b\in\mathbb F_2$,
\begin{equation}
i_*
\left(
a[\hat r_0]\oplus b[\hat r_1]
\right)
=
(a\oplus b)[\widehat G].
\end{equation}
The elements mapped to zero are exactly those with $a=b$. Hence
\begin{equation}
\ker i_*
=
\operatorname{span}_{\mathbb F_2}
\{[\hat r_0]\oplus[\hat r_1]\}
\cong
\mathbb F_2.
\label{eq:kernel-i-star-two-points}
\end{equation}

By exactness of Eq.~\eqref{eq:les-pair-app}, the image of
$\partial_{\rm rel}$ is equal to $\ker i_*$:
\begin{equation}
\operatorname{im}\partial_{\rm rel}
=
\ker i_* .
\end{equation}
The first zero in Eq.~\eqref{eq:les-pair-app} also implies that
$\partial_{\rm rel}$ is injective. Therefore
\begin{equation}
H_1(\widehat G,A;\mathbb F_2)
\cong
\operatorname{im}\partial_{\rm rel}
=
\ker i_* .
\end{equation}
Using Eq.~\eqref{eq:kernel-i-star-two-points}, we obtain
\begin{equation}
H_1(\widehat G,A;\mathbb F_2)
\cong
\mathbb F_2 .
\label{eq:relative-h1-two-holes-app}
\end{equation}

The computation above shows that there is exactly one nonzero relative
$1$-class. The next proposition records the stabilizer interpretation of this
class and proves that it is represented precisely by legal bridges from
$R_0$ to $R_1$.

\begin{proposition}[Unique bridge class for fixed clean rough holes]
\label{prop:fixed-holes-one-bridge-class}
Fix two disjoint clean actual rough holes $R_0$ and $R_1$. Let
$A=\{\hat r_0,\hat r_1\}$ be the two marked vertices in the quotient
complex $\widehat G$.

Let $c$ and $c'$ be legal $Z$-chains for this punctured code. Then their
quotient images $\hat c$ and $\hat c'$ are relative $1$-cycles for the pair
$(\widehat G,A)$.

Moreover, the following two conditions are equivalent:
\begin{equation}
[\hat c]=[\hat c']
\quad
\text{in}
\quad
H_1(\widehat G,A;\mathbb F_2),
\end{equation}
and
\begin{equation}
\hat c\oplus \hat c'
\in
\operatorname{im}\widehat B .
\end{equation}
Here $\operatorname{im}\widehat B$ is the image, in the quotient complex
$\widehat G$, of the plaquette-boundary span of the punctured code. The
quotient map only identifies the cells inside each actual rough hole to the
marked vertex $\hat r_i$ and is identical on the active code region. Therefore,
away from the rough holes, $\widehat B$ is the same plaquette-boundary map as
the original $Z$-stabilizer boundary map. The auxiliary outer face is not an
additional physical stabilizer; its boundary is already the mod-$2$ sum of the
physical plaquette boundaries. Thus membership in $\operatorname{im}\widehat B$
is exactly equivalence by products of active $Z$-type plaquette stabilizers,
with the allowed rough-boundary endpoint identifications already taken into
account.

In particular, if $c$ and $c'$ are both bridges from $R_0$ to $R_1$, then
\begin{equation}
[\hat c]=[\hat c']\neq 0.
\end{equation}
Hence all bridge operators act as the same nontrivial logical operator
$\bar Z$.
\end{proposition}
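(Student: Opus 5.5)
The proof has three parts: the relative-cycle property, the equivalence between equal classes and differing by $\operatorname{im}\widehat B$, and the bridge statement.

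\emph{Relative cycles.} Let $c$ be a legal $Z$-chain. Every non-terminal vertex of $c$ carries an active star check, so it has even incidence in $c$. That incidence is preserved under $\pi_\#$, because the clean condition ensures no non-terminal cell of $c$ is collapsed. The only vertices of odd incidence are terminal contacts on the rough boundaries, and these are sent to $\hat r_0$ or $\hat r_1$. Hence $\widehat D\hat c\in C_0(A;\mathbb F_2)$, and $\hat c$ is a relative $1$-cycle. The same holds for $\hat c'$.

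\emph{Equivalence.} Since $C_1(A;\mathbb F_2)=0$ and $C_2(A;\mathbb F_2)=0$, we have $C_k(\widehat G,A)=C_k(\widehat G)$ for $k=1,2$, and the relative boundary map in degree two is just $\widehat B$. So the relative $1$-boundaries are exactly $\operatorname{im}\widehat B$. Two relative cycles are therefore homologous if and only if their sum lies in $\operatorname{im}\widehat B$, which gives the stated equivalence directly from the definition of $H_1(\widehat G,A;\mathbb F_2)$.

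To give this its stabilizer meaning, I would argue cell by cell:
\begin{itemize}
\item Faces not meeting $R_0\cup R_1$ map to full plaquettes.
\item Faces meeting a hole map, after collapse, to chains whose image equals the image of the truncated plaquette, since edges inside $R_i$ are sent to zero.
\item Faces entirely inside a hole collapse to zero.
\item The outer face's boundary is the sum of all physical plaquettes.
\end{itemize}
Thus $\operatorname{im}\widehat B$ equals $\pi_\#$ of the span of active full and truncated plaquettes, and $\pi_\#$ is injective on edges outside the holes. This step is the bookkeeping I expect to be the main obstacle. One has to check that truncated plaquettes are exactly what the quotient produces, and that relations among chains agreeing off the holes are faithfully detected.

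\emph{Bridges.} If $c$ and $c'$ are bridges, Eq.~\eqref{eq:bridge-boundary-app} gives $\widehat D\hat c=\widehat D\hat c'=\hat r_0\oplus\hat r_1$. In the long exact sequence Eq.~\eqref{eq:les-pair-app}, $\partial_{\rm rel}$ sends both classes to $[\hat r_0]\oplus[\hat r_1]\neq0$. Since $\partial_{\rm rel}$ is injective, $[\hat c]=[\hat c']\neq0$, and consequently $\hat c\oplus\hat c'\in\operatorname{im}\widehat B$.

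Lifting back, $c\oplus c'$ is a sum of active plaquette supports; edges inside holes do not occur, because legal chains avoid them. Hence $Z(c)Z(c')\in\mathcal S_Z^{\rm act}$, and by Eq.~\eqref{eq:app-physical-equivalence} the two operators act identically on $\mathcal C$.

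For nontriviality, I would exhibit an $X$ loop $X(w)$ around $R_0$, which is the logical $\bar X$ and commutes with all active stabilizers. By the parity argument of Proposition~\ref{prop:global-odd-parity-simple-witness}, $\langle w,c\rangle=1$, so $X(w)$ anticommutes with $Z(c)$. Therefore $Z(c)$ is not in the stabilizer group and acts as the nontrivial $\bar Z$.
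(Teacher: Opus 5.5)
Your proposal is correct, and its backbone is the same as the paper's. The equivalence $[\hat c]=[\hat c']\Leftrightarrow\hat c\oplus\hat c'\in\operatorname{im}\widehat B$ comes from the definition of relative homology using $C_1(A;\mathbb F_2)=C_2(A;\mathbb F_2)=0$. The bridge statement comes from the long exact sequence of $(\widehat G,A)$ with $H_1(\widehat G;\mathbb F_2)=0$. Applying injectivity of $\partial_{\rm rel}$ to the two classes, as you do, uses the same fact as the paper, which shows $[\hat c]\neq 0$ via $\widehat D\widehat B=0$ and then invokes $H_1(\widehat G,A;\mathbb F_2)\cong\mathbb F_2$.

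You depart from the paper in two places.

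First, you spell out why $\operatorname{im}\widehat B$ is $\pi_\#$ of the span of the active full and truncated plaquettes, and why $\pi_\#$ is injective on chains supported off the hole carriers. The paper only asserts this inside the statement. Your version makes explicit what licenses lifting $\hat c\oplus\hat c'\in\operatorname{im}\widehat B$ back to $Z(c)Z(c')\in\mathcal S_Z^{\rm act}$. It relies on each truncated plaquette being $\partial f$ minus exactly the edges of the collapsed carrier.

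Second, the paper gets operator-level nontriviality from $[\hat c]\neq 0$: $Z(c)$ is not a product of active plaquettes. This leaves implicit the CSS fact that a pure $Z$-type element of the full stabilizer group must lie in $\mathcal S_Z^{\rm act}$. You instead exhibit an anticommuting normalizer element $X(w)$. That bypasses the implicit step and produces the conjugate logical directly.

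The one verification you should add is why $X(w)$ commutes with the truncated plaquettes. It does, because $w$ can be chosen disjoint from the carrier of $R_0$ (the regular-neighborhood boundary), so it never crosses an edge removed by truncation. Equivalently, take $X(w)=\prod_{v\in R_0}A_v^X$. This is supported on the edges with exactly one endpoint in $R_0$, none of which are truncated, and it meets each bridge only on its terminal edge at $R_0$.

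The paper's route is more economical, since $[\hat c]\neq 0$ is already available from the homology computation. Yours gives a more self-contained operator-level argument.
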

\textbf{Proof.}
The equivalence statement follows directly from the definition of relative homology.
By definition,
\begin{equation}
H_1(\widehat G,A;\mathbb F_2)
=
Z_1(\widehat G,A;\mathbb F_2)
/
B_1(\widehat G,A;\mathbb F_2).
\end{equation}
Since $C_1(A;\mathbb F_2)=0$, a relative $1$-chain can be represented by an
ordinary $1$-chain in $\widehat G$. Under this identification,
\begin{equation}
Z_1(\widehat G,A;\mathbb F_2)
=
\{
\hat x\in C_1(\widehat G;\mathbb F_2):
\widehat D\hat x\in C_0(A;\mathbb F_2)
\}.
\end{equation}
Thus these are precisely the quotient chains whose endpoints are supported
on the marked vertices in $A$. The subgroup
$B_1(\widehat G,A;\mathbb F_2)$ is the group of relative $1$-boundaries.

Thus two quotient chains define the same relative class exactly when their
difference is a relative boundary:
\begin{equation}
[\hat c]=[\hat c']
\quad
\Longleftrightarrow
\quad
\hat c\oplus\hat c'
\in
B_1(\widehat G,A;\mathbb F_2).
\end{equation}
All chain groups are over $\mathbb F_2$, so subtraction is the same as
addition.

Since $A=\{\hat r_0,\hat r_1\}$ consists only of $0$-cells, we have
\begin{equation}
C_1(A;\mathbb F_2)=0,
\qquad
C_2(A;\mathbb F_2)=0.
\end{equation}
Hence
\begin{equation}
C_1(\widehat G,A;\mathbb F_2)
=
C_1(\widehat G;\mathbb F_2),
\qquad
C_2(\widehat G,A;\mathbb F_2)
=
C_2(\widehat G;\mathbb F_2).
\end{equation}
Therefore the relative $1$-boundaries are just the ordinary boundaries of
$2$-chains in $\widehat G$. These are exactly the plaquette boundaries:
\begin{equation}
B_1(\widehat G,A;\mathbb F_2)
=
\operatorname{im}\widehat B .
\end{equation}
Hence
\begin{equation}
[\hat c]=[\hat c']
\quad
\Longleftrightarrow
\quad
\hat c\oplus\hat c'
\in
\operatorname{im}\widehat B .
\end{equation}

This gives the stabilizer statement. If
\begin{equation}
\hat c\oplus\hat c'=\widehat B s ,
\end{equation}
for some $s\in C_2(\widehat G;\mathbb F_2)$, then, in the punctured-code
stabilizer language, $Z(c)$ and $Z(c')$ differ by the product of the
$Z$-type plaquette stabilizers selected by $s$. Hence they act in the same
way on the code space.

Now let $c$ be a bridge from $R_0$ to $R_1$. Then
\begin{equation}
\widehat D\hat c
=
\hat r_0\oplus\hat r_1 .
\end{equation}
So $[\hat c]$ cannot be the zero class. Indeed, if $[\hat c]=0$, then
$\hat c\in\operatorname{im}\widehat B$, and therefore
\begin{equation}
\widehat D\hat c=0,
\end{equation}
because $\widehat D\widehat B=0$. This contradicts
\begin{equation}
\widehat D\hat c
=
\hat r_0\oplus\hat r_1
\neq 0 .
\end{equation}

By Eq.~\eqref{eq:relative-h1-two-holes-app},
\begin{equation}
H_1(\widehat G,A;\mathbb F_2)
\cong
\mathbb F_2 .
\end{equation}
Thus there is only one nonzero relative class. Since every bridge is nonzero,
all bridges represent this same class.

Therefore any two bridge operators are equivalent up to $Z$-type plaquette
stabilizers. They act as the same nontrivial logical operator on the code
space. We denote this common logical operator by $\bar Z$.
This proves the proposition.
\qed

\begin{proposition}[Simple dual witness for fixed holes]
\label{prop:fixed-holes-simple-dual-witness}
Fix the same two disjoint clean actual rough holes $R_0$ and $R_1$.
Then there exists a simple closed dual loop $w$ that separates $R_0$ from
$R_1$. For any such $w$, every bridge $c$ from $R_0$ to $R_1$ satisfies
\begin{equation}
\langle w,c\rangle=1 .
\end{equation}

Moreover, if $\hat w$ is any closed dual $1$-cochain on $\widehat G$ such
that
$
\langle \hat w,\hat c_\star\rangle=1
$
for one bridge $c_\star$, then
$
\langle \hat w,\hat c\rangle=1
$
for every bridge $c$ from $R_0$ to $R_1$.
\end{proposition}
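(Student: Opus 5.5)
The plan has three parts: construct a separating loop, show every bridge crosses it oddly, and then prove the cochain statement on $\widehat G$.

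\emph{Existence of the loop.} We reuse the construction from the proof of Proposition~\ref{prop:global-odd-parity-simple-witness}(ii)$\Rightarrow$(iv). The hole $R_0$ is a simply connected disk-like subcomplex disjoint from $R_1$. Take a small closed regular neighborhood $N(R_0)$ on the sphere obtained by adding the outer face, chosen so that it misses $R_1$. Then $N(R_0)$ is a disk, and $\gamma:=\partial N(R_0)$ is a simple closed curve. After a generic perturbation we put $\gamma$ in dual position, so it becomes a simple closed dual loop $w$ with $R_0$ on one side and $R_1$ on the other. Equivalently, one may apply Proposition~\ref{prop:global-odd-parity-simple-witness} on $\widehat G$, using the bipartition $\{\hat r_0\}\sqcup\{\hat r_1\}$, and then lift the resulting loop as in Appendix~\ref{app:rough-hole-construction-simple-witnesses}. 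The one point to verify is that $w$ can be kept away from the rough-boundary endpoint vertices. This holds because dual position forbids passing through any primal vertex.

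\emph{Odd crossing for every bridge.} Let $w$ be any simple closed dual loop separating $R_0$ from $R_1$, and let $c$ be a bridge. One endpoint of $c$ lies on the rough boundary of $R_0$ and the other on that of $R_1$. Since $w$ separates the two holes and does not meet them, these endpoints lie on opposite sides of $w$. A path joining the two sides of a simple closed curve on the sphere crosses it an odd number of times, which is the same Jordan-curve parity argument used for (ii)$\Rightarrow$(iv). Hence $\langle w,c\rangle=1$. An algebraic version is also available: by Proposition~\ref{prop:cocycle-potential-correspondence}, $w=D^\top t$, where $t$ is the indicator of one side, so $\langle w,c\rangle=\langle t,Dc\rangle=t(u)\oplus t(v)=1$.

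\emph{The cochain statement on $\widehat G$.} This is the cleanest step and uses Proposition~\ref{prop:fixed-holes-one-bridge-class}. For two bridges $c$ and $c_\star$, that proposition gives $\hat c\oplus\hat c_\star=\widehat B s$ for some $s$. Closedness of $\hat w$ means $\widehat B^\top\hat w=0$, so
$\langle\hat w,\hat c\rangle\oplus\langle\hat w,\hat c_\star\rangle=\langle\hat w,\widehat Bs\rangle=\langle\widehat B^\top\hat w,s\rangle=0$.
Therefore $\langle\hat w,\hat c\rangle=\langle\hat w,\hat c_\star\rangle=1$.

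\emph{Main obstacle.} The only delicate step is geometric: making sure the separating loop exists in dual position, avoids both holes, and that pairings computed in $G$ agree with those computed in $\widehat G$. The clean condition handles the last point, since no edge of a bridge is collapsed and so $\langle w,c\rangle=\langle \hat w,\hat c\rangle$ for any loop disjoint from $R_0\cup R_1$.
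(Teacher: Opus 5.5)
Your proposal is correct and follows essentially the same route as the paper. The separating loop is the dual-position boundary of a small regular neighborhood of $R_0$ that misses $R_1$, and the odd crossing follows because the bridge endpoints lie on opposite sides. The cochain statement then follows from $\widehat B^\top\hat w=0$ together with the single nonzero bridge class of Proposition~\ref{prop:fixed-holes-one-bridge-class}. Your extra remarks, the potential form $w=D^\top t$ and the identification $\langle w,c\rangle=\langle\hat w,\hat c\rangle$ via the clean condition, are harmless refinements that the paper does not spell out.
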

\textbf{Proof.}
The existence of a simple separating dual loop follows from the same
planar separation argument used in the proof of Proposition~\ref{prop:global-odd-parity-simple-witness}. Namely,
take a small regular neighborhood of $R_0$ that is disjoint from $R_1$ and
choose its boundary in dual position. This boundary is a simple closed dual
loop separating $R_0$ from $R_1$. Any bridge has one endpoint on each side,
and therefore crosses this loop an odd number of times. Hence
\begin{equation}
\langle w,c\rangle=1 .
\end{equation}

Now let $\hat w$ be any closed dual $1$-cochain. The closed condition is
\begin{equation}
\widehat B^{\top}\hat w=0 .
\end{equation}
If two legal chains differ by a plaquette boundary,
\begin{equation}
\hat c'=\hat c\oplus \widehat B s,
\end{equation}
then
\begin{equation}
\langle \hat w,\hat c'\rangle
=
\langle \hat w,\hat c\rangle
\oplus
\langle \widehat B^{\top}\hat w,s\rangle
=
\langle \hat w,\hat c\rangle .
\end{equation}
Thus the pairing with $\hat w$ depends only on the relative homology class
of the bridge.

By Proposition~\ref{prop:fixed-holes-one-bridge-class}, all bridges from $R_0$ to $R_1$ represent the same
nonzero relative class. Therefore, if $\hat w$ pairs oddly with one quotient bridge $\hat c_\star$,
then it pairs oddly with every quotient bridge $\hat c$.
\qed

We can now prove the clean puncture realization statement used in the main
text.

\textbf{Proof of Proposition~\ref{prop:clean_puncture_realization} in the main text.}

Assume first that the two-stage construction is cleanly openable, and let
$R_0$ and $R_1$ be the resulting clean
rough holes. Let $\widehat G$ be the quotient complex
obtained by collapsing $R_0$ and $R_1$ to two marked vertices $\hat r_0$ and
$\hat r_1$. Set
\begin{equation}
A=\{\hat r_0,\hat r_1\}.
\end{equation}

By construction and by the clean condition, each prescribed signal chain $c_\alpha$ has one
endpoint on the rough boundary of $R_0$ and one endpoint on the rough
boundary of $R_1$. Its interior stays in the active code region. Therefore,
after the quotient, $\hat c_\alpha$ has ordinary boundary
\begin{equation}
\widehat D\hat c_\alpha
=
\hat r_0\oplus \hat r_1
\qquad
\text{for all }\alpha .
\end{equation}
Since this boundary lies in $C_0(A;\mathbb F_2)$, each $\hat c_\alpha$ is a
relative $1$-cycle for the pair $(\widehat G,A)$.

Choose any fixed reference primal path $c_\star$ with one endpoint on the
rough boundary of $R_0$ and the other endpoint on the rough boundary of
$R_1$, whose non-terminal part remains in the active code region.
Then $c_\star$ is a legal bridge from $R_0$ to $R_1$. By
Proposition~\ref{prop:fixed-holes-one-bridge-class}, all bridges from $R_0$ to $R_1$ represent the same
nonzero class in
\begin{equation}
H_1(\widehat G,A;\mathbb F_2).
\end{equation}
Hence
\begin{equation}
[\hat c_\alpha]
=
[\hat c_\star]
\neq 0
\qquad
\text{for all }\alpha .
\end{equation}
Therefore
\begin{equation}
\hat c_\alpha\oplus \hat c_\star
\in
\operatorname{im}\widehat B
\qquad
\text{for all }\alpha .
\end{equation}
Equivalently, for each $\alpha$, there exists a $2$-chain $s_\alpha$ such
that
\begin{equation}
\hat c_\alpha\oplus \hat c_\star
=
\widehat B s_\alpha .
\end{equation}

In the punctured-code stabilizer language, this is exactly the stabilizer
equivalence. The operators $Z(c_\alpha)$ and $Z(c_\star)$ are equivalent up
to the $Z$-type plaquette stabilizers selected by $s_\alpha$. Therefore
$Z(c_\alpha)$ and $Z(c_\star)$ have the same action on the code space:
\begin{equation}
P_{\mathcal C}Z(c_\alpha)P_{\mathcal C}
=
P_{\mathcal C}Z(c_\star)P_{\mathcal C}
\qquad
\text{for all }\alpha .
\end{equation}

Moreover, $[\hat c_\star]\neq 0$. Thus $\hat c_\star$ is not a quotient
plaquette boundary. Hence, in the punctured-code stabilizer language,
$Z(c_\star)$ is not equivalent to a product of $Z$-type plaquette
stabilizers. Since $c_\star$
is a legal bridge, it commutes with all active star checks and preserves the
code space. Therefore it is a nontrivial logical $Z$ operator. We denote its
logical action by $\bar Z$. Thus
\begin{equation}
P_{\mathcal C}Z(c_\alpha)P_{\mathcal C}
=
P_{\mathcal C}Z(c_\star)P_{\mathcal C}
=
P_{\mathcal C}\bar ZP_{\mathcal C}
\qquad
\text{for all }\alpha .
\end{equation}
Thus all prescribed signal operators realize the same nontrivial logical
operator $\bar Z$. This proves the forward direction.

Conversely, suppose that two clean actual rough holes $R_0$ and $R_1$ are
already fixed in the two-hole planar setting considered here. Assume that
each prescribed signal path $c_\alpha$ has one endpoint on the rough boundary
of $R_0$ and the other endpoint on the rough boundary of $R_1$, while its
non-terminal part remains in the active code region. Then each $c_\alpha$ is
a legal bridge from $R_0$ to $R_1$.

By Proposition~\ref{prop:fixed-holes-simple-dual-witness}, we may choose a simple closed dual loop $w$ separating
$R_0$ from $R_1$. Every bridge from $R_0$ to $R_1$ crosses this loop with odd
parity. Hence
\begin{equation}
\langle w,c_\alpha\rangle=1
\qquad
\text{for all }\alpha .
\end{equation}
Thus the global odd-witness condition is necessary for any fixed clean
two-hole realization.

This proves the proposition.
\qed

\section{Multiplicity and nonlocal storage in the Ramsey protocol}
\label{app:logical-ramsey-protocol}

This appendix explains two points used in the Ramsey protocol.
We first show how multiplicity realizes integer and rational weights. We then show why
the accumulated Ramsey phase cannot be accessed by measurements supported on
any correctable region.

We use the fixed-hole relative-homology result proved in
Appendix~\ref{app:relative-homology-fixed-rough-holes}. After two clean rough
holes $R_0$ and $R_1$ have been fixed, every legal signal bridge from $R_0$ to
$R_1$ represents the same nonzero relative homology class and therefore acts
as the same nontrivial logical $\bar Z$ operator on the code space. Thus, for every
signal path used below,
\begin{equation}
P_{\mathcal C}Z(c)P_{\mathcal C}
=
P_{\mathcal C}\bar ZP_{\mathcal C},
\label{eq:app-ramsey-common-logical-assumption}
\end{equation}
where $\mathcal C$ is the code space and $\bar Z$ denotes the common
logical action.

General normalized real weights are realized in the main text by signed activation profiles. Here we record an equivalent multiplicity construction for integer and rational weights. Suppose that, for each unknown coupling $\lambda_\alpha$,
there are $n_\alpha$ available copies with signal paths
\begin{equation}
c_{\alpha,r},
\qquad
r=1,\ldots,n_\alpha .
\end{equation}
The signal paths $c_{\alpha,r}$ may be geometrically different, but each must be a
legal bridge from $R_0$ to $R_1$ in the common nonzero relative homology class.
Therefore
\begin{equation}
P_{\mathcal C}Z(c_{\alpha,r})P_{\mathcal C}
=
P_{\mathcal C}\bar ZP_{\mathcal C}
\qquad
\text{for all }\alpha\in\mathcal I
\text{ and } r=1,\ldots,n_\alpha .
\label{eq:app-ramsey-multiplicity-same-Z}
\end{equation}
With a possible sign reversal implemented by echo control, the multiplicity
Hamiltonian is
\begin{equation}
H_{\rm mult}
=
\frac{1}{2}
\sum_{\alpha\in\mathcal I}
\sigma_\alpha
\sum_{r=1}^{n_\alpha}
\lambda_\alpha Z(c_{\alpha,r}),
\qquad
\sigma_\alpha\in\{\pm1\}.
\label{eq:app-ramsey-multiplicity-H}
\end{equation}
Using Eq.~\eqref{eq:app-ramsey-multiplicity-same-Z}, its restriction to the
code space is
\begin{equation}
P_{\mathcal C}H_{\rm mult}P_{\mathcal C}
=
\frac{1}{2}
\left(
\sum_{\alpha\in\mathcal I}
\sigma_\alpha n_\alpha\lambda_\alpha
\right)
P_{\mathcal C}\bar ZP_{\mathcal C}.
\label{eq:app-ramsey-multiplicity-logical-H}
\end{equation}
Thus multiple physical copies of the same unknown coupling add coherently at
the logical level. In particular, multiplicity realizes integer coefficients
directly.

For rational weights, write
\begin{equation}
s_\alpha
=
\sigma_\alpha\frac{p_\alpha}{q_\alpha},
\qquad
p_\alpha\in\mathbb Z_{\ge 0},
\qquad
q_\alpha\in\mathbb Z_{\ge 1}.
\end{equation}
Choose
\begin{equation}
Q_{\rm lcm}
:=
\operatorname{lcm}\{q_\alpha:\alpha\in\mathcal I\},
\qquad
n_\alpha
:=
Q_{\rm lcm}\frac{p_\alpha}{q_\alpha}
=
Q_{\rm lcm}|s_\alpha|
\in
\mathbb Z_{\ge0}.
\label{eq:app-ramsey-multiplicity-numbers}
\end{equation}
Then
\begin{equation}
\sum_{\alpha\in\mathcal I}
\sigma_\alpha n_\alpha\lambda_\alpha
=
Q_{\rm lcm}
\sum_{\alpha\in\mathcal I}
s_\alpha\lambda_\alpha
=
Q_{\rm lcm}q .
\label{eq:app-ramsey-multiplicity-rescaled-q}
\end{equation}
The factor $Q_{\rm lcm}$ is known, so the final estimate is rescaled by the
same known factor. Hence multiplicity realizes integer weights exactly and
rational weights up to a known normalization.

Finally, we spell out the nonlocal storage property used in the main text.
The Ramsey schedule stores the target parameter in the logical phase
generated by
\begin{equation}
U_{\rm log}
=
\exp\!\left(
-\frac{i t}{2}q\,\bar Z
\right).
\label{eq:app-ramsey-target-U-compact}
\end{equation}
Let $\mathcal E$ be a correctable region of the punctured code. Then every
operator supported on $\mathcal E$ acts trivially on the logical qubit.
Equivalently, for any operator $O_{\mathcal E}$ supported on $\mathcal E$,
\begin{equation}
P_{\mathcal C}O_{\mathcal E}P_{\mathcal C}
=
\gamma(O_{\mathcal E})P_{\mathcal C}
\label{eq:app-ramsey-correctable-region}
\end{equation}
for some scalar $\gamma(O_{\mathcal E})$. Therefore all measurement statistics
obtainable inside $\mathcal E$ are independent of the encoded logical state.
In particular, they are independent of the Ramsey phase in
Eq.~\eqref{eq:app-ramsey-target-U-compact} and reveal no information about the
target parameter $q$. The accumulated phase is therefore carried by the logical
degree of freedom rather than by any correctable region, while the Ramsey
protocol extracts it through a nonlocal logical observable, such as the final
logical $\bar X$ measurement.

\section{Rough-boundary contacts and a diagonal staircase realization of the one-qubit-overlap backbone}
\label{app:local-overlap-constraints}

\subsection*{One-qubit overlaps at a rough boundary}

The requirement that neighboring paths share a terminal data qubit
imposes a stronger geometric condition than merely requiring their
supports to overlap on one edge. To see this, write a three-edge path
as
\[
c=(e_1,e_2,e_3),
\]
where \(e_1\) and \(e_3\) are the terminal edges and \(e_2\) is the
middle edge. The endpoints of \(c\) are the primal vertices at the
outer ends of \(e_1\) and \(e_3\). A legal path between the two rough
holes may meet a rough boundary only at these endpoint vertices; its
middle edge and its non-terminal vertices must remain in the active
code region.

Suppose a data-qubit edge is shared by two paths \(c_a\) and \(c_b\)
at a rough boundary. If this edge is terminal for \(c_a\) but is the
middle edge of \(c_b\), then the boundary vertex that serves as an
endpoint of \(c_a\) is an interior vertex of \(c_b\). Placing that
vertex on the rough boundary would therefore cause \(c_b\) to meet
the boundary through the interior of the path, rather than at one of
its endpoints. The path \(c_b\) would no longer be a legal path
between \(R_0\) and \(R_1\).

Consequently, whenever several paths share a data qubit at a rough
boundary, that data qubit must be terminal for every path involved.
Moreover, the same incident primal vertex on the rough boundary must
serve as an endpoint of all those paths. This boundary-contact rule explains why the shared
contacts in the backbone cannot be arranged as arbitrary overlap
junctions.

Figure~\ref{fig:FIG7} contrasts the two forbidden cases with the allowed configuration.

\begin{figure*}[htbp]
    \centering
    \includegraphics[width=0.68\textwidth]{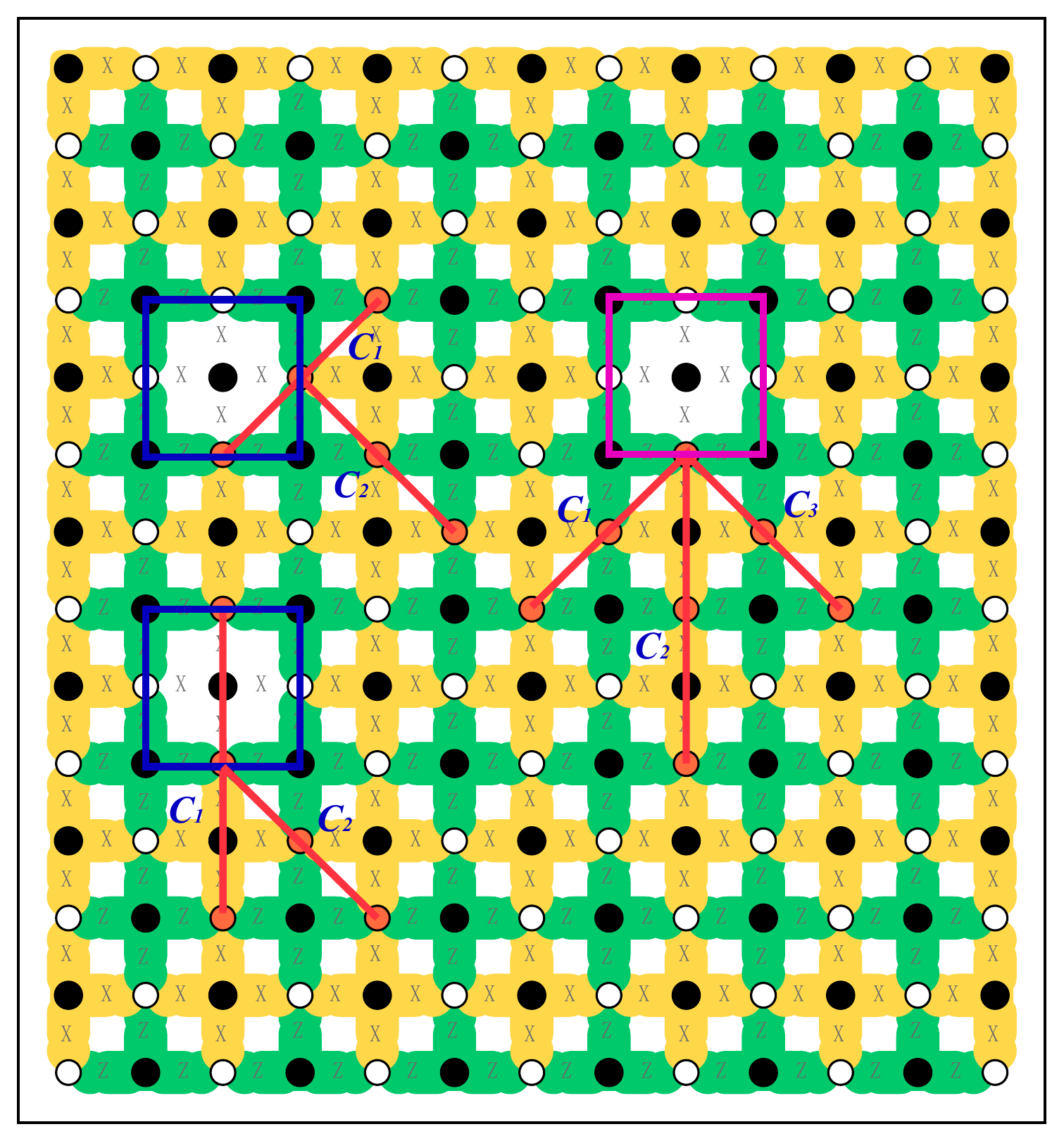}
    \caption{
    \textbf{One-qubit overlaps at a rough boundary.}
    The background notation follows Fig.~\ref{fig:three_qubit_overlap}.
    Each red path labeled $C_i$ represents a placed support $c_i$,
    namely a simple three-edge primal path, with physical signal operator $Z(c_i)$.
    The blue and pink outlines mark local portions of the rough boundaries.
    In each of the two configurations on the left, the shared data qubit is terminal for one path but is the middle data qubit of another.
    The endpoint vertex of the first path would therefore be a non-terminal vertex of the second path.
    Placing this vertex on the rough boundary would make the second path meet the boundary through its interior, so that path would not be legal.
    In the configuration on the right, the shared data qubit is terminal for all three paths $C_1$, $C_2$, and $C_3$.
    The same incident vertex lies on the rough boundary and is an endpoint of all three paths.
    }
    \label{fig:FIG7}
\end{figure*}

\subsection*{Diagonal staircase realization}

The horizontal backbone in Fig.~\ref{fig:three_qubit_overlap}(a) is not the only lattice
realization of the one-qubit-overlap structure. Figure~\ref{fig:FIG8} gives a diagonal staircase realization. The paths approach
the two rough boundaries through a different local geometry, but each
path remains a legal simple three-edge path, and every shared
rough-boundary contact obeys the boundary-contact rule described
above.

For each neighboring pair in the staircase, the product of the two
signal operators is again a product of active \(Z\)-type plaquette
stabilizers. Equivalently, their symmetric difference belongs to
\(\mathcal S_Z^{\mathrm{act}}\). Therefore, once one path in the
staircase is identified with the logical \(\bar Z\), every path
in the staircase has the same logical action. The staircase example
shows that logical compatibility does not depend on the backbone
being horizontally aligned; it depends on the stabilizer-difference
condition and on the legality of the rough-boundary contacts.

The choice of embedding nevertheless affects which additional paths
can be attached. In the horizontal realization of Fig.~\ref{fig:three_qubit_overlap}, the local
geometry near an endpoint leaves enough space for the local
completion shown in Fig.~\ref{fig:three_qubit_overlap}(b). In the diagonal staircase realization,
the backbone itself already occupies the corresponding endpoint
neighborhoods. The same completion therefore cannot be added while
keeping both the staircase paths and the rough boundaries fixed. 
One would have to rearrange part of the backbone or modify the local rough-boundary geometry.
Thus the abstract overlap structure may
admit several logically equivalent lattice realizations, while the
set of permitted local extensions depends on the particular
geometric embedding.

\begin{figure*}[htbp]
    \centering
    \includegraphics[width=0.65\textwidth]{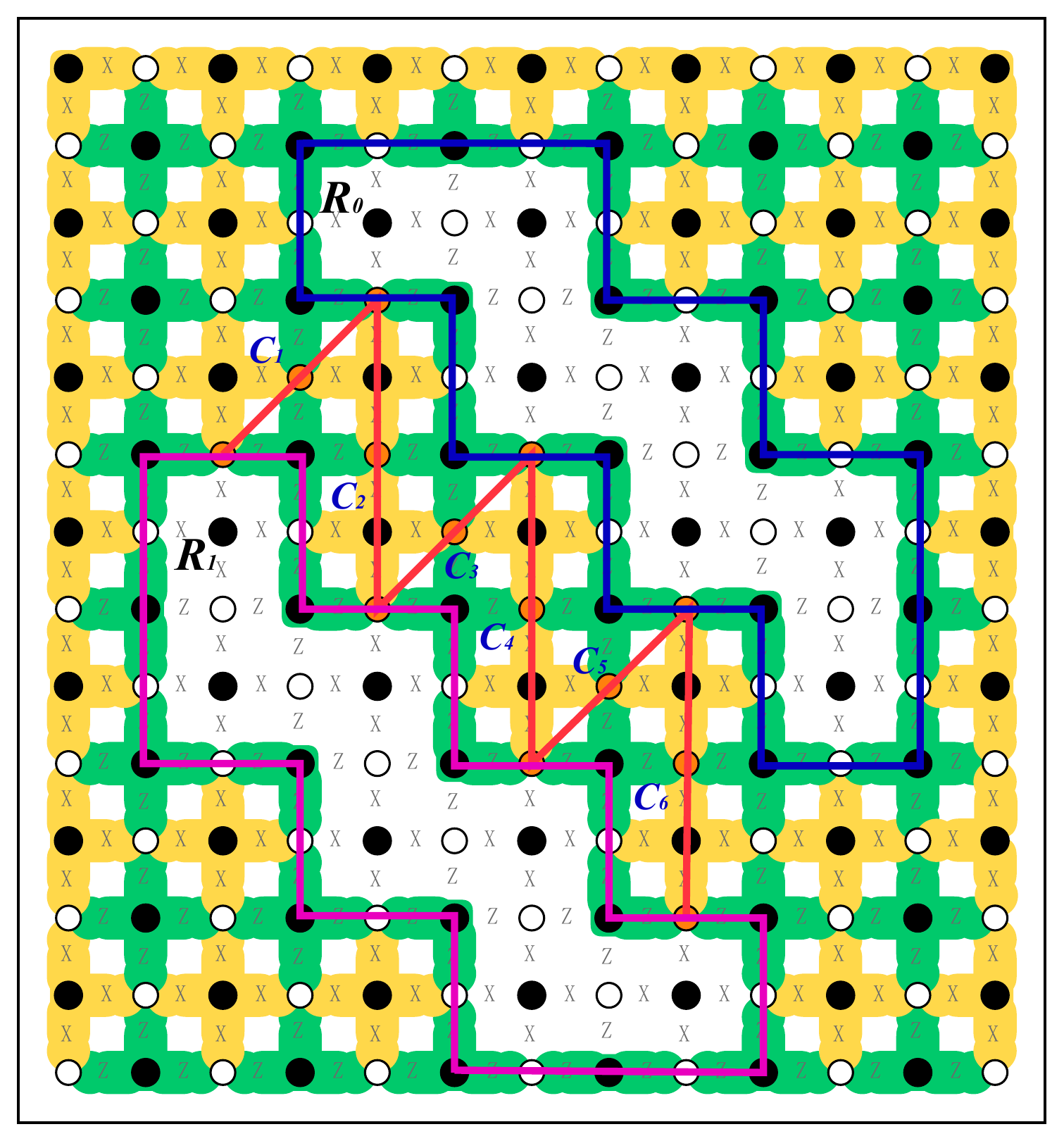}
    \caption{
    \textbf{Diagonal staircase realization of the
    one-qubit-overlap backbone.}
    The background notation follows Fig.~\ref{fig:three_qubit_overlap}.
    The paths $C_1,\ldots,C_6$ are legal paths between the two rough boundaries and together form a single open staircase backbone.
    At each shared rough-boundary contact, the shared data qubit is terminal
    for every path involved, and the same incident vertex is an endpoint of all those paths.
    }
    \label{fig:FIG8}
\end{figure*}

\FloatBarrier
\twocolumngrid
\bibliography{bib_text} 
\end{document}